\numberwithin{equation}{section}
\theoremstyle{plain}\newtheorem{definition}{Definition}[section]
\newtheorem{lem}[definition]{Lemma}
\newtheorem{proposition}[definition]{Proposition}
\newtheorem{cor}[definition]{Corollary}
\theoremstyle{remark}\newtheorem{remark}[definition]{Remark}
\theoremstyle{plain}\newtheorem{claim}{Claim}
\theoremstyle{plain}\newtheorem{assumption}{Assumption}
\theoremstyle{plain}\newtheorem{theorem}{Theorem}
\newcommand{\lemit}[1]{\begin{enumerate}[label={(\alph*)}, ref={\thelem\alph*}]{#1}\end{enumerate}}
\newcommand{\corit}[1]{\begin{enumerate}[label={(\alph*)}, ref={\thecor\alph*}]{#1}\end{enumerate}}
\newcommand{\remit}[1]{\begin{enumerate}[label={(\alph*)}, ref={\theremark\alph*}]\item[]{#1}\end{enumerate}}
\renewcommand{\tilde}[1]{\widetilde{#1}}
\newcommand{\ls}{\lesssim}
\newcommand{\lr}[1]{\left\langle #1 \right\rangle}
\newcommand{\lrt}[1]{\left\langle #1 \right\rangle^{(t)}}
\newcommand{\norm}[1]{\lVert#1\rVert}
\newcommand{\onorm}[1]{\lVert#1\rVert_\mathrm{op}}
\newcommand{\D}{\mathcal{D}}
\newcommand{\R}{\mathbb{R}}
\newcommand{\N}{\mathbb{N}}
\newcommand{\fH}{\mathfrak{H}}
\newcommand{\Fock}{\mathcal{F}}
\newcommand{\Number}{\mathcal{N}}
\newcommand{\id}{\mathbbm{1}}
\newcommand{\cJ}{\mathcal{J}}
\newcommand{\cS}{\mathcal{S}}
\newcommand{\fV}{\mathfrak{V}}
\newcommand{\cU}{\mathcal{U}}
\newcommand{\cQ}{\mathcal{Q}}
\newcommand\mydots{,\makebox[1em][c]{.\hfil.\hfil.},}
\newcommand\mycdots{\makebox[1em][c]{$\cdot$\hfil$\cdot$\hfil$\cdot$}}
\newcommand{\Tr}{\mathrm{Tr}}
\renewcommand{\d}{\mathop{}\!\mathrm{d}}
\newcommand{\dx}{\d x}
\newcommand{\dy}{\d y}
\newcommand{\dz}{\d z}
\newcommand{\dt}{\d t}
\newcommand{\ds}{\d s}
\let\textl\l
\renewcommand{\l}{\ell}
\renewcommand{\i}{\mathrm{i}}
\newcommand{\e}{\mathrm{e}}
\newcommand{\hc}{\mathrm{h.c.}}
\newcommand{\sym}{\mathrm{sym}}
\newcommand{\trap}{\mathrm{trap}}
\newcommand{\bj}{\boldsymbol{j}}
\newcommand{\bm}{\boldsymbol{m}}
\newcommand{\HN}{H^N}
\newcommand{\PsiN}{\Psi^N}
\newcommand{\PsiNl}{\PsiN_\l}
\newcommand{\gPsiNo}{\gamma_{{\PsiN}}^{(1)}}
\newcommand{\gPsiNk}{\gamma_{{\PsiN}}^{(k)}}
\newcommand{\pt}{{\varphi(t)}}
\newcommand{\ps}{{\varphi(s)}}
\newcommand{\pz}{{\varphi(0)}}
\newcommand{\hpt}{h^{\pt}}
\newcommand{\mpt}{\mu^{\pt}}
\newcommand{\FH}{{\Fock(\fH)}}
\newcommand{\Fp}{\Fock_\perp}
\newcommand{\FN}{\Fock^{\leq N}}
\newcommand{\FNp}{\Fp^{\leq N}}
\newcommand{\Fpz}{\Fock_{\perp\pz}}
\newcommand{\Fpt}{{\Fock_{\perp\pt}}}
\newcommand{\Fps}{{\Fock_{\perp\ps}}}
\newcommand{\FNpt}{{\Fock^{\leq N}_{\perp\pt}}}
\newcommand{\FNps}{{\Fock^{\leq N}_{\perp\ps}}}
\newcommand{\Npt}{\Number_{\perp\pt}}
\newcommand{\Qpt}{\mathcal{Q}_{\perp\pt}}
\newcommand{\Qps}{\mathcal{Q}_{\perp\ps}}
\newcommand{\ad}{a^\dagger}
\newcommand{\Ad}{A^\dagger}
\newcommand{\UNpt}{\mathfrak{U}_{N,\pt}}
\newcommand{\UNpz}{\mathfrak{U}_{N,\pz}}
\newcommand{\Chi}{{\boldsymbol{\chi}}}
\newcommand{\ChiN}{\Chi^{\leq N}}
\newcommand{\Chil}{\Chi_\l}
\newcommand{\Chim}{\Chi_m}
\newcommand{\Chio}{\Chi_0}
\newcommand{\Chit}{\Chi_1}
\newcommand{\bPhi}{{\boldsymbol{\phi}}}
\newcommand{\chik}{\chi^{(k)}}
\newcommand{\FockH}{\mathbb{H}}
\newcommand{\FockHN}{\FockH^{\leq N}}
\newcommand{\FockHNpt}{\FockHN_{\pt}}
\newcommand{\FockHpt}{\FockH_{\pt}}
\newcommand{\FockHps}{\FockH_{\ps}}
\newcommand{\FockHopt}{\FockHpt^{(0)}}
\newcommand{\FockHops}{\FockHps^{(0)}}
\newcommand{\FockHtpt}{\FockHpt^{(1)}}
\newcommand{\FockHthpt}{\FockHpt^{(2)}}
\newcommand{\FockHnpt}{\FockHpt^{(n)}}
\newcommand{\FockHnps}{\FockHps^{(n)}}
\newcommand{\qpt}{q^{\pt}}
\newcommand{\Kopt}{K^{(1)}_{\pt}}
\newcommand{\Kops}{K^{(1)}_{\ps}}
\newcommand{\Ktpt}{K^{(2)}_{\pt}}
\newcommand{\Ktps}{K^{(2)}_{\ps}}
\newcommand{\Kthpt}{K^{(3)}_{\pt}}
\newcommand{\Kthps}{K^{(3)}_{\ps}}
\newcommand{\Kfpt}{K^{(4)}_{\pt}}
\newcommand{\Kfps}{K^{(4)}_{\ps}}
\newcommand{\tKopt}{\tilde{K}^{(1)}_{\pt}}
\newcommand{\tKtpt}{\tilde{K}^{(2)}_{\pt}}
\newcommand{\Wpt}{W_{\pt}}
\newcommand{\boldKzpt}{\mathbb{K}^{(0)}_\pt}
\newcommand{\boldKopt}{\mathbb{K}^{(1)}_\pt}
\newcommand{\boldKtpt}{\mathbb{K}^{(2)}_\pt}
\newcommand{\boldKthpt}{\mathbb{K}^{(3)}_\pt}
\newcommand{\boldKfpt}{\mathbb{K}^{(4)}_\pt}
\newcommand{\boldKtptbar}{\overline{\mathbb{K}^{(2)}_\pt}}
\newcommand{\boldKthptbar}{\overline{\mathbb{K}^{(3)}_\pt}}
\newcommand{\boldKops}{\mathbb{K}^{(1)}_\ps}
\newcommand{\Uo}{U^{(0)}_\varphi}
\newcommand{\BogV}{\mathcal{V}}
\newcommand{\BogU}{\,\mathcal{U}_\BogV}
\newcommand{\BogUz}{\,\mathcal{U}_{\BogV_0}}
\newcommand{\BogUts}{\,\mathcal{U}_{\BogV(t,s)}}
\newcommand{\BogUtz}{\,\mathcal{U}_{\BogV(t,0)}}
\newcommand{\BogUt}{\,\mathcal{U}_{\BogV_t}}
\newcommand{\gbar}{\overline{g}}
\newcommand{\Ubar}{\overline{U}}
\newcommand{\Vbar}{\overline{V}}
\newcommand{\omzz}{\omega^{(-1,-1)}}
\newcommand{\omzo}{\omega^{(-1,1)}}
\newcommand{\omoz}{\omega^{(1,-1)}}
\newcommand{\omoo}{\omega^{(1,1)}}
\newcommand{\omlj}{\omega^{(\l,j)}}
\newcommand{\omzjo}{\omega^{(-1,j_1)}}
\newcommand{\omzjt}{\omega^{(-1,j_2)}}
\newcommand{\omzjth}{\omega^{(-1,j_3)}}
\newcommand{\omzjf}{\omega^{(-1,j_4)}}
\newcommand{\omojo}{\omega^{(1,j_1)}}
\newcommand{\omojt}{\omega^{(1,j_2)}}
\newcommand{\omojth}{\omega^{(1,j_3)}}
\newcommand{\asj}{a^{\sharp_j}}
\newcommand{\asl}{a^{\sharp_\l}}
\newcommand{\asjo}{a^{\sharp_{j_1}}}
\newcommand{\asjt}{a^{\sharp_{j_2}}}
\newcommand{\gChiot}{\gamma_{\Chio(t)}}
\newcommand{\gChioz}{\gamma_{\Chio(0)}}
\newcommand{\aChiot}{\alpha_{\Chio(t)}}
\newcommand{\aChioz}{\alpha_{\Chio(0)}}
\newcommand{\gChiNt}{\gamma_{\ChiN(t)}}
\newcommand{\bChiNt}{\beta_{\ChiN(t)}}
\newcommand{\bzo}{\beta_{0,1}}
\newcommand{\alnmm}{\mathfrak{a}^{(\l)}_{n,m,\mu}}
\newcommand{\xn}{x^{(n)}}
\newcommand{\xk}{x^{(k)}}
\newcommand{\lN}{\lambda_N}
\DeclareFontFamily{OMX}{MnSymbolE}{}
\DeclareSymbolFont{MnLargeSymbols}{OMX}{MnSymbolE}{m}{n}
\DeclareFontShape{OMX}{MnSymbolE}{m}{n}{
    <-6>  MnSymbolE5
   <6-7>  MnSymbolE6
   <7-8>  MnSymbolE7
   <8-9>  MnSymbolE8
   <9-10> MnSymbolE9
  <10-12> MnSymbolE10
  <12->   MnSymbolE12
}{}
\DeclareFontShape{OMX}{MnSymbolE}{b}{n}{
    <-6>  MnSymbolE-Bold5
   <6-7>  MnSymbolE-Bold6
   <7-8>  MnSymbolE-Bold7
   <8-9>  MnSymbolE-Bold8
   <9-10> MnSymbolE-Bold9
  <10-12> MnSymbolE-Bold10
  <12->   MnSymbolE-Bold12
}{}
\let\llangle\@undefined
\let\rrangle\@undefined
\DeclareMathDelimiter{\llangle}{\mathopen}
                     {MnLargeSymbols}{'164}{MnLargeSymbols}{'164}
\DeclareMathDelimiter{\rrangle}{\mathclose}
                     {MnLargeSymbols}{'171}{MnLargeSymbols}{'171}
\newcommand{\ppt}{p^{\pt}}
\newcommand{\Chik}{\Chi_k}
\newcommand{\scp}[2]{\big\langle #1 , #2 \big\rangle}
\newcommand{\HS}{\mathrm{HS}}
\newcommand{\Order}{\mathcal{O}}
\renewcommand{\d}{\mathop{}\!\mathrm{d}}
\renewcommand{\Im}{\mathrm{Im}}
\title{Beyond Bogoliubov Dynamics}
\author{Lea Boßmann\thanks{Fachbereich Mathematik, Eberhard Karls Universität Tübingen, 
	Auf der Morgenstelle 10, 72076 Tübingen, Germany; and Institute of Science and Technology Austria, Am Campus 1, 3400 Klosterneuburg, Austria. \texttt{lea.bossmann@ist.ac.at}},\;
	Sören Petrat\thanks{Department of Mathematics and Logistics, Jacobs University Bremen, Campus Ring 1, 28759 Bremen, Germany; and University of Bremen, Department 3 – Mathematics, Bibliothekstr. 5, 28359 Bremen, Germany. \texttt{s.petrat@jacobs-university.de}},\;
	Peter Pickl\thanks{Mathematisches Institut, Ludwig-Maximilians-Universität München, Theresienstr.\ 39, 80333 München, Germany. \texttt{pickl@math.lmu.de}},\;
	and Avy Soffer\thanks{Department of Mathematics, Rutgers University, 110 Frelinghuysen Road, Piscataway, NJ 08854, USA. \texttt{soffer@math.rugers.edu}}
}
\date{\today}
\begin{document}
\maketitle

\begin{abstract}
\noindent 
We consider a system of $N$ interacting bosons in the mean-field scaling regime and construct corrections to the Bogoliubov dynamics that approximate the true $N$-body dynamics in norm to arbitrary precision. The $N$-independent corrections are given in terms of the solutions of the Bogoliubov and Hartree equations and satisfy a generalized form of Wick's theorem.
We determine the $n$-point correlation functions of the excitations around the condensate, as well as the reduced densities of the $N$-body system, to arbitrary accuracy, given only the knowledge of the two-point correlation functions of a quasi-free state and the solution of the Hartree equation. In this way, the complex problem of computing all $n$-point correlation functions for an interacting $N$-body system is essentially reduced to the problem of solving the Hartree equation and the PDEs for the Bogoliubov two-point correlation functions.
\end{abstract}

\noindent
\textbf{MSC class:} 35Q40, 35Q55, 81Q05, 82C10

\section{Introduction}\label{sec:intro}
We consider a system of $N$ weakly interacting bosons in $\R^d$, $d\geq1$, which are initially prepared in the state
\begin{equation*}
\PsiN_\mathrm{trap}\in \bigotimes\limits_\mathrm{sym}^N L^2(\R^d)=:\bigotimes\limits_\mathrm{sym}^N\fH =:\fH^N_\sym\,,
\end{equation*}
which is either the ground state or an appropriate low-energy excited eigenstate
of the Hamiltonian
\begin{equation}\label{H:N:trap}
\HN_\mathrm{trap} := \sum\limits_{j=1}^N\left(-\Delta_j+V_\mathrm{trap}(x_j)\right)+\lN\sum\limits_{1\leq i<j\leq N} v(x_i-x_j)\,.
\end{equation}
Here, $V_\mathrm{trap}:\R^d\to\R$   denotes a suitable trapping potential,  $v:\R^d\to\R$ is some bounded pair interaction, and 
\begin{equation}
\lN:=\frac{1}{N-1}\,.
\end{equation}
This model describes bosons in the mean-field or Hartree regime, which is characterized by weak, long-range interactions.
At time $t=0$, we remove the trapping potential and study the dynamics generated by the Hamiltonian
\begin{equation}\label{HN}
\HN:=\sum\limits_{j=1}^N\left(-\Delta_j\right)+\lN\sum\limits_{1\leq i<j\leq N}v(x_i-x_j) \,,
\end{equation}
which is determined by the $N$-body Schrödinger equation,
\begin{equation}\label{SE}
\i\partial_t \PsiN(t) = \HN \PsiN(t)\,, \qquad \PsiN(0)=\PsiN_\mathrm{trap}\,.
\end{equation}
The dynamics is non-trivial since the eigenstates of $\HN_\mathrm{trap}$ are not eigenstates of $\HN$ anymore. As a consequence of the high-dimensional configuration space $\R^{dN}$ and since all particles become correlated under the time evolution, it is practically impossible to  solve \eqref{SE} analytically or numerically for any reasonably large particle number $N$.
It is therefore of physical relevance as well as of mathematical interest to derive and analyze suitable approximations to the solution $\PsiN(t)$ of \eqref{SE}. 
In this paper, we propose a perturbative  scheme which yields an approximation of $\PsiN(t)$ to any order in $N^{-1}$.
Our construction is such that all corrections to correlation functions and expectation values of bounded operators are given in terms of the two-point correlation functions of a quasi-free state. This crucially reduces the complexity of the $N$-body problem and makes it possible to numerically compute these physically relevant quantities to arbitrary precision.\medskip

It is well known that the ground state as well as the low-energy excited eigenstates of $\HN_\mathrm{trap}$ exhibit Bose--Einstein condensation in the state $\varphi_\mathrm{trap}\in\fH$, which is given by the minimizer of the corresponding Hartree energy functional \cite{benguria-lieb1983,Lieb_Yau1987}.
To construct a norm approximation of the many-body state, one decomposes $\PsiN_\mathrm{trap}$ into the condensate part and excitations from the condensate as in \cite{lewin2015_2},
\begin{equation}\label{eqn:decomp}
\PsiN_\mathrm{trap}=\sum\limits_{k=0}^N\varphi_\mathrm{trap}^{\otimes (N-k)}\otimes_s\chik_\mathrm{trap}\,, \qquad 
\ChiN_\mathrm{trap}:=\big(\chik_\mathrm{trap}\big)_{k=0}^N\in\FN_{\perp\varphi_\mathrm{trap}}\,,
\end{equation}
where $\otimes_s$ denotes the symmetric tensor product. The set of $k$-particle excitations $\chik_\mathrm{trap}$ forms a vector $ \ChiN_\mathrm{trap}$ in the truncated Fock space over the orthogonal complement of $\varphi_\mathrm{trap}$.
In \cite{seiringer2011,grech2013,lewin2015_2}, it was shown that $\ChiN_\mathrm{trap}$ is approximated in norm by an appropriate eigenstate $\Chi_0^\mathrm{trap}$ of the Bogoliubov Hamiltonian corresponding to $\HN_\mathrm{trap}$. 
By \eqref{eqn:decomp}, this leads to an approximation of $\PsiN_\mathrm{trap}$ with respect to the $\fH^N$-norm.

This analysis was extended in \cite{spectrum}, where it was shown that the ground state and low-energy excited states of the Hamiltonian $\HN_\mathrm{trap}$ admit an asymptotic expansion in the parameter $\lN^{1/2}$, i.e., it holds that
\begin{equation}\label{Chi:N:trap}
\Big\|\ChiN_\mathrm{trap}-\sum\limits_{\l=0}^a\lN^\frac{\l}{2}\Chil^{\mathrm{trap}}\Big\|_{\FN_{\perp\varphi_\mathrm{trap}}}\leq C\lN^\frac{a+1}{2}
\end{equation}
for $N$-independent coefficients $\Chil^\mathrm{trap}\in\Fock_{\perp\varphi_\mathrm{trap}}$.
Since the leading order $\Chi_0^{\mathrm{trap}}$ is an eigenstate of the Bogoliubov Hamiltonian associated with $\HN_\mathrm{trap}$, it is related via a Bogoliubov transformation to a state with fixed particle number. In particular, if $\PsiN_\mathrm{trap}$ is the ground state of $\HN_\mathrm{trap}$, $\Chio^\mathrm{trap}$ is quasi-free, i.e., it arises as Bogoliubov transformation of the vacuum. 
The higher orders in the expansion \eqref{Chi:N:trap} can be written as
\begin{equation}\label{Chi:l:trap}
\Chil^\mathrm{trap}=
\sum\limits_{\substack{0\leq n\leq 3\l\\n+\l\text{ even}}}\;\,
\sum\limits_{\bj\in\{-1,1\}^n}
\int\dx^{(n)}\,
{\mathfrak{a}}^{(\bj)}_{\l,n}\big(x^{(n)}\big) a^{\sharp_{j_1}}_{x_1}\,\mycdots a^{\sharp_{j_n}}_{x_n}\, \Chio^\mathrm{trap}
\end{equation}
for some known coefficients ${\mathfrak{a}}^{(\bj)}_{\l,n}$, a multi-index $\bj=(j_1\mydots j_n)$, and where we abbreviated  $a^{\sharp_{-1}}:=a$ (annihilation operator), $a^{\sharp_1}:=\ad$ (creation operator) and $x^{(n)}:=(x_1\mydots x_{n})$. \medskip

The property of Bose--Einstein condensation is preserved by the time evolution \eqref{SE}, which was first shown in the 1970s and 1980s by Hepp, Ginibre, Velo, and Spohn \cite{hepp, ginibre1979, ginibre1979_2, spohn1980}. Interest in the question was revived in the early 2000s through the work of Bardos, Golse, and Mauser \cite{bardos2000}, and a series of papers by Erd\H{o}s, Schlein, and Yau \cite{erdos2001,erdos2006,erdos2007,erdos2007_2,erdos2009,erdos2009_3,erdos2010}. Since then, many more results have been obtained, e.g., in \cite{adami2004, adami2007, froehlich2007, rodnianski2009, frohlich2009,knowles2010, kirkpatrick2011,pickl2011,pickl2015,benedikter2015, chen2013_2, sohinger15, anapolitanos2016, jeblick2016, jeblick2017,brennecke2017, jeblick2018}, see also \cite{benedikter_lec, golse_lec} for overviews of the topic. The optimal convergence rate was proven in \cite{erdos2009,chenlee:2011,chen2011,mitrouskas2016}, where it was shown that for any $t\in\R$, there exists some $C(t)>0$ such that
\begin{equation}\label{eqn:RDM:convergence}
\Tr\left|\gPsiNo(t)-|\pt\rangle\langle\pt|\right|\leq C(t) N^{-1}\,,
\end{equation}
where $\gPsiNo(t)$ denotes the one-particle reduced density matrix of $\PsiN(t)$ and where $\pt$ is the solution of the Hartree equation \eqref{hpt} with initial datum $\varphi(0)=\varphi_\mathrm{trap}$. We refer to \cite{LSSY,Froehlich_Lenzmann_2003_rev,lewin2014} for an overview of the results and for further references.

To characterize the $N$-body dynamics on the level of the wave function, one decomposes, analogously to \eqref{eqn:decomp},
\begin{equation}
\PsiN(t)=\sum\limits_{k=0}^N\pt^{\otimes (N-k)}\otimes_s\chik(t)\,, \qquad \ChiN(t):=\big(\chik(t)\big)_{k=0}^N\in\FNpt\,.
\end{equation}
In \cite{lewin2015}, it was shown that $\ChiN(t)$ is approximated in norm by the solution $\Chio(t)$ of the  Bogoliubov (or Bogoliubov--de~Gennes) equation,
\begin{equation}\label{eqn:Bogoliubov:equation}
\i\partial_t\Chio(t)=\FockHopt\Chio(t)\,, \qquad
\Chio(0)=\Chi_0^\mathrm{trap}\,,
\end{equation}
where $\FockHopt$ denotes the Bogoliubov Hamiltonian corresponding to $\HN$.  The dynamics \eqref{eqn:Bogoliubov:equation} describes the formation of pair correlations, which gives the first order correction to the mean-field dynamics. 
Fluctuations around the mean-field were first analyzed by Ginibre and Velo \cite{ginibre1979,ginibre1979_2} and by Grillakis, Machedon, and Margetis \cite{grillakis2010,grillakis2011,grillakis2013,grillakis2017} in a slightly different setting, and further results in this direction were obtained, e.g., in \cite{lewin2015,nam2015, boccato2015,mitrouskas2016,chong2016,nam2017,nam2017_2,kuz2017, brennecke2017_2,petrat2017}.
Moreover, dynamical central limit theorems were derived in
\cite{benarous2013,buchholz2014,rademacher2019}.
\medskip

In our main result (Theorem \ref{thm:norm:approx}), we prove that the bound \eqref{Chi:N:trap} is preserved by the time evolution~\eqref{SE}: for any $a\in\N_0$, we show that
\begin{equation}\label{eqn:intro:thm1}
\Big\|\ChiN(t)-\sum\limits_{\l=0}^a\lN^\frac{\l}{2}\Chil(t)\Big\|_{\FNpt}\leq C_a(t)\lN^\frac{a+1}{2}
\end{equation}
for explicitly computable, $N$-independent coefficients $\Chil(t)\in\Fpt$.
The higher order corrections ($\l\geq1$) to the leading order term $\Chio(t)$ are given by
\begin{equation}\label{eqn:intro:thm2}
\Chil(t)=\sum\limits_{\substack{0\leq n\leq 3\l\\ n+\l\text{ even}}}
\sum\limits_{\bj\in\{-1,1\}^{n}}
\int\dx^{(n)}
\mathfrak{C}^{(\bj)}_{\l,n}(t;x^{(n)})\,\,\asjo_{x_1}\,\mycdots\,a^{\sharp_{j_{n}}}_{x_{n}}\,\Chio(t)
\end{equation}
for some explicitly known, $N$-independent coefficients $\mathfrak{C}^{(\bj)}_{\l,n}$ (see Corollary \ref{thm:even_more_explicit_form}), which are constructed iteratively from the initial coefficients $\mathfrak{a}^{(\bj)}_{\l,n}$ from \eqref{Chi:l:trap}.
To derive \eqref{eqn:intro:thm2}, it is crucial to note that the unitary time evolution generated by the Bogoliubov Hamiltonian acts as a Bogoliubov transformation on creation and annihilation operators.

The main advantage of \eqref{eqn:intro:thm2} lies in the fact that it essentially reduces the computation of the higher order corrections $\Chil(t)$ to the problem of solving first the well-studied Hartree equation, and second the Bogoliubov equation, which is equivalent to solving a $2\times2$ matrix differential equation (see Section \ref{subsec:Bogoliubov}). Hence, \eqref{eqn:intro:thm2} solves the dynamics of the mean-field Bose gas as complete as seems possible, to any order in $1/N$, in terms of functions that can be retrieved with a reasonable computational effort which is completely independent of $N$.
\medskip

Although we decided to focus on the situation where the initial state $\PsiN(0)$ is an  eigenstate of the Hamiltonian $\HN_\mathrm{trap}$, our results remain valid in a more general setting, also including, e.g., mixtures of such states. In particular, it is not necessary to assume that the interaction $v$ is of positive type, which is required for the static result \eqref{Chi:N:trap}.
\medskip

As one application, our approximation scheme yields computationally accessible higher order corrections to the trace norm convergence \eqref{eqn:RDM:convergence} of the reduced density matrices. In Theorem \ref{thm:RDM}, we prove that
\begin{equation}\label{eqn:intro:thm:RDM}
\Tr\left|\gamma_{\PsiN}^{(1)}(t)-\sum\limits_{\l=0}^a\lN^\l\gamma_\l^{(1)}(t) \right| \leq C_a(t) \lN^{a+1}\,
\end{equation}
for $N$-independent one-body operators $\gamma_\l^{(1)}(t)$ (see \eqref{eqn:gamma:higher:orders} for a definition). While the leading order in \eqref{eqn:intro:thm:RDM},
\begin{equation}
\gamma_0^{(1)}(t)=|\pt\rangle\langle\pt|\,,
\end{equation}
recovers Bose--Einstein condensation with optimal rate as in \eqref{eqn:RDM:convergence},
none of the higher order terms in the expansion of $\gamma_{\PsiN}^{(1)}(t)$ has---to the best of our knowledge---been known so far, neither in the mathematics nor in the physics community. For example, the first correction to $\gamma_0^{(1)}$ is given by
\begin{equation}
\gamma_1^{(1)}(t)
=|\pt\rangle\langle\bzo(t)|
+|\bzo(t)\rangle\langle\pt|
+\gChiot-\Tr_\fH\gChiot|\pt\rangle\langle\pt|\,,
\end{equation}
where $\bzo(t,x)$ is the solution of 
\begin{eqnarray}\label{intro_red_dens_corr}
\i\partial_t\bzo(t) 
&=&\left(\hpt+\Kopt\right)\bzo(t)+\Ktpt\overline{\bzo(t)} \nonumber\\
&& +\big(\Kthpt\big)^*\aChiot+\Tr_1\big(\Kthpt\gChiot\big)+\Tr_2\big(\Kthpt\gChiot\big)\,.
\end{eqnarray}
Here, the integral operators $K^{(i)}_{\pt}$ are defined in terms of $\pt$ and $v$ in \eqref{K}, and  we used the notation $\Tr_1A:=\int\dz A(z,\,\cdot\,;z)$ and $\Tr_2 A:=\int\dz A(\,\cdot\,,z;z) $, 
for an operator $A:\fH\to\fH^2$. Moreover, $\gamma_{\Chio(t)}$ and $\alpha_{\Chio(t)}$ are the Bogoliubov two-point correlation functions,
\begin{equation}\label{eqn:intro:gamma:alpha}
\gamma_{\Chio(t)}(x,y):=\lr{\Chio(t),\ad_y a_x\Chio(t)}\,,\qquad
\alpha_{\Chio(t)}(x,y):=\lr{\Chio(t),a_xa_y\Chio(t)}\,,
\end{equation}
which are determined by the initial data through a system of two coupled  PDEs \eqref{eqn:PDE} on $\R^{2d}$. In  particular, it is possible to solve the PDEs numerically to arbitrary accuracy. 
\medskip

Finally, \eqref{eqn:intro:thm2}, leads to an asymptotic expansion of the $n$-point correlation functions of the excitations,
\begin{equation}
\lrt{a^{\sharp_{j_1}}_{x_1}\,\mycdots\,a^{\sharp_{j_n}}_{x_n}}_N :=
\lr{\ChiN(t),a^{\sharp_{j_1}}_{x_1}\,\mycdots\,a^{\sharp_{j_n}}_{x_n}\ChiN(t)}_{\FNpt}\,.
\end{equation}
Let us for simplicity consider only the situation where $\PsiN(0)=\PsiN_\mathrm{trap}$ is the ground state of $\HN_\mathrm{trap}$. In this case, we prove in Corollary \ref{thm:correlation functions_simplified} that
\begin{subequations}\label{eqn:intro:Wick}
\begin{eqnarray}
\lrt{a^{\sharp_{j_1}}_{x_1}\cdots a^{\sharp_{j_{2n}}}_{x_{2n}}}_N
&=& \sum_{\l=0}^{a} \lN^{\l} \sum_{m=0}^{2\l} \lr{\Chim(t) ,a^{\sharp_{j_1}}_{x_1}\cdots a^{\sharp_{j_{2n}}}_{x_{2n}} \Chi_{2\l-m}(t)} + \Order(\lN^{a+1})\,,\label{eqn:intro:Wick:even}\\
\lrt{a^{\sharp_{j_1}}_{x_1}\cdots a^{\sharp_{j_{2n+1}}}_{x_{2n+1}}}_N
&=& \sum_{\l=0}^{a-1} \lN^{\l+\frac12} \sum_{m=1}^{2\l+1} \lr{\Chim(t) ,a^{\sharp_{j_1}}_{x_1}\cdots a^{\sharp_{j_{2n+1}}}_{x_{2n+1}} \Chi_{2\l+1-m}(t)} + \Order(\lN^{a+\frac12})\,.\quad\qquad\label{eqn:intro:Wick:odd}
\end{eqnarray}
\end{subequations}
The expansion \eqref{eqn:intro:Wick} essentially reduces the $N$-body problem to the problem of computing the two-point correlation functions of the quasi-free state $\Chio(t)$, since \eqref{eqn:intro:thm2} leads to a generalization of Wick's rule for the ``mixed'' $n$-point correlation functions of the corrections $\Chil(t)$:
If $\l + n + k$ is odd, it holds that
\begin{subequations}\label{eqn:intro:mixed:n:point:fctns}
\begin{equation}\label{nlk_correlations_odd_intro}
\lr{\Chil(t),\asjo_{x_1}\,\mycdots  a^{\sharp_{j_n}}_{x_{n}}\Chik(t)}_\Fpt=0\,,
\end{equation}
and if $\l+n+k$ is even,
\begin{eqnarray}\label{eqn:intro:prop:compute:corr:fctns}
&&\hspace{-0.3cm}\lr{\Chil(t),a^{\sharp_{j_1}}_{x_1}\cdots a^{\sharp_{j_n}}_{x_{n}}\Chi_k(t)}_\Fpt \\
&& =\hspace{-2.5mm} \sum\limits_{\substack{b=n\\\text{even}}}^{n+3(\l+k)} \sum\limits_{\bm\in\{-1,1\}^b}\sum\limits_{\sigma\in P_{b}}\prod\limits_{i=1}^{b/2}
\int\dy^{(b)}\mathfrak{D}^{(\bj;\bm)}_{\l,k,n;b}(t;\xn;y^{(b)}) \scp{\Chio(t)}{a_{y_{\sigma(2i-1)}}^{\sharp_{m_{\sigma(2i-1)}}}a_{y_{\sigma(2i)}}^{\sharp_{m_{\sigma(2i)}}}\Chio(t)}_\Fpt ,\nonumber
\end{eqnarray}
\end{subequations}
where $\mathfrak{D}^{(\bj;\bm)}_{\l,k,n;b}$ is determined by the coefficients $\mathfrak{C}$ from \eqref{eqn:intro:thm2} and $P_{b}$ is a set of pairings (see Corollary \ref{lem_compute_correlation_functions} for the precise statement).
In particular, by \eqref{eqn:intro:prop:compute:corr:fctns}, the right-hand side of  \eqref{eqn:intro:Wick} is explicitly given in terms of the Bogoliubov two-point correlation functions \eqref{eqn:intro:gamma:alpha}, which are obtained by solving a system of two coupled PDEs.

A possible application of this result is a precise computation of the condensate depletion. By \eqref{eqn:intro:Wick:even},
\begin{eqnarray}
&&\hspace{-1cm}\lr{\ChiN(t),\Number\ChiN(t)}_{\FN} \nonumber\\
& =& \lr{\Chio(t),\Number \Chio(t)} + \lN \bigg( \lr{\Chi_1(t),\Number \Chi_1(t)} + \lr{\Chio(t),\Number \Chi_2(t)} + \lr{\Chi_2(t),\Number \Chio(t)} \bigg)\nonumber\\
&& + \Order(\lN^{2}).
\end{eqnarray}
Making use of \eqref{eqn:intro:prop:compute:corr:fctns}, the next order correction beyond Bogoliubov theory can (analytically or numerically) be evaluated. For dilute Bose gases, the condensate depletion has been studied in various settings, and the predictions of Bogoliubov theory
were experimentally confirmed \cite{xu2006, chang2016, lopes2017}.
\medskip

In conclusion, our approximation scheme provides a computationally efficient perturbative algorithm for the analysis of the  mean-field dynamics, which yields $N$-independent corrections.
In the NLS and Gross--Pitaevskii scaling regimes, where the interaction scales as $ N^{-1+3\tilde{\beta}}v(N^{\tilde{\beta}}x)$ for $0<\tilde{\beta} \leq 1$ (in three dimensions), the correlations are larger due to the singular interaction. We conjecture that our method can be extended to regimes with $\tilde{\beta}>0$, at least for small values of $\tilde{\beta}$. However, the $N$-independence of the corrections might be hard to achieve. We also expect that our results can be extended to a class of pair interactions including the repulsive Coulomb potential in three dimensions, but this is certainly harder to prove.
\medskip

We conclude with an overview of closely related results. To the best of our knowledge, the first derivation of higher order corrections is due to Ginibre and Velo \cite{ginibre1980,ginibre1980_2}. They consider the classical field limit $\hbar\to0$ of the dynamics generated by a Hamiltonian $\mathcal{H}$ on Fock space, which is such that $\mathcal{H}\big|_{\fH^N}=\HN$  upon identification $\hbar\equiv 1/N$. Working in the Heisenberg picture, they expand $\mathcal{H}$ in a power series of $\tfrac{1}{\sqrt{N}} a(t)-\pt$, effectively separating the (classical) motion of the condensate from the  excitations,  and study the unitary time evolution $W(t,s)$ of the operators $b(t)=C(\sqrt{N}\varphi(0))^* \big(a(t)-\sqrt{N}\pt \big)C(\sqrt{N}\varphi(0))$. Here, $C$ is the Weyl operator, which implements the $c$-number substitution for coherent initial data (similar to \eqref{eqn:decomp}). The authors construct a Dyson expansion of the unitary group $W(t,s)$ in terms of the time evolution generated by the Bogoliubov Hamiltonian and prove that the expansion is Borel summable for bounded interaction potentials \cite{ginibre1980} and strongly asymptotic for a class of unbounded potentials \cite{ginibre1980_2}.

The main differences to our work are that Ginibre and Velo consider a Hamiltonian on Fock space (instead of $\fH^N$), accordingly use coherent states as initial data, and expand the time evolution operator $W(t,s)$ in a perturbation series (instead of $\PsiN$). 
In contrast, we provide explicit formulas for physically relevant initial data, compute correlation functions and reduced densities, and make use of the connection between Bogoliubov transformations and Bogoliubov maps to reduce the computational complexity.
\medskip

A similar result in the $N$-body setting was derived in \cite{corr} by expanding the $N$-body time evolution in a comparable Dyson series. Instead of the Bogoliubov time evolution, the expansion is in terms of an auxiliary time evolution $\tilde{U}_\varphi(t,s)$ on $\fH^N$, whose generator has a quadratic structure comparable to the Bogoliubov Hamiltonian (sometimes called particle number preserving Bogoliubov Hamiltonian).
However, the auxiliary time evolution $\tilde{U}_\varphi(t,s)$ is a rather inaccessible object, which implicitly still depends on $N$.
In particular, it is not clear to what extent the computation of physical quantities of interest is less complex with respect to the time evolution $\tilde{U}_\varphi(t,s)$ than with respect to the full $N$-body problem.
The approximation scheme we propose in this paper can be understood as an improvement of this result, where we modified the construction precisely such as to make it accessible to computations. \medskip

A related approach to obtain higher order corrections in the mean-field regime was introduced by Paul and Pulvirenti in \cite{paul2019}. In that work, the authors approach the problem from a kinetic theory perspective and consider the dynamics of the reduced density matrices of the $N$-body state. Their approach is formally similar to ours, since Bogoliubov theory in the sense of linearization of the Hartree equation is used for the expansion (but without projecting  onto the excitation Fock space), an even-odd structure similar to \eqref{eqn:intro:mixed:n:point:fctns} is observed, and an $a$-dependent but $N$-independent number of operations is required for the construction. In comparison, the main advantage of our approach is that our approximations \eqref{eqn:intro:thm2} are completely $N$-independent. \medskip

A multi-scale expansion of the ground state of the Bose gas in the mean-field limit was introduced by Pizzo \cite{pizzo2015,pizzo2015_2,pizzo2015_3} via a constructive approach using Feshbach maps. In \cite{spectrum}, perturbation theory for the ground state and low-energy excited states corresponding to our dynamical results is made rigorous (see the discussion above and in Section~\ref{sec:results:PT}). We refer to \cite{spectrum} for further literature in the static case.
\medskip

In addition to \cite{paul2019,corr}, our work is particularly inspired by the works of Grillakis, Machedon and Margetis \cite{grillakis2010, grillakis2011, grillakis2013, grillakis2017}, Nam and Napi\'orkowski \cite{nam2015,nam2017}, and some of the classical references \cite{bach1994,robinson1965,derezinski}, which emphasize the concepts of Bogoliubov transformations and quasi-free states. For example, in \cite{nam2015,nam2017} the authors prove that describing the excitations around the Hartree solution $\pt$ as solutions of the $N$-independent Bogoliubov equation yields a computationally efficient approximation to the many-body dynamics: since the Bogoliubov time evolution preserves quasi-freeness, all time evolved $n$-point correlation functions are---thanks to Wick's rule---determined by the initial two-point correlation functions, up to order $N^{-1/2}$. The result presented here can be understood as a continuation of these ideas. Our corrections $\Chil(t)$ to the Bogoliubov description are still completely determined by the initial two-point correlation functions of the leading order contribution $\Chio(0)$, and in this sense, the corrections $\Chil(t)$ can be understood as generalized quasi-free states which satisfy a generalized form of Wick's rule.

\subsection*{Notation}
\begin{itemize}
\item Expressions that are independent of  $N, t, \PsiN(0),\pz$, but may depend on all fixed quantities of the model such as $v$, are referred to as constants. 
While we explicitly indicate the dependence of constants on the order $a$ of the approximation by writing $C_a$ or $C(a)$, note that these constants may vary from line to line.
We use the notation
\begin{equation}
A\ls B
\end{equation} 
to indicate that there exists a constant $C>0$ such that $A\leq CB$.
\item
Unless indicated otherwise, $\pt$ denotes the solution of \eqref{hpt} with initial datum $\pz$.
\item If a vector in $\Fock$ is written as a direct sum, it is always understood with respect to the decomposition $\Fock=\FN\oplus\Fock^{>N}$. For $\bPhi_1,\bPhi_2,\bPhi\in\Fock$, we use the notation
\begin{equation}
\lr{\bPhi_1,\bPhi_2}_{\FN}:=\sum\limits_{k=0}^N\lr{\phi_1^{(k)},\phi_2^{(k)}}_{\fH^k}\,,\qquad
\norm{\bPhi}_{\FN}^2:=\sum\limits_{k=0}^N\norm{\phi^{(k)}}_{\fH^k}^2\,.
\end{equation}
\item We abbreviate
\begin{equation}
\xk:=(x_1\mydots x_k)\,,\qquad \d\xk:=\dx_1\mycdots\dx_k
\end{equation} 
for $k\geq 1$ and $x_j\in{\R^d}$, and use the notation
\begin{equation}
a^{\sharp_1}:=\ad\,,\qquad a^{\sharp_{-1}}:=a\,.
\end{equation}
We denote by $\bj=(j_1\mydots j_n)$ a multi-index and define $|\bj|:=j_1+\dots+j_n$.
\item The set of pairings of $\{1\mydots 2a\}$ used for Wick's rule is defined as
\begin{equation}\label{Pairing_set_def}
P_{2a}:=\{\sigma\in\mathfrak{S}_{2a}:\sigma(2j-1)<\min\{\sigma(2j),\sigma(2j+1)\} \;\forall\; 1\leq j\leq 2a\},
\end{equation}
where $\mathfrak{S}_{2a}$ denotes the symmetric group on the set $\{1\mydots 2a\}$.
\item $\mathcal{L}(V,W)$ denotes the space of bounded operators from $V$ to $W$, and the corresponding norm is sometimes abbreviated as $\onorm{A} := \norm{A}_{\mathcal{L}(V,W)}$. We denote the Hilbert--Schmidt norm by $\| A \|_{\HS}$.
\end{itemize}

\section{Preliminaries}
\subsection{Framework}
We begin with stating our assumptions on the Hamiltonian $\HN$ from $\eqref{HN}$ and on the initial condensate wave function.

\begin{assumption}\label{assumption:v}
Assume that the interaction potential $v:\R^d\to\R$ is bounded and even, and let $\pz\in H^1(\R^d)$ with $\norm{\pz}_{\fH}=1$. 
\end{assumption}

As a consequence of Assumption \ref{assumption:v},  the $N$-body Hamiltonian $\HN$ generates a unique family of unitary time evolution operators, which  leaves $\mathcal{D}(\HN)=H^2({\R^d})$ invariant. \medskip

The evolution of the condensate is described by the solution $\pt$ of the Hartree equation,
\begin{equation}\label{hpt}
\i \partial_t \pt = \left(-\Delta + v*|\pt|^2 - \mpt\right)\pt
=: \hpt\pt,
\qquad \varphi(0)=\varphi_\trap\,,
\end{equation}
where the real-valued phase factor $\mpt$ is given by
\begin{equation}\label{mpt}
\mpt = \tfrac{1}{2} \int\limits_{\R^d} \left(v*|\varphi(t)|^2\right)(x) |\varphi(t,x)|^2\dx \,.
\end{equation} 
The solution of \eqref{hpt} in $H^1(\R^d)$ is unique and exists globally \cite[Corollaries 4.3.3 and 6.1.2]{cazenave}. Moreover, it holds that $\norm{\pt}_\fH=\norm{\pz}_\fH=1$.
\medskip

We define the  (truncated) Fock spaces
\begin{equation}\label{Fock:space}
\FNpt:= \bigoplus_{k=0}^N\bigotimes_\sym^k \{\pt\}^\perp\,,\quad 
\Fpt:=\bigoplus_{k=0}^\infty\bigotimes_\sym^k \{\pt\}^\perp\,,\quad
\FH:=\bigotimes\limits_{k=0}^\infty\bigotimes\limits_\sym^k\fH\,,
\end{equation}
for $\{\varphi\}^\perp:=\left\{\phi\in \fH: \lr{\phi,\varphi}_{\fH}=0\right\}$. 
Whenever the context is unambiguous, we abbreviate
\begin{equation}
\Fock:=\FH\,.
\end{equation}
The Fock vacuum is denoted by $|\Omega\rangle=(1,0,0,\dots)$.
Note that the subspaces $\FNpt$ and $\Fpt$ depend on time via the condensate wave function, while the full Fock space $\Fock$ is time-independent.
We refer to $\Fpt$ as \emph{excitation Fock space}, and to $\FNpt$ as \emph{truncated excitation Fock space}.
The one-particle sector $\{\pt\}^\perp$ of the subspace $\Fpt$ can be understood as a hyperplane in $\fH$, which, due to the time evolution of the condensate wave function $\pt$, ``rotates'' through $\fH$; the higher sectors are (symmetrized) direct sums of these hyperplanes.
\\

The creation and annihilation operators on $\Fock$ are defined as
\begin{eqnarray}
(\ad(f)\bPhi)^{(k)}(x_1\mydots x_k)&=&\frac{1}{\sqrt{k}}\sum\limits_{j=1}^kf(x_j)\phi^{(k-1)}(x_1\mydots x_{j-1},x_{j+1}\mydots x_k)\,,\quad k\geq 1\,,\\
(a(f)\bPhi)^{(k)}(x_1\mydots x_k)&=&\sqrt{k+1}\int\d x\overline{f(x)}\phi^{(k+1)}(x_1\mydots x_k,x)\,,\quad k\geq 0
\end{eqnarray} 
for $f\in\fH$ and $\bPhi\in\Fock$.
They can be expressed in terms of the operator-valued distributions $\ad_x$, $a_x$,
satisfying the canonical commutation relations 
\begin{equation}\label{eqn:CCR}
[a_x,\ad_y]=\delta(x-y)\,,\qquad [a_x,a_y]=[\ad_x,\ad_y]=0\,,
\end{equation}
as
\begin{equation}
\ad(f)=\int\d x f(x)\,\ad_x\,,\qquad a(f)=\int\d x\overline{f(x)}\,a_x\,.
\end{equation}
The second quantization in $\Fock$  of a $k$-body operator $A^{(k)}$ is  defined as
\begin{equation}
\d\Gamma(A^{(k)}) = \int \dx^{(k)} \int \dy^{(k)} A^{(k)}(x^{(k)};y^{(k)})\, \ad_{x_1} \ldots \ad_{x_k}  a_{y_1} \ldots a_{y_k},
\end{equation}
and the number operator on $\Fock$ is denoted by 
\begin{equation}
\Number=\int\ad_x a_x\dx\,,\qquad (\Number\bPhi)^{(k)}=k\phi^{(k)}\,.
\end{equation}
The  number of excitations around the time-evolved condensate $\pt^{\otimes N}$ is counted by the excitation number operator,
\begin{equation}
\Npt:=\Number-\ad(\pt)a(\pt) = \d\Gamma(\qpt)\,,
\end{equation}
where $\qpt$ denotes the projection onto the orthogonal complement of $\pt$, i.e.,
\begin{equation}
\ppt:=|\pt\rangle\langle\pt|\,,\qquad \qpt:=\id_\fH-\ppt\,.
\end{equation}
Thus, when restricted to vectors $\bPhi(t)\in\Fpt$, $\Number$ counts the number of excitations, 
\begin{equation}
\Number|_{\Fpt}=\Npt|_{\Fpt}\,.
\end{equation}
Note that $\Number$ is not time-dependent, while $\Npt$ depends on time.
The relation between the $N$-body state $\PsiN(t)$ and the corresponding excitation vector $\ChiN(t)$ is given by the unitary map 
\begin{eqnarray}\label{map:U}
\UNpt:\fH_\sym^N \to  \FNpt\;, \quad
\PsiN(t)  \mapsto  \UNpt\PsiN(t):=\ChiN(t)\,.
\end{eqnarray}
Equivalently, $\UNpt$ can be interpreted as a partial isometry from $\fH_\sym^N$ to $\Fock$, where $\UNpt^*$ is extended by zero outside the truncated excitation Fock space $\FNpt$.
The product state $\pt^{\otimes N}$ is mapped to the vacuum, i.e.,
\begin{equation}
\UNpt \pt^{\otimes N}=(1,0,0,\dots)\,.
\end{equation}
Written in terms of creation and annihilation operators, the map $\UNpt$ acts as
\begin{equation}\label{LNSS:map:U}
\UNpt\PsiN(t)=\bigoplus\limits_{j=0}^N\big(\qpt\big)^{\otimes j}\left(\frac{a(\pt)^{N-j}}{\sqrt{(N-j)!}}\PsiN(t)\right)
\end{equation}
(\cite[Proposition 4.2]{lewin2015_2}).
This leads for $f,g\in\{\pt\}^\perp$ to
\begin{subequations}\label{eqn:substitution:rules}
\begin{eqnarray}
\UNpt \ad(\pt)a(\pt)\UNpt^*&=&N-\Npt\,,\\
\UNpt \ad(f)a(\pt) \UNpt^*&=&\ad(f)\sqrt{N-\Npt}\,,\\
\UNpt \ad(\pt)a(g)\UNpt^*&=&\sqrt{N-\Npt}\, a(g)\,,\\
\UNpt \ad(f)a(g)\UNpt^*&=&\ad(f)a(g)
\end{eqnarray}
\end{subequations}
as identities on $\FNpt$. Note that since $\Npt=\Number$ on $\Fpt$, we can equivalently replace $\Npt$ by $\Number$ in \eqref{eqn:substitution:rules}. 
\medskip

\subsection{Excitation Hamiltonian}
The map $\UNpt$ decouples the evolution \eqref{hpt} of the condensate wave function from the evolution of the excitations $\ChiN(t)=\UNpt\PsiN(t)$, which is determined by
\begin{equation}\label{eqn:SE:Fock}
\i\partial_t\ChiN(t)=\FockHNpt\ChiN(t)\,, 
\end{equation}
with excitation Hamiltonian
\begin{equation}\label{H:Fock}
\FockHNpt=\i(\partial_t\UNpt)\UNpt^*+\UNpt\HN\UNpt^*:\FNpt\to\FN\,.
\end{equation}
For later convenience, we write $\FockHNpt$ as restriction to $\FNpt$ of a Hamiltonian $\FockHpt$ which is defined on the full Fock space $\Fock$, 
\begin{equation}
\FockHNpt\;=\;\FockHpt\big|_{\FNpt}.
\end{equation}

The transformation rules \eqref{eqn:substitution:rules} can be used to obtain an explicit formula for $\FockHpt$ from \eqref{H:Fock} (see, e.g., \cite[Section 4]{lewin2015_2} and \cite[Section 4.2 and Appendix B]{lewin2015}). Written in a way that is more convenient for our analysis, it is given as
\begin{eqnarray}
\FockHpt
&=& \boldKzpt + \frac{N-\Number}{N-1}\boldKopt \nonumber\\
&&+\left(\boldKtpt\frac{\sqrt{\left[(N-\Number)(N-\Number-1)\right]_+}}{N-1}+ \frac{\sqrt{\left[(N-\Number)(N-\Number-1)\right]_+}}{N-1}\boldKtptbar\right)\nonumber \\
&&+\left(\boldKthpt\frac{\sqrt{\left[N-\Number\right]_+}}{N-1}+ \frac{\sqrt{\left[N-\Number\right]_+}}{N-1}\boldKthptbar\right)  +\frac{1}{N-1}\boldKfpt\,,\label{eq_H_N_Fock}
\end{eqnarray}
where $[\cdot]_+$ denotes the positive part and where we used the shorthand notation
\begin{subequations}\label{eqn:K:notation}
\begin{eqnarray}
\boldKzpt&=&\int\dx\,\ad_x \hpt_x a_x\,,\label{eqn:K:notation:0}\\
\boldKopt&=&\int\dx_1\dx_2\,\Kopt(x_1;x_2)\ad_{x_1} a_{x_2}\,,\label{eqn:K:notation:1}\\
\boldKtpt&=&\tfrac12\int\dx_1\dx_2\,\Ktpt(x_1,x_2)\ad_{x_1}\ad_{x_2}\,,\label{eqn:K:notation:2}\\
\boldKthpt&=&\int\dx^{(3)}\,\Kthpt(x_1,x_2;x_3)\ad_{x_1}\ad_{x_2}a_{x_3}\,\label{eqn:K:notation:3}\\
\boldKfpt&=&\tfrac12\int\dx^{(4)}\,\Kfpt(x_1,x_2;x_3,x_4)\ad_{x_1}\ad_{x_2}a_{x_3}a_{x_4}\,,\label{eqn:K:notation:4}\\
\overline{\mathbb{K}^{(j)}_\pt}&:=& \left(\mathbb{K}^{(j)}_\pt\right)^*\,,\qquad j=2,3\label{eqn:K:notation:5}
\end{eqnarray}
\end{subequations}
with
\begin{subequations}\label{K}
\begin{align}
\label{K:1}
& \Kopt:\{\pt\}^\perp\to \{\pt\}^\perp, \qquad
 \Kopt :=   \qpt   \tKopt  \qpt \,, 
 \\[7pt]
\label{K:2}
& \Ktpt \in \{\pt\}^\perp\otimes \{\pt\}^\perp,\qquad
 \Ktpt(x_1,x_2) := 
   \big(\qpt_1\qpt_2 \tKtpt\big)(x_1,x_2)\,,
\\[7pt]
& \Kthpt:\{\pt\}^\perp \to \{\pt\}^\perp\otimes \{\pt\}^\perp,\;\;\;
\nonumber\\[3pt]
&\hspace{1.3cm}\psi\mapsto (\Kthpt\psi)(x_1,x_2):=\qpt_1\qpt_2 \Wpt(x_1,x_2)\varphi(x_1)(\qpt\psi)(x_2)
\label{K:3}\,,\\[5pt]
& \Kfpt:\{\pt\}^\perp\otimes \{\pt\}^\perp\to\{\pt\}^\perp\otimes \{\pt\}^\perp,\nonumber\\[3pt]
&\hspace{1.3cm}\psi\mapsto (\Kfpt\psi)(x_1,x_2):=\qpt_1\qpt_2\Wpt(x_1,x_2)(\qpt_1\qpt_2\psi)(x_1,x_2)\,.\label{K:4}
\end{align}
\end{subequations}
Here, $\tKopt$ is the operator with kernel
\begin{equation}
\tKopt(x_1;x_2) := \overline{\varphi(t,x_2)}v(x_1-x_2)\varphi(t,x_1)\,,
\end{equation}
$\tKtpt$ denotes the vector
\begin{equation}
 \tKtpt(x_1,x_2) := v(x_1-x_2)\varphi(t,x_1)\varphi(t,x_2)\,,
\end{equation} 
and $\Wpt$ is the multiplication operator on $\fH\otimes\fH$ defined by
\begin{equation}\label{eqn:W(t,x,y)}
\Wpt(x,y) := v(x-y) - \left(v*|\pt|^2\right)(x) - \left(v*|\pt|^2\right)(y) + 2\mpt.
\end{equation}
The notation is understood such that the projections $\qpt_1,\qpt_2$ act on the respective functions on their right. For example, the function $\Kthpt\psi$  is obtained from $\psi\in\fH$ by taking the tensor product of $\qpt\psi$ and $\pt$, acting on it with the multiplication operator $\Wpt$, and finally projecting the resulting function on $\fH\otimes\fH$ onto the subspace $\{\pt\}^\perp\otimes\{\pt\}^\perp$.
In Appendix~\ref{appendix:Hamiltonian}, we provide a derivation of \eqref{eq_H_N_Fock}.

Let us remark that there is more than one way to extend  $\FockHNpt$ to the full Fock space $\Fock$. In particular, we could have defined $\FockHpt$ in terms of $\Npt$ instead of $\Number$ since the two operators coincide on $\FNpt$. We chose this way of defining $\FockHpt$ for later convenience.
\medskip

Expanding the $N$-dependent expressions in \eqref{eq_H_N_Fock} in a Taylor series, we write the excitation Hamiltonian $\FockHpt$ as a power series in $\lN^{1/2}$  with $N$-independent (operator-valued) coefficients (see  Section~\ref{subsec:proofs:duhamel} for a proof):

\begin{lem}\label{lem:expansion:HNpt}
Let $a\in\N_0$.
In the sense of operators on $\Fock$, it holds that
\begin{eqnarray}\label{eqn:lem:expansion:HNpt}
\FockHpt
&=&\sum\limits_{n=0}^{a}\lN^{\frac{n}{2}}\FockHnpt +\lN^{\frac{a+1}{2}}\mathcal{R}^{(a)}\,,
\end{eqnarray}
where
\begin{subequations}\label{FockHnpt}
\begin{eqnarray}
\FockHopt&:=&  \boldKzpt + \boldKopt + \boldKtpt + \boldKtptbar\,,\label{FockHopt}\\
\nonumber\\
\FockHtpt&:=& \boldKthpt + \boldKthptbar\,,\label{FockHtpt}\\
\nonumber\\
\FockHthpt
&:=& -(\Number-1)\boldKopt -\boldKtpt(\Number-\tfrac12)-(\Number-\tfrac12)\boldKtptbar+\boldKfpt\,,\label{FockHthpt}\\
\FockHpt^{(2n-1)}
&:=&c_{n-1}\left(\boldKthpt (\Number-1)^{n-1} +  (\Number-1)^{n-1}\boldKthptbar\right)\,,\label{FockHt:2n}\\
\nonumber\\
\FockHpt^{(2n)}&:=&
\sum\limits_{\nu=0}^n d_{n,\nu}\left(\boldKtpt(\Number-1)^\nu +(\Number-1)^\nu\boldKtptbar\right)\label{FockHt:2n+1}
\end{eqnarray}
\end{subequations}
for $n\geq2$ and with
\begin{eqnarray}\label{eqn:taylor:coeff}
c^{(\l)}_0&:=&1\,,\quad c^{(\l)}_n := \frac{(\l-\frac12)(\l+\frac12)(\l+\frac32)\mycdots(\l+n-\frac32)}{n!}\,,\quad c_n:=c_n^{(0)} \quad (n\geq 1)\,,\\
\label{eqn:taylor:coeff:2}
d_{n,\nu}&:=&\sum\limits_{\l=0}^{\nu} c_\l^{(0)} c_{\nu-\l}^{(0)} c_{n-\nu}^{(\l)}\quad (n\geq \nu \geq0)\,.
\end{eqnarray}
The remainder $\mathcal{R}^{(a)}$ satisfies
\begin{equation}
\norm{\mathcal{R}^{(a)}\bPhi}_\Fock\leq C(a) \norm{(\Number+1)^\frac{a+4}{2}\bPhi}_\Fock
\end{equation}
for some constant $C(a)>0$ and any $\bPhi\in\Fock$.
\end{lem}

The leading order $\FockHopt$ is the Bogoliubov Hamiltonian.
The higher orders $\FockHnpt$ contain  powers of $\Number$, which originate from the Taylor expansions of the square roots in $\FockHpt$. 

The operator $\FockHpt$ preserves the truncation of $\FN$ (see Lemma \ref{lem:HN:preserves:truncation}), whereas this property is lost upon  expansion of the square roots. For example, the second term of $\FockHopt$ acting on the $N$-particle sector creates two new particles, resulting in a non-zero $(N+2)$-component.
The single ``fluxes'' out of the truncated Fock space cancel with the remainder, such that the whole expression is truncation preserving.
Moreover, the effect is very small since the occupation of the $N$-particle sector is negligible for large $N$ (see Lemma~\ref{lem:moments:Chil:N}). \medskip

\subsection{Assumptions}
In this section, we present and discuss our assumptions on the initial $N$-body wave function.

\begin{assumption}\label{ass:initial:data}
Let Assumption \ref{assumption:v} hold and let $\tilde{a}\in\N_0$. Let $\PsiN(0)\in \fH^N_\sym$, define $\ChiN(0)=\UNpz\PsiN(0)$, and assume that there exists a constant $C(\tilde{a}) > 0$ such that
\begin{equation}\label{eqn:initial:expansion}
\left\|\ChiN(0)-\sum\limits_{\l=0}^{\tilde{a}} \lN^\frac{\l}{2}\Chil(0)\right\|_{\FN}\leq C(\tilde{a}) \, \lN^{\frac{\tilde{a}+1}{2}}\,,
\end{equation}
where the functions $\Chil(0)$ are defined as follows:
\begin{itemize}
\item  
Let $\tilde{\nu}\in\N_0$, let  $\BogUz$  be a Bogoliubov transformation on $\Fpz$, and let $\{f_j\}_{j=1}^{\tilde{\nu}}\subset\{\pz\}^\perp$ be some orthonormal system (or the empty set for $\tilde{\nu}=0$). Define
\begin{equation}\label{def:Chi0:0}
\Chi_0(0) : = \BogUz  a^{\dagger}\big(f_1 \big) \,\mycdots\, a^{\dagger}\big(f_{\tilde{\nu}}\big) \,|\Omega\rangle\,.
\end{equation}
\item 
For $1\leq\l\leq \tilde{a}$, define iteratively 
\begin{eqnarray}
\Chil(0)
&:=&\sum\limits_{n=1}^\l \;
\sum\limits_{\substack{0\leq m\leq n+2\\m+n\text{ even}}}\;\,
\sum\limits_{\mu=0}^m
\int\dx^{(\mu)}\dy^{(m-\mu)}
\mathfrak{a}^{(\l)}_{n,m,\mu}\big(x^{(\mu)};y^{(m-\mu)}\big) \nonumber\\
&&\hspace{5cm} \times \ad_{x_1}\,\mycdots \ad_{x_\mu} a_{y_1}\,\mycdots a_{y_{m-\mu}}\, \Chi_{\l-n}(0)\,,\label{def:Chil:0} 
\end{eqnarray}
where $\mathfrak{a}^{(\l)}_{n,m,\mu}\big(x^{(\mu)};y^{(m-\mu)}\big)$ are the (Schwartz) kernels of some $N$-independent operators.
\end{itemize}
\end{assumption}

We stated \eqref{def:Chil:0} in this iterative form for later convenience (in particular, this simplifies the proof of Corollary \ref{thm:even_more_explicit_form}). By solving the iteration and bringing the result into normal order, one can show that \eqref{def:Chil:0}  is equivalent to the statement that
\begin{equation}\label{eqn:Chil:0:normal:order}
\Chil(0)
=
\sum\limits_{\substack{0\leq m\leq 3\l\\m+\l\text{ even}}}\;\,
\sum\limits_{\mu=0}^m
\int\dx^{(\mu)}\dy^{(m-\mu)}
\tilde{\mathfrak{a}}^{(\l)}_{m,\mu}\big(x^{(\mu)};y^{(m-\mu)}\big)\ad_{x_1}\,\mycdots \ad_{x_\mu} a_{y_1}\,\mycdots a_{y_{m-\mu}}\, \Chi_{0}(0)
\end{equation}
for some bounded operators $\tilde{\mathfrak{a}}^{(\l)}_{m,\mu}:\fH^{m-\mu}\to\fH^\mu$.

Since the operators $\mathfrak{a}^{(\l)}_{n,m,\mu}$ are $N$-independent, the only $N$-dependence in the initial state is due to the powers of $\lN^{1/2}$ in the series expansion. While $\Chio(0)$ is normalized, this is not the case for $\Chil(0)$ for $\l\geq 1$, hence $\PsiN(0)$ need not be normalized. To recover a normalized initial $N$-body wave function $\PsiN(0)$, it suffices to divide $\ChiN(0)$ and all corrections $\Chil(0)$ by a normalization factor $\norm{\ChiN(0)}_{\FN}=\norm{\ChiN(t)}_{\FN}$. This situation is often encountered in perturbation theory (see, e.g., \cite{sakurai}).

Our analysis generalizes to the case where $\Chio(0)$ is given as a linear combination of Bogoliubov transformed states with different particle numbers $\tilde{\nu}$. 
We refrain from including such initial states in Assumption \ref{ass:initial:data} to simplify the formulas (especially Corollary \ref{lem_compute_correlation_functions}).\medskip

In \cite{spectrum}, it is shown that Assumption~\ref{ass:initial:data} is satisfied if $\PsiN(0)$ is the ground state  or a suitable low-energy eigenstate of $\HN_\trap$ as in \eqref{H:N:trap}. We formulate the precise statement and discuss degeneracies and an example in Appendix~\ref{appendix:ini_data}.

By Assumption \ref{ass:initial:data},  finite moments of $\Number$ with respect to $\Chil(0)$ and $\ChiN(0)$ are bounded uniformly in $N$ (see Section \ref{subsec:proofs:cor:ass} for a proof):

\begin{lem}\label{cor:ass}
Let Assumption  \ref{ass:initial:data} hold for some $\tilde{a}\in\N_0$ and denote $\ChiN(0)=\UNpz\PsiN(0)$.
\corit{
\item \label{cor:ass:moments_l}
Then for all $b\geq 0$ and $0\leq\l\leq\tilde{a}$ there exists a constant  $C(\l,b) > 0$ such that
\begin{equation}\label{eqn:cor:ass:moments_l}
\lr{\Chil(0),\left(\Number+1\right)^{b}\Chil(0)}_{\Fock}\leq C(\l,b)\,.
\end{equation}
\item \label{cor:ass:moments_leqN}
For all $0 \leq b \leq \tilde{a}+1$ there exists a constant $C(b) > 0$ such that
\begin{equation}
\lr{\ChiN(0), \left(\Number+1\right)^{b}\ChiN(0)}_{\FN}\leq C(b)\,.
\end{equation}
}
\end{lem}

\section{Results: Dynamical Perturbation Theory}\label{sec:results:PT}

\subsection{Norm Approximation to Any Order}
\label{subsec:results:explicit}
By Assumption \ref{ass:initial:data}, the initial excitation vector $\ChiN(0)\oplus0$ admits an expansion in powers of $\lN^{1/2}$. With the formal ansatz
\begin{equation}\label{eqn:ansatz:ChiN(t)_later}
\ChiN(t)\oplus0=\sum\limits_{\l=0}^\infty \lN^{\frac{\l}{2}} \Chil(t)\,,
\end{equation}
the Schrödinger equation \eqref{eqn:SE:Fock} leads to the set of equations
\begin{eqnarray}\label{eqn:differential:form:Chil}
\i\partial_t\Chil(t)&=&\FockHopt\Chil(t)+\sum\limits_{n=1}^\l\FockHnpt\Chi_{\l-n}(t)\,.
\end{eqnarray}
The decomposition of $\ChiN(t)\oplus 0$ into the wave functions $\Chil(t)$ is according to orders in $\lN^{1/2}$ and does not relate to the number of excitations contained in $\Chil(t)$, as $\Chil(t)$ is not necessarily a state with a fixed particle number. Due to the inhomogeneity in \eqref{eqn:differential:form:Chil}, the norm of $\Chil(t)$ changes with time for $\l \geq 1$. However, the norm of the perturbation series is conserved in any order, i.e., $\sum_{m = 0}^{\l} \langle \Chi_{\l-m}(t), \Chi_m(t) \rangle$ does not change in time.

For $\l=0$, \eqref{eqn:differential:form:Chil} recovers the Bogoliubov equation \eqref{eqn:Bogoliubov:equation}. 
The time evolution generated by $\FockHopt$ is well-posed and acts as a Bogoliubov transformation $\BogUts$ on $\Fock$. The corresponding Bogoliubov map $\BogV(t,s)$ on $\fH\oplus\fH$ is determined by the differential equation
\begin{equation}\label{eqn:V(t,s):early}
\i\partial_t\BogV(t,s)=\mathcal{A}(t)\BogV(t,s)\,, \qquad
\BogV(s,s)=\id
\end{equation}
with
\begin{equation}
\BogV(t,s)=\begin{pmatrix}
U_{t,s} &\Vbar_{t,s} \\ V_{t,s} & \Ubar_{t,s} \end{pmatrix}\,,\qquad
\mathcal{A}(t)=\begin{pmatrix} 
\hpt+\Kopt& -\Ktpt \\ 
\overline{\Ktpt}& -\left(\hpt+\overline{\Kopt}\right) \end{pmatrix} \,,
\end{equation}
see Lemma~\ref{lem:time:dep:BT}. In Section \ref{subsec:Bogoliubov}, we give an overview of Bogoliubov transformations and collect the corresponding rigorous results.
Written in integral form, \eqref{eqn:differential:form:Chil} motivates the following definition:

\begin{definition}\label{def:Chil}
Let $\BogV(t,s)$ be the solution of \eqref{eqn:V(t,s):early}, denote by $\BogUts$ the corresponding Bogoliubov transformation on $\Fock$, and define $\FockHnpt$ as in \eqref{FockHnpt}. For any $\l\in\N_0$, we define iteratively
\begin{equation}\label{eqn:int:form:Chil}
\Chil(t): = \BogUtz\Chil(0)-\i\sum\limits_{n=1}^\l\,\int\limits_0^t\BogUts\, \FockHnps\,\Chi_{\l-n}(s)\ds \,.
\end{equation}
\end{definition}

By unitarity of $\BogUts$, this is equivalent to the formula
\begin{equation}\label{eqn:int:form:Chil:H:tilde}
\Chil(t) = \BogUtz\Chil(0) - \i \sum\limits_{n=1}^\l \int\limits_0^t \tilde{\mathbb{H}}^{(n)}_{t,s}\, \mathcal{U}_{\BogV(t,s)} \Chi_{\l-n}(s) \ds,
\end{equation}
where
\begin{equation}\label{definition_of_Bog_trafo_Hns}
\tilde{\mathbb{H}}^{(n)}_{t,s} := \BogUts\FockH^{(n)}_\ps\BogUts^*\,.
\end{equation}
Iterating \eqref{eqn:int:form:Chil:H:tilde} yields the following result:

\begin{proposition}\label{thm:duhamel}
Let $\l\in\N_0$, let Assumption \ref{assumption:v} hold and let $\Chi_n(0)\in\Fpz\cap\D(\Number^{3(\l-n)/2})$ for $0\leq n\leq\l$. Then $\Chil(t)\in\Fpt$ and
\begin{eqnarray}
\Chil(t) &=& \BogUtz \Chil(0) \nonumber\\
&&+ \sum_{n=0}^{\l-1} \sum_{m=1}^{\l-n} \sum_{\substack{\bj \in \N^m \\ |\bj| = \l - n}} (-\i)^m \int\limits_0^t \ds_1 \int\limits_0^{s_1} \ds_{2}\, \mycdots\hspace{-5pt} \int\limits_0^{s_{m-1}} \ds_m \, \tilde{\mathbb{H}}^{(j_1)}_{t,s_1} \mycdots \,\tilde{\mathbb{H}}^{(j_m)}_{t,s_m} \, \BogUtz \Chi_n(0)\,,\qquad\label{solution_chil_integral_form}
\end{eqnarray}
where $\tilde{\mathbb{H}}^{(n)}_{t,s}$ from \eqref{definition_of_Bog_trafo_Hns} can be computed as
\begin{equation}\label{eqn:tilde:H}
\tilde{\mathbb{H}}^{(n)}_{t,s} = \sum\limits_{\substack{2\leq p\leq n+2\\p+n\text{ even}}} \sum\limits_{\bj\in\{-1,1\}^p} \int\dx^{(p)} \mathfrak{A}^{(\bj)}_{n,p}(t,s;x^{(p)})\, \asjo_{x_1}\,\mycdots\,a^{\sharp_{j_p}}_{x_p}\,.
\end{equation}
The coefficients $\mathfrak{A}^{(\bj)}_{n,p}$ are given in terms of the kernels of $\Kopt$ to $\Kfpt$ and the matrix entries $U_{t,s}$ and $V_{t,s}$ of $\BogV(t,s)$. They are explicitly stated in \eqref{eqn:A:coeff}. 
\end{proposition}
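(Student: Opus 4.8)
The plan is to prove \eqref{solution_chil_integral_form} by induction on $\l$, iterating the integral identity \eqref{eqn:int:form:Chil:H:tilde}, and then to read off the explicit form \eqref{eqn:tilde:H} of $\tilde{\mathbb{H}}^{(n)}_{t,s}$ from the fact that $\BogUts$ implements a Bogoliubov transformation on the creation and annihilation operators. For $\l=0$ the iterated sum in \eqref{solution_chil_integral_form} is empty and \eqref{eqn:int:form:Chil:H:tilde} reads $\Chi_0(t)=\BogUtz\Chi_0(0)$, which is the base case. For the induction step I would substitute the induction hypothesis for each $\Chi_{\l-n}(s)$, $1\le n\le\l$, into the right-hand side of \eqref{eqn:int:form:Chil:H:tilde}. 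The algebraic input is the cocycle relation $\BogV(t,s)\BogV(s,r)=\BogV(t,r)$ of the Bogoliubov map (hence $\mathcal{U}_{\BogV(t,s)}\mathcal{U}_{\BogV(s,r)}=\mathcal{U}_{\BogV(t,r)}$), which combined with the definition \eqref{definition_of_Bog_trafo_Hns} gives the commutation rule $\BogUts\,\tilde{\mathbb{H}}^{(j)}_{s,r}=\tilde{\mathbb{H}}^{(j)}_{t,r}\,\BogUts$. Using it, I push $\BogUts$ to the right through all the factors $\tilde{\mathbb{H}}^{(j_i)}_{s,s_i}$ produced by the induction hypothesis (turning their second time argument from $s$ into $t$) and absorb the leftover $\BogUts\,\mathcal{U}_{\BogV(s,0)}$ into $\BogUtz$. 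Collecting terms, the piece in which $\Chi_{\l-n}(s)$ is not expanded further produces the $m=1$ terms of \eqref{solution_chil_integral_form} (with $\bj=(\l-n)$), while the expanded pieces, after prepending the outer operator as $\tilde{\mathbb{H}}^{(j_1)}_{t,s_1}$ and relabeling the nested time integrals, produce the $m\ge2$ terms. A short bookkeeping argument then shows that every composition $\bj=(j_1,\dots,j_m)\in\N^m$ of $\l-n$ is generated exactly once, with sign $(-\i)^m$ and the correctly ordered integration domain $0\le s_m\le\cdots\le s_1\le t$; this is \eqref{solution_chil_integral_form}.

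To make sense of the multiple integrals and to show $\Chil(t)\in\Fpt$, I would establish relative bounds of the form $\|\tilde{\mathbb{H}}^{(n)}_{t,s}\bPhi\|_\Fock\le C(t)\,\|(\Number+1)^{(n+2)/2}\bPhi\|_\Fock$. These follow from \eqref{FockHnpt}: each $\FockHnps$ is a finite sum of monomials in creation and annihilation operators of degree at most $n+2$ (the prefactors $(\Number-c)^\nu$ contributing degree $2\nu$, with $2\nu\le n-1$ for odd $n$ and $2\nu\le n$ for even $n$), and conjugation by $\BogUts$ changes number-operator moments only by time-dependent constants, $\BogUts\,(\Number+1)^k\,\BogUts^*\le C(t,s)^k\,(\Number+1)^k$ (Section~\ref{subsec:Bogoliubov}). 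Chaining these bounds over the $m$ factors of a term of \eqref{solution_chil_integral_form} costs $(\Number+1)^{(|\bj|+2m)/2}$ applied to $\BogUtz\Chi_n(0)$; since $|\bj|=\l-n$ and $m\le\l-n$, this is controlled by $(\Number+1)^{3(\l-n)/2}$, which is exactly why the hypothesis $\Chi_n(0)\in\mathcal{D}(\Number^{3(\l-n)/2})$ suffices, once one also uses that $\BogUtz$ preserves these domains. Continuity of $s\mapsto\tilde{\mathbb{H}}^{(n)}_{t,s}$ on the corresponding scale of Fock spaces, inherited from the continuity of $\BogV(t,s)$ and $\ps$, makes the nested Bochner integrals well defined; and $\Chil(t)\in\Fpt$ follows inductively, because the kernels in $\FockHnps$ carry $\qps$-projections and $\Number$ preserves $\Fps$, so $\FockHnps:\Fps\to\Fps$, while $\BogUts:\Fps\to\Fpt$.

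Finally, for \eqref{eqn:tilde:H} I would use the Bogoliubov transformation rule: conjugation by $\BogUts$ replaces $\ad_x$ and $a_x$ by linear combinations of $\{a_y,\ad_y\}_y$ with kernels built from the matrix entries $U_{t,s},V_{t,s}$ of $\BogV(t,s)$. Applying this to every monomial of $\FockHnps$ — including the quadratic operator $\Number$ hidden inside the $(\Number-c)^\nu$ prefactors, whose conjugate is again quadratic — and then normal-ordering by means of the canonical commutation relations \eqref{eqn:CCR} yields a finite sum of normal-ordered monomials $\asjo_{x_1}\cdots a^{\sharp_{j_p}}_{x_p}$. Since conjugation and normal-ordering neither change the parity of the degree nor raise it, the surviving degrees $p$ obey $p\le n+2$ and $p\equiv n\pmod2$; reading off the coefficients from the kernels of $\Kops$ through $\Kfps$ and the entries $U_{t,s},V_{t,s}$ produces exactly the expressions $\mathfrak{A}^{(\bj)}_{n,p}$ of \eqref{eqn:A:coeff}. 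I expect the main work to lie in this last step — carrying out the normal-ordering cleanly enough to obtain closed formulas for the $\mathfrak{A}^{(\bj)}_{n,p}$ — and in the preceding step, namely making the relative number-operator estimates uniform and sharp enough that the moment $3(\l-n)/2$ is precisely what is consumed; the combinatorial induction is routine once the commutation rule for $\BogUts$ and $\tilde{\mathbb{H}}$ is available.
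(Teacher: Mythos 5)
Your proposal reproduces the paper's proof in all essentials: the same cocycle-derived commutation rule $\BogUts\,\tilde{\mathbb{H}}^{(j)}_{s,r}=\tilde{\mathbb{H}}^{(j)}_{t,r}\,\BogUts$ drives the iteration of \eqref{eqn:int:form:Chil:H:tilde}, the same combination of the Bogoliubov-conjugated number-operator bound with the relative bounds $\|\FockHnps\bPhi\|\lesssim\|(\Number+1)^{(n+2)/2}\bPhi\|$ gives the moment budget $m+|\bj|/2\leq 3(\l-n)/2$, and the same invariance arguments ($\FockHnps:\Fps\to\Fps$, $\BogUts:\Fps\to\Fpt$) yield $\Chil(t)\in\Fpt$. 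The one deviation is in the last step: the paper does \emph{not} normal-order after conjugating by $\BogUts$ --- formula \eqref{eqn:tilde:H} sums over all sign patterns $\bj\in\{-1,1\}^p$ with the operators kept in their original order, which is exactly why $p$ ranges only over the monomial degrees $2\leq p\leq n+2$ of $\FockHnps$; if you normal-order via \eqref{eqn:CCR} you will additionally pick up contraction terms of degree $p\in\{0,1\}$ and arrive at coefficients different from \eqref{eqn:A:coeff}, so you should omit that step.
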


We postpone the proof of Proposition \ref{thm:duhamel} to Section \ref{subsec:proofs:duhamel}.
The addition of sufficiently many corrections $\lN^{\l/2}\Chil(t)$ approximates the  excitation vector $\ChiN(t)$ in norm to arbitrary precision. This is our main result:

\begin{theorem}\label{thm:norm:approx}
Let $a\in\N_0$, let Assumption \ref{ass:initial:data} hold for $\tilde{a}=a$ and denote $\ChiN(t)=\UNpt\PsiN(t)$.
For $\Chil(t)$ as in Definition \ref{def:Chil}, there exists a constant $C(a)>0$ such that
\begin{equation}\label{eqn:thm:norm:approx}
\Big\|\ChiN(t)-\sum\limits_{\l=0}^a \lN^{\frac{\l}{2}}\Chil(t)\Big\|_{\FN}\leq \e^{C(a) t} \lN^{\frac{a+1}{2}}
\end{equation}
for all $t\in \R$ and sufficiently large $N$. Consequently, the solution $\PsiN(t)$ of \eqref{SE} is approximated with increasing accuracy by the sequence $\left\{\PsiNl(t)\right\}_\l\subset\fH^N$ of $N$-body wave functions
\begin{equation}\label{eqn:PsiN_l}
\PsiNl(t):=\sum\limits_{k=0}^N\pt^{\otimes(N-k)}\otimes_s\chi_\l^{(k)}(t)\,,
\end{equation}
i.e., 
\begin{equation}\label{eqn:thm:norm:approx_Psi}
\Big\|\PsiN(t)-\sum\limits_{\l=0}^a\lN^{\frac{\l}{2}}\PsiNl(t)\Big\|_{\fH^N}\leq \e^{C(a)t} \lN^{\frac{a+1}{2}}\,
\end{equation}
for sufficiently large $N$.
\end{theorem}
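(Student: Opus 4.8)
The plan is to prove the bound \eqref{eqn:thm:norm:approx} for the excitation vectors by a Gr\"onwall argument applied to the difference between $\ChiN(t)$ and the truncated perturbation series, and then to transfer the estimate to $\PsiN(t)$ using the fact that the map $\UNpt$ is a partial isometry. Let me set $\Chi_{\leq a}^N(t) := \sum_{\l=0}^a \lN^{\l/2}\Chil(t)$ and $R^N(t) := \ChiN(t) - \Chi_{\leq a}^N(t)\oplus 0$. The first step is to derive a differential equation for $R^N(t)$. Since $\ChiN(t)$ solves \eqref{eqn:SE:Fock} with Hamiltonian $\FockHNpt = \FockHpt|_{\FNpt}$, and each $\Chil(t)$ solves \eqref{eqn:differential:form:Chil} with the $N$-independent truncated coefficients, subtracting and using Lemma~\ref{lem:expansion:HNpt} to write $\FockHpt = \sum_{n=0}^{a}\lN^{n/2}\FockHnpt + \lN^{(a+1)/2}\mathcal{R}^{(a)}$ should produce
\begin{equation*}
\i\partial_t R^N(t) = \FockHpt R^N(t) + \lN^{\frac{a+1}{2}} E^N(t),
\end{equation*}
where the error $E^N(t)$ collects (i) the remainder $\mathcal{R}^{(a)}$ acting on $\ChiN(t)$, (ii) the "cross terms" $\lN^{n/2}\FockHnpt\Chi_m(t)$ with $n+m > a$ that appear when one expands $\FockHpt\,\Chi_{\leq a}^N$ but are not present in the equations \eqref{eqn:differential:form:Chil}, and (iii) the truncation defect coming from the difference between $\FockHNpt$ and the untruncated $\FockHpt$ (the flux out of $\FN$, controlled by Lemma~\ref{lem:HN:preserves:truncation} and the smallness of the $N$-particle sector occupation, Lemma~\ref{lem:moments:Chil:N}). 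One must be careful that $\ChiN(t)\oplus 0$ is only in $\FN$, so $\FockHpt$ and $\FockHNpt$ differ; the cleanest route is to keep everything on the full Fock space $\Fock$ and absorb the truncation mismatch into $E^N(t)$.

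The second step is the Gr\"onwall estimate. Since $\FockHpt$ is self-adjoint on (a suitable domain in) $\Fock$, $\frac{d}{dt}\norm{R^N(t)}^2 = 2\lN^{(a+1)/2}\Im\lr{R^N(t), E^N(t)}$, hence $\frac{d}{dt}\norm{R^N(t)} \leq \lN^{(a+1)/2}\norm{E^N(t)}$. It then remains to bound $\norm{E^N(t)}$ by $C(a)\bigl(\norm{R^N(t)} + \lN^{1/2}\bigr)$ or, more precisely, by something of the form $C(a)(1+ \text{moments})$. The remainder term contributes $\norm{\mathcal{R}^{(a)}\ChiN(t)}\ls C(a)\norm{(\Number+1)^{(a+4)/2}\ChiN(t)}$ by the bound in Lemma~\ref{lem:expansion:HNpt}; this is exactly why Assumption~\ref{assumption:moments_l} is invoked with $\tilde b = 4a+4$ — one needs uniform-in-$N$ control of $\Number$-moments of order up to roughly $a+4$ with respect to both $\Chil(t)$ and $\ChiN(t)$, which is supplied by Lemma~\ref{lem:moments} (the consequence of Assumptions~\ref{assumption:moments_l} and \ref{assumption:moments_leqN}, the latter following from \ref{assumption:moments_l} and \ref{assumption:norm} as in Remark~\ref{rem_ass12_imply_3}). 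The cross terms $\FockHnpt\Chi_m(t)$ with $n+m>a$ are $N$-independent vectors with finite $\Number$-moments, so they contribute an $O(1)$ constant. Each $\lN^{n/2}$ prefactor with $n\geq a+1$ pairs with the overall $\lN^{(a+1)/2}$; the worst case is $n=a+1$ giving exactly the claimed order. Combining, $\norm{R^N(t)}\leq \norm{R^N(0)} + \int_0^t \lN^{(a+1)/2} C(a)(\norm{R^N(s)}/\lN^{(a+1)/2} + 1)\,ds$, and with $\norm{R^N(0)}\leq C(a)\lN^{(a+1)/2}$ from Assumption~\ref{assumption:norm}, Gr\"onwall gives $\norm{R^N(t)}\leq \e^{C(a)t}\lN^{(a+1)/2}$.

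The third step is routine: by \eqref{map:U}, $\UNpt$ is a partial isometry with $\UNpt\PsiN(t) = \ChiN(t)$ and $\UNpt\PsiNl(t) = (\Chil(t))^{(k)}$ restricted to $k\leq N$, i.e. $\UNpt\PsiNl(t) = \Chil(t)\big|_{\FN}$ up to the $>N$ tail. Hence $\norm{\PsiN(t) - \sum_{\l=0}^a \lN^{\l/2}\PsiNl(t)}_{\fH^N} = \norm{\ChiN(t) - \sum_{\l=0}^a \lN^{\l/2}\Chil(t)\big|_{\FN}}_{\FN} \leq \norm{R^N(t)}_{\FN} + \sum_{\l=0}^a \lN^{\l/2}\norm{\Chil(t)\big|_{\Fock^{>N}}}$, and the tail terms are $O(\lN^\infty)$ by the $\Number$-moment bounds on $\Chil(t)$ (Lemma~\ref{lem:moments:Chil:N}), so \eqref{eqn:thm:norm:approx_Psi} follows from \eqref{eqn:thm:norm:approx}.

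The main obstacle I expect is the careful bookkeeping of the error term $E^N(t)$ — in particular disentangling the three sources above and verifying that the truncation mismatch between $\FockHNpt$ and $\FockHpt$ is genuinely higher order. This is where the subtle point raised in the text after Lemma~\ref{lem:expansion:HNpt} bites: individual terms like $\boldKtpt$ in $\FockHopt$ push mass out of $\FN$, and only in the full series do these fluxes cancel; so one cannot naively equate $\FockHpt$ acting on a truncated vector with $\FockHNpt$ acting on it. The resolution is that the out-of-truncation components are weighted by the occupation of the top sector, which is super-polynomially small in $N$ for states with bounded $\Number$-moments (Lemma~\ref{lem:moments:Chil:N}), so it is dominated by $\lN^{(a+1)/2}$ for $N$ large. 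A secondary technical point is ensuring the Gr\"onwall argument is run in a norm where $\FockHpt$ is self-adjoint and the moment propagation estimates of Lemma~\ref{lem:moments} hold simultaneously; this is handled by working on the dense domain $\mathcal{D}(\Number^{(a+4)/2})\cap\mathcal{D}(\FockHpt)$ and using the a priori moment bounds to justify all manipulations.
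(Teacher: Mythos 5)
Your overall strategy --- a Gr\"onwall argument for the difference of excitation vectors, the Taylor expansion of $\FockHpt$ from Lemma~\ref{lem:expansion:HNpt}, and the transfer to $\PsiN(t)$ via the partial isometry $\UNpt$ --- is the same as the paper's. However, your bookkeeping of the error term has a genuine gap which the paper's rearrangement is specifically designed to avoid. You let the Taylor remainder act on the full solution, estimating $\norm{\mathcal{R}^{(a)}\ChiN(t)}\ls\norm{(\Number+1)^{\frac{a+4}{2}}\ChiN(t)}$ and claiming that this moment is uniformly bounded in $N$ via Remark~\ref{rem_ass12_imply_3} and Lemma~\ref{lem:moments:ChilN:full:evolution}. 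Under the hypotheses of the theorem this is not available: Remark~\ref{rem_ass12_imply_3} produces a factor $(N+1)^{\tilde{c}/2}N^{-(\tilde{a}+1)/2}$, so it yields Assumption~\ref{assumption:moments_leqN} with parameter $\tilde{c}$ only when $\tilde{a}\geq\tilde{c}-1$. With $\tilde{a}=a$, as assumed in Theorem~\ref{thm:norm:approx}, one controls moments of $\ChiN(0)$ (hence of $\ChiN(t)$) only up to order $(a+1)/2$, whereas you need order $(a+4)/2$; Assumption~\ref{assumption:norm} bounds the difference $\ChiN(0)-\sum_\l\lN^{\l/2}\Chil(0)$ only in norm, and this difference could sit entirely in the top sectors, making $\norm{(\Number+1)^{\frac{a+4}{2}}\ChiN(0)}\sim N^{3/2}$. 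Your remainder term would then contribute $\lN^{\frac{a+1}{2}}\cdot N^{3/2}=\lN^{\frac{a-2}{2}}$, destroying the claimed rate. The fix is the algebraic rearrangement the paper performs: with $\tilde{\Chi}_a(t):=\ChiN(t)\oplus0-\sum_{\l=0}^a\lN^{\l/2}\Chil(t)$ one has
\begin{equation*}
\i\partial_t\tilde{\Chi}_a(t)
=\FockHpt\tilde{\Chi}_a(t)+\sum_{\l=0}^a\lN^{\frac{\l}{2}}\Big(\FockHpt-\sum_{n=0}^{a-\l}\lN^{\frac{n}{2}}\FockHnpt\Big)\Chil(t)
=\FockHpt\tilde{\Chi}_a(t)+\lN^{\frac{a+1}{2}}\sum_{\l=0}^a\mathcal{R}^{(a-\l)}\Chil(t)\,,
\end{equation*}
so the remainders only ever act on the $N$-independent corrections $\Chil(t)$, whose $(\Number+1)^{\frac{a-\l+4}{2}}$-moments are controlled by Assumption~\ref{assumption:moments_l} alone through Lemma~\ref{lem:moments:Chil} (this is precisely why $\tilde{b}=4a+4$ suffices and why Assumption~\ref{assumption:moments_leqN} is not among the hypotheses of this theorem). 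Note also that this rearrangement absorbs your ``cross terms'' automatically.

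Two smaller points. Your error source (iii), the truncation defect between $\FockHNpt$ and $\FockHpt$, is vacuous: by Lemma~\ref{lem:HN:preserves:truncation}, $\FockHpt\FN\subseteq\FN$, so $(\FockHNpt\ChiN(t))\oplus0=\FockHpt(\ChiN(t)\oplus0)$ holds exactly; the individual $\FockHnpt$ do leak out of $\FN$, but the corrections $\Chil(t)$ are defined on the full Fock space anyway, so no flux needs to be controlled and Lemma~\ref{lem:moments:Chil:N} is not needed here. Likewise, in the transfer to $\PsiN(t)$ no tail estimate is required: the norm in \eqref{eqn:thm:norm:approx} is already the $\FN$-norm, and $\UNpt^*$ is isometric on $\FNpt$, so \eqref{eqn:thm:norm:approx_Psi} follows directly.
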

We prove Theorem \ref{thm:norm:approx} in Section \ref{subsec:proofs:norm:approx} via a Gronwall argument.

\begin{remark}
\remit{
\item As explained in the introduction, Theorem \ref{thm:norm:approx} is comparable to the results obtained in \cite{ginibre1980,ginibre1980_2,corr}, all of which correspond to  expansions of the unitary group governing the dynamics of the excitations.
In contrast, we focus on the dynamics of initial data satisfying Assumption \ref{ass:initial:data}. This simplifies the approximation since fewer terms are required at a given order $a$ because the state is expanded simultaneously with the Hamiltonian.

\item More precisely, the error in \eqref{eqn:thm:norm:approx_Psi} is of the form 
$$(C_1(a+1))^{C_2(a+1)^2} \e^{C_3(a+1)^2 t} \lN^\frac{a+1}{2}$$
for some constants $C_1,C_2,C_3>0$. Hence, \eqref{eqn:thm:norm:approx_Psi} is not 
uniform in $a$, and, in particular, does not imply convergence of the perturbation series for fixed $N$ as $a\to\infty$. Since our expansion is designed as a large $N$ expansion, we  do not expect this to hold true.  We remark that Borel summability was shown for a comparable but combinatorially simpler expansion in~\cite{ginibre1980}.

\item Theorem \ref{thm:norm:approx} holds under more general conditions than  Assumption \ref{ass:initial:data}. More precisely, it suffices to assume that the initial excitation vector $\ChiN(0)$ admits an asymptotic expansion of the form \eqref{eqn:initial:expansion}, where the coefficients $\Chil(0)$ must satisfy \eqref{eqn:cor:ass:moments_l} for $b=4a+4$ but need not be given by \eqref{def:Chi0:0} and \eqref{def:Chil:0}.
\item We expect that our result can be extended to interactions of the form $\lN N^{d\tilde{\beta}}v(N^{\tilde{\beta}} x)$ for sufficiently small values of $\tilde{\beta}$, comparable to the range $\tilde{\beta}\in[0,1/(4d))$ covered in \cite{corr}.
\item We made no effort to use dispersive estimates and always bound the solution $\pt$ of \eqref{hpt} by its $\fH$-norm, which is possible since $\norm{v}_\infty\ls 1$. As a consequence, our estimates work in all dimensions $d\geq 1$ and apply in the focusing as well as in the defocusing case. 
\item In \eqref{eqn:thm:norm:approx}, one could equivalently identify $\ChiN$ with $\ChiN\oplus 0$ and use the norm on the full Fock space since the contribution of the sectors with more than $N$ particles to $\norm{\Chil(t)}_\Fock$ is negligible (Lemma~\ref{lem:moments:Chil:N}).
}
\end{remark}

\subsection{Simplified Form of the Perturbation Series}\label{sec:explicit_form}

By Assumption~\ref{ass:initial:data}, Proposition~\ref{thm:duhamel} yields a formula for the corrections $\Chil(t)$ in terms of only the initial condition $\Chio(0)$. 
\begin{cor}\label{thm:even_more_explicit_form}
Let Assumption~\ref{ass:initial:data} hold for some $\tilde{a}\in\N_0$ and let $\l\leq\tilde{a}$. Then it holds for $\Chil(t)$ as in Definition \ref{def:Chil} that
\begin{equation}\label{eqn:Chil:explicit:final}
\Chil(t)=\sum\limits_{\substack{0\leq n\leq 3\l\\ n+\l\text{ even}}}
\sum\limits_{\bj\in\{-1,1\}^{n}}
\int\dx^{(n)}
\mathfrak{C}^{(\bj)}_{\l,n}(t;x^{(n)})\,\,\asjo_{x_1}\,\mycdots\,a^{\sharp_{j_{n}}}_{x_{n}}
\Chio(t)\,,
\end{equation}
where $\Chio(t)$ is the solution of the Bogoliubov equation \eqref{eqn:Bogoliubov:equation} and where the $N$-independent coefficients $\mathfrak{C}_{\l,n}^{(\bj)}$ are defined in \eqref{eqn:C:coeff}. 
\end{cor}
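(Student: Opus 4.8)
The plan is to combine Proposition~\ref{thm:duhamel} with the structural Assumption~\ref{assumption:dynamic_structure} on the initial data and with the fact that the Bogoliubov time evolution $\BogUtz$ acts as a Bogoliubov transformation on creation and annihilation operators. The starting point is formula \eqref{solution_chil_integral_form}, which expresses $\Chil(t)$ as a sum of multiple time integrals of products $\tilde{\mathbb{H}}^{(j_1)}_{t,s_1}\mycdots\tilde{\mathbb{H}}^{(j_m)}_{t,s_m}$ applied to $\BogUtz\Chi_n(0)$ for $0\le n\le\l-1$. By \eqref{eqn:tilde:H}, each $\tilde{\mathbb{H}}^{(j)}_{t,s}$ is an integral against the coefficients $\mathfrak{A}^{(\bj)}_{j,p}$ of a monomial $\asjo_{x_1}\mycdots a^{\sharp_{j_p}}_{x_p}$ of even total length between $2$ and $j+2$, already written as an explicit polynomial in creation/annihilation operators (with no residual $\Number$-factors, these having been absorbed into the coefficients $\mathfrak{A}$). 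So the integrand is a product of monomials in $a^\sharp$ acting on $\BogUtz\Chi_n(0)$.

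The key step is to push everything down to $\Chio(t)=\BogUtz\Chio(0)$. First, Assumption~\ref{assumption:dynamic_structure}, in the equivalent normal-ordered form given in Remark~\ref{rem:A4:normal:order}, lets me write each $\Chi_n(0)$ ($1\le n\le\l-1$) as a finite sum of monomials in $a^\sharp$ of length at most $3n$ and correct parity, applied to $\Chio(0)$. Hence $\BogUtz\Chi_n(0)$ equals $\BogUtz$ times such monomials times $\Chio(0)$. Now I insert $\BogUtz^*\BogUtz=\id$ between $\BogUtz$ and each creation/annihilation operator and use the Bogoliubov transformation property: $\BogUtz a^\sharp(f)\BogUtz^* $ is again a linear combination of $a(\cdot)$ and $\ad(\cdot)$ with kernels built from the matrix entries $U_{t,0},V_{t,0}$ of $\BogV(t,0)$ (this is exactly the mechanism already used to derive \eqref{eqn:tilde:H} from \eqref{definition_of_Bog_trafo_Hns}). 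Doing this conjugation for every $a^\sharp$ in the monomials coming from $\Chi_n(0)$ turns $\BogUtz\Chi_n(0)$ into a finite sum of monomials in $a^\sharp$ — of the same length and parity — applied to $\Chio(t)$. Similarly, the outer factors $\tilde{\mathbb{H}}^{(j_1)}_{t,s_1}\mycdots\tilde{\mathbb{H}}^{(j_m)}_{t,s_m}$ are already monomials in $a^\sharp$; multiplying them onto the monomials now sitting in front of $\Chio(t)$, and normal-ordering via the CCR \eqref{eqn:CCR}, produces yet more monomials in $a^\sharp$ acting on $\Chio(t)$, with coefficients obtained by contracting the $\mathfrak{A}$'s, the $\tilde{\mathfrak{a}}$'s, the $U_{t,0},V_{t,0}$, and the delta functions from the commutators, and then integrating over $s_1,\dots,s_m$.

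What remains is bookkeeping: collect all resulting monomials of a given length $n$ and sign pattern $\bj\in\{-1,1\}^n$, and \emph{define} $\mathfrak{C}^{(\bj)}_{\l,n}(t;x^{(n)})$ to be the sum of the associated coefficient functions — this is the content of the referenced definition \eqref{eqn:C:coeff}. The parity and range of $n$ claimed in \eqref{eqn:Chil:explicit:final}, namely $0\le n\le 3\l$ with $n+\l$ even, follow by induction on $\l$: the term with $n'$ operators from a factor $\tilde{\mathbb{H}}^{(j)}$ contributes $j'$ with $j'\le j+2$ and $j'+j$ even, the exponents $j_1+\dots+j_m=\l-n$ in \eqref{solution_chil_integral_form} are constrained, and $\Chi_n(0)$ carries at most $3n$ operators with $n$-parity; summing, the maximal count is $3\l$ and the total parity is that of $\l$ (normal-ordering only removes operators in pairs, so it preserves parity). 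Finiteness and $N$-independence of $\mathfrak{C}^{(\bj)}_{\l,n}$ are inherited from those of $\mathfrak{A}^{(\bj)}_{n,p}$, $\tilde{\mathfrak{a}}^{(\l)}_{m,\mu}$, and the entries of $\BogV(t,0)$. That $\Chil(t)\in\Fpt$ was already established in Proposition~\ref{thm:duhamel}.

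I expect the main obstacle to be purely combinatorial: tracking how the conjugation of long operator monomials by $\BogUtz$ interleaves with the normal-ordering contractions, and verifying that every intermediate step stays within $\mathcal{D}(\Number^k)$ for the relevant $k$ so that the manipulations are justified (this is where Assumption~\ref{assumption:dynamic_structure}'s hypothesis $\Chio(0)\in\mathcal{D}(\Number^{3\tilde d/2})$ and the moment bounds of Lemma~\ref{lem:moments} enter). The analytic content is light — boundedness of $v$ and the $\fH^1$-bound on $\varphi(t)$ control all kernels, exactly as in the proof of Proposition~\ref{thm:duhamel} — so the proof is essentially an induction on $\l$ in which the inductive hypothesis \eqref{eqn:Chil:explicit:final} for $\Chi_0(t),\dots,\Chi_{\l-1}(t)$ is fed into \eqref{eqn:int:form:Chil:H:tilde} or \eqref{solution_chil_integral_form}, and the resulting expression is reorganized into the claimed form.
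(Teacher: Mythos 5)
Your proposal is correct and follows essentially the same route as the paper's proof: an induction on $\l$ in which Assumption~\ref{assumption:dynamic_structure} and the inductive hypothesis are fed into the Duhamel formula \eqref{eqn:int:form:Chil:H:tilde}, every $a^\sharp$ is conjugated through the Bogoliubov transformation via \eqref{eqn:trafo:ax}, and the resulting coefficients are collected by length and sign pattern (the paper organizes this into Claims~\ref{claim:1}--\ref{claim:4}). The one superfluous step is your normal-ordering via the CCR: since \eqref{eqn:Chil:explicit:final} sums over arbitrary sign patterns $\bj\in\{-1,1\}^{n}$, the paper simply concatenates the operator monomials without contractions, which is exactly how the coefficients \eqref{eqn:C:coeff} are built.
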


We prove Corollary \ref{thm:even_more_explicit_form} in Section~\ref{subsec:proofs:explicit} by iteratively constructing the coefficients $\mathfrak{C}_{\l,n}^{(\bj)}$. They are given in terms of the kernels $\Kopt$ to $\Kfpt$ and the matrix entries $U_{t,s}$ and $V_{t,s}$ of the solution $\BogV(t,s)$ of \eqref{eqn:V(t,s):early}.
For example, the first order correction to $\Chio(t)$ is given as 
\begin{equation}
\Chi_1(t)
=\left(\sum\limits_{j\in\{-1,1\}}
\int\dx\, \mathfrak{C}^{(j)}_{1,1}(t;x)\,a^{\sharp_j}_{x}\,
+ \sum\limits_{\bj\in\{-1,1\}^3}
\int\dx^{(3)}
\mathfrak{C}^{(\bj)}_{1,3}(t;x^{(3)})\,\asjo_{x_1}\,\asjt_{x_2}\,a^{\sharp_{j_{3}}}_{x_{3}}\right)\Chio(t)
\end{equation}
with
\begin{subequations}
\begin{eqnarray}
\mathfrak{C}_{1,1}^{(j)}(t;x)
&=& \int\dy \left(\mathfrak{a}^{(1)}_{1,1,0}(y)\omega^{(-1,j)}_{t,0}(y;x)
+\mathfrak{a}^{(1)}_{1,1,1}(y)\omega^{(1,j)}_{t,0}(y;x)\right)\,,\\
\mathfrak{C}_{1,3}^{(\bj)}(t;x^{(3)})
&=&\sum\limits_{n=0}^3\int\dy^{(3)}\mathfrak{a}^{(1)}_{1,3,n}(y^{(3)})\prod\limits_{\nu=1}^n\omega_{t,0}^{(1,j_\nu)}(y_\nu;x_\nu)	\prod\limits_{\mu=\nu+1}^3\omega_{t,0}^{(-1,j_\mu)}(y_\mu;x_\mu)\nonumber\\
&& -\i\int\limits_0^t\ds
\int\dy^{(3)}
\Bigg(\Kthps(y^{(3)})\omojo_{t,s}(y_1;x_1)\omojt_{t,s}(y_2;x_2)\omzjth_{t,s}(y_3;x_3) \nonumber\\
&&+\big(\Kthps\big)^*(y^{(3)})\omojo_{t,s}(y_1;x_1)\omzjt_{t,s}(y_2;x_2)\omzjth_{t,s}(y_3;x_3)\bigg)
\end{eqnarray}
\end{subequations}
for $\mathfrak{a}^{(1)}_{n,m,\mu}$ from Assumption \ref{ass:initial:data}.
Here, we define
\begin{subequations}\label{abbrv:omega}
\begin{eqnarray}
\omzz_{t,s}(x;y):= U^*_{t,s}(x;y)\,,&\qquad& \omzo_{t,s}(x;y):= V^*_{t,s}(x;y)\,,\\
\omoz_{t,s}(x;y):= V_{t,s}(y;x)\,,&\qquad& \omoo_{t,s}(x;y):= U_{t,s}(y;x)\,.
\end{eqnarray}
\end{subequations}
Making use of \eqref{def:Chi0:0}, one can equivalently express \eqref{eqn:Chil:explicit:final} as
\begin{equation}
\Chil(t)=\mathcal{U}_{\BogV(t,0)\BogV_0} \sum\limits_{\substack{0\leq n\leq 3\l\\ n+\l\text{ even}}}
\sum\limits_{\bj\in\{-1,1\}^{n}}
\int\dx^{(n+\nu)}
\tilde{\mathfrak{C}}^{(\bj)}_{\l,n}(t;x^{(n+\nu)})\,\,\asjo_{x_1}\,\mycdots\,a^{\sharp_{j_{n}}}_{x_{n}}\,\ad_{x_{n+1}}\,\mycdots\,\ad_{x_{n+\tilde{\nu}}} \,|\Omega\rangle,
\end{equation}
where the coefficients $\tilde{\mathfrak{C}}^{(\bj)}_{\l,n}$ additionally depend on the functions $f_1\mydots f_{\tilde{\nu}}$ from Assumption~\ref{ass:initial:data}. Hence, $\Chil(t)$ is a Bogoliubov transformed sum of states with finitely many particles.

\subsection{Generalized Wick Rule}\label{sec_gen_Wick}

In this section, we study the mixed $n$-point correlation functions 
\begin{equation}
\lrt{a^{\sharp_1}_{x_1}\cdots a^{\sharp_n}_{x_n}}_{\l,k}:=\lr{\Chil(t),a^{\sharp_1}_{x_1}\cdots a^{\sharp_n}_{x_n}\Chik(t)}_{\Fock}.
\end{equation}
For example, in the simplest case where $\Chio(0)$ is quasi-free, Corollary~\ref{thm:even_more_explicit_form} and Wick's rule yield
\begin{eqnarray}
&&\hspace{-0.8cm}\big\langle{a^{\sharp_j}_x}\big\rangle^{(t)}_{0,1}\nonumber\\
&=&\hspace{-5mm}\sum\limits_{m\in\{-1,1\}}\int\dy\,\mathfrak{C}^{(m)}_{1,1}(t;y)\big\langle{a^{\sharp_j}_x a^{\sharp_m}_y}\big\rangle^{(t)}_{0,0}
+\sum\limits_{\bm\in\{-1,1\}^3}\int\dy^{(3)}\mathfrak{C}^{(\bm)}_{1,3}(t;y^{(3)})\nonumber\\
&& \times\bigg(\big\langle{a^{\sharp_j}_x a^{\sharp_{m_1}}_{y_1}}\big\rangle^{(t)}_{0,0}
\big\langle a^{\sharp_{m_2}}_{y_2}a^{\sharp_{m_3}}_{y_3}\big\rangle^{(t)}_{0,0}
+\big\langle{a^{\sharp_j}_x a^{\sharp_{m_2}}_{y_2}}\big\rangle^{(t)}_{0,0}\big\langle{a^{\sharp_{m_1}}_{y_1}a^{\sharp_{m_3}}_{y_3}}\big\rangle^{(t)}_{0,0}
+\big\langle{a^{\sharp_j}_x a^{\sharp_{m_3}}_{y_3}}\big\rangle^{(t)}_{0,0}\big\langle{a^{\sharp_{m_1}}_{y_1}a^{\sharp_{m_2}}_{y_2}}\big\rangle^{(t)}_{0,0}
\bigg)\nonumber\\
&=&\hspace{-5mm}\sum\limits_{\substack{b=2,4\\\bm\in\{-1,1\}^b}}\hspace{-2mm}
\int\dy^{(b)}\mathfrak{C}^{(m_2\mydots m_b)}_{1,b-1}(t;y_2\mydots y_b)\delta(y_1-x)\delta_{j,m_1}
\sum\limits_{\sigma\in P_{b}}\,\prod\limits_{i=1}^{b/2}\big\langle{a_{y_{\sigma(2i-1)}}^{\sharp_{m_{\sigma(2i-1)}}}a_{y_{\sigma(2i)}}^{\sharp_{m_{\sigma(2i)}}}}\big\rangle^{(t)}_{0,0}\quad
\label{eqn:gen:Wick:example}
\end{eqnarray}
and
\begin{equation}
\big\langle{a^{\sharp_{j_1}}_{x_1}a^{\sharp_{j_2}}_{x_2}}\big\rangle^{(t)}_{0,1} =0\,.
\end{equation}
To simplify the notation in \eqref{eqn:gen:Wick:example}, we integrate/sum also over the fixed variable/index $x$/$j$, which is taken into account by the delta distribution/Kronecker delta.
This leads to the following generalization of Wick's theorem, which is proven in Section \ref{subsec:proofs:wick}.

\begin{cor}[Generalized Wick Rule]\label{lem_compute_correlation_functions}
Let Assumption~\ref{ass:initial:data} hold for some $\tilde{a}\in\N_0$ and let $n \in \N$, $k,\l\leq\tilde{a}$ and $\bj \in \{-1,1\}^n$.
\corit{
\item \label{cor:gen:Wick:odd}
If $k+\l+n$ odd, then
\begin{equation}\label{nlk_correlations_odd}
\lrt{a^{\sharp_{j_1}}_{x_1}\cdots a^{\sharp_{j_n}}_{x_{n}}}_{\l,k} = 0 \,.
\end{equation}

\item \label{cor:gen:Wick:even}
Let $\tilde{\nu}=0$ in Assumption \ref{ass:initial:data}, i.e.,  $\Chio(0)$ is quasi-free. Then for $k+\l+n$ even 
\begin{eqnarray}\label{nlk_correlations_even}
&&\hspace{-1cm}\lrt{a^{\sharp_{j_1}}_{x_1}\cdots a^{\sharp_{j_n}}_{x_{n}}}_{\l,k} \nonumber\\
&&\hspace{-0.8cm}=\sum\limits_{\substack{b=n\\\text{even}}}^{n+3(\l+k)} \sum\limits_{\bm\in\{-1,1\}^b}\,
\sum\limits_{\sigma\in P_{b}}\,\prod\limits_{i=1}^{b/2}
\int\dy^{(b)}\mathfrak{D}^{(\bj;\bm)}_{\l,k,n;b}(t;\xn;y^{(b)})\lrt{a_{y_{\sigma(2i-1)}}^{\sharp_{m_{\sigma(2i-1)}}}a_{y_{\sigma(2i)}}^{\sharp_{m_{\sigma(2i)}}}}_{0,0}\qquad
\end{eqnarray}
for $P_{b}$ the set of pairings defined in \eqref{Pairing_set_def} and with
\begin{eqnarray}
\mathfrak{D}^{(\bj;\bm)}_{\l,k,n;b}(t;\xn;y^{(b)})
&:=& \sum\limits_{\substack{q=\max\{0,b-n-3\l\}\\q+k \text{ even}}}^{\min\{3k,b-n\}} 
\overline{\mathfrak{C}^{(-m_{b-n-q}\mydots -m_1)}_{\l,b-n-q}(t; y_{b-n-q}\mydots y_1)} \nonumber\\
&& \times \mathfrak{C}^{(m_{b-q+1}\mydots m_b)}_{k,q}(t;y_{b-q+1}\mydots y_b)\qquad\qquad\nonumber\\
&&\times \prod\limits_{\mu=1}^n\delta(y_{b-n-q+\mu}-x_\mu)\delta_{m_{b-n-q+\mu},\,j_\mu}\,. \label{eqn:D}
\end{eqnarray}
}
\end{cor}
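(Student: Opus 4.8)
\emph{Proof strategy.} The proof is a direct consequence of Corollary~\ref{thm:even_more_explicit_form}, which expresses each correction as a finite sum
$\Chil(t)=\sum_{n,\bj}\int\dx^{(n)}\,\mathfrak{C}^{(\bj)}_{\l,n}(t;x^{(n)})\,\asjo_{x_1}\mycdots a^{\sharp_{j_n}}_{x_n}\Chio(t)$
in which every monomial has length $n$ with $0\le n\le 3\l$ and $n\equiv\l\pmod2$. The plan is to insert this expansion for both $\Chil(t)$ and $\Chik(t)$ into $\lrt{a^{\sharp_{j_1}}_{x_1}\cdots a^{\sharp_{j_n}}_{x_{n}}}_{\l,k}=\scp{\Chil(t)}{a^{\sharp_{j_1}}_{x_1}\cdots a^{\sharp_{j_n}}_{x_{n}}\Chik(t)}$, move the creation/annihilation operators coming from the left factor across the inner product using $(a^{\sharp_j}_x)^*=a^{\sharp_{-j}}_x$ (which reverses their order and flips every sharp), and thereby reduce the whole expression to expectation values in $\Chio(t)$ of products of $b:=n'+n+n''$ operators with prefactor $\overline{\mathfrak{C}^{(\bj')}_{\l,n'}}\,\mathfrak{C}^{(\bj'')}_{k,n''}$, where $n'\equiv\l$ and $n''\equiv k\pmod2$. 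The decisive structural input, from Assumption~\ref{assumption:Chi0} together with the composition rule for Bogoliubov transformations (Section~\ref{subsec:Bogoliubov}), is that $\Chio(t)=\BogUtz\Chio(0)=\mathcal{U}_{\BogV(t,0)\BogV_0}\big(\sum_i c_i\,\ad(f^{(i)}_1)\mycdots\ad(f^{(i)}_{\tilde{\nu}})|\Omega\rangle\big)$ is a Bogoliubov transform of a vector supported on the $\tilde{\nu}$-particle sector; in particular, for $\tilde{\nu}=0$ it is quasi-free.

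\emph{Part (a).} Since Bogoliubov transformations preserve particle-number parity, $\Chio(t)$ is supported on Fock sectors of parity $(-1)^{\tilde{\nu}}$, so by the expansion above $\Chil(t)$ is supported on sectors of parity $(-1)^{\tilde{\nu}+\l}$ while $a^{\sharp_{j_1}}_{x_1}\cdots a^{\sharp_{j_n}}_{x_{n}}\Chik(t)$ is supported on sectors of parity $(-1)^{\tilde{\nu}+k+n}$. When $\l+k+n$ is odd these parities are opposite, so the inner product vanishes; this is \eqref{nlk_correlations_odd}. (Equivalently, each reduced expectation value produced by the strategy carries $b=n'+n+n''\equiv\l+k+n$ operators, and an odd product of creation/annihilation operators maps the fixed sector of $\Chio(0)$, after undoing the Bogoliubov transformation, into sectors of opposite parity.)

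\emph{Part (b).} For $\tilde{\nu}=0$ the state $\Chio(t)$ is quasi-free with vanishing one-point function $\scp{\Chio(t)}{a^{\sharp_j}_x\Chio(t)}=0$, so Wick's rule for quasi-free states (Section~\ref{subsec:Bogoliubov}) gives, for even $b$,
\[
\scp{\Chio(t)}{a^{\sharp_{m_1}}_{z_1}\mycdots a^{\sharp_{m_b}}_{z_b}\Chio(t)}=\sum_{\sigma\in P_b}\prod_{i=1}^{b/2}\scp{\Chio(t)}{a^{\sharp_{m_{\sigma(2i-1)}}}_{z_{\sigma(2i-1)}}a^{\sharp_{m_{\sigma(2i)}}}_{z_{\sigma(2i)}}\Chio(t)}
\]
with $P_b$ the ordered pairings of \eqref{Pairing_set_def}. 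Applying this to the reduced expectation values from the strategy, what remains is bookkeeping. The $b$ operators split into three consecutive blocks: the $n'$ operators originating from $\Chil(t)$ (reversed and sharp-flipped by the adjoint), the $n$ external operators $a^{\sharp_{j_1}}_{x_1}\cdots a^{\sharp_{j_n}}_{x_n}$, and the $n''$ operators from $\Chik(t)$. Naming the variables/indices of the full product $y_1,\mydots,y_b$ and $m_1,\mydots,m_b$, and setting $q:=n''$, $n'=b-n-q$, I would check that the $\l$-block yields $\overline{\mathfrak{C}^{(-m_{b-n-q}\mydots-m_1)}_{\l,b-n-q}(t;y_{b-n-q}\mydots y_1)}$ (the reversal and sign flip accounting for the arguments), the external operators impose $\delta(y_{b-n-q+\mu}-x_\mu)\delta_{m_{b-n-q+\mu},j_\mu}$ for $1\le\mu\le n$, and the $k$-block yields $\mathfrak{C}^{(m_{b-q+1}\mydots m_b)}_{k,q}(t;y_{b-q+1}\mydots y_b)$ — i.e.\ precisely $\mathfrak{D}^{(\bj;\bm)}_{\l,k,n;b}$ of \eqref{eqn:D}. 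The sum over $\bj',\bj''$ becomes $\sum_{\bm\in\{-1,1\}^b}$; the sum over the admissible $n''=q$ (range and parity dictated by Corollary~\ref{thm:even_more_explicit_form}, the $q$-sum being empty and $\mathfrak{D}$ therefore zero for those $b$ in the stated window that do not actually occur) becomes the $q$-sum in \eqref{eqn:D}; and the sum over Wick pairings is $\sum_{\sigma\in P_b}$. This is \eqref{nlk_correlations_even}, with the case $\l=1$, $k=0$, $n=1$ of \eqref{eqn:gen:Wick:example} serving as the prototype computation.

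\emph{Main obstacle.} There is no analytic subtlety: all sums (over $n',n'',b,q,\bm$ and over pairings) are finite, and the hypothesis $\Chio(0)\in\mathcal{D}(\Number^{3\tilde{d}/2})$ entering Corollary~\ref{thm:even_more_explicit_form} makes every expectation value finite, so the rearrangements — reordering via adjoints, Fubini, and the insertion of Wick's rule — are legitimate (the identity being understood at the level of distributional kernels tested against Schwartz functions). The only genuine effort is the combinatorial bookkeeping in part~(b): correctly tracking the reversal and sharp-flip produced by the adjoint, matching the three variable blocks to the index slots in the definition of $\mathfrak{D}$, and verifying that the parity relations $b\equiv\l+k+n$, $n'\equiv\l$, $n''\equiv k\pmod2$ render the summation ranges in \eqref{nlk_correlations_even} and \eqref{eqn:D} mutually consistent, so that every Wick contraction is counted exactly once.
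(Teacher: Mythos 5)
Your proposal is correct and follows essentially the same route as the paper: insert the expansion of Corollary \ref{thm:even_more_explicit_form} for both $\Chil(t)$ and $\Chik(t)$, flip the left-hand operators via $(a^{\sharp_j})^\dagger=a^{\sharp_{-j}}$ to reduce everything to $b$-point functions of $\Chio(t)$, and then apply Wick's rule and reindex over $b$ and $q$ to obtain $\mathfrak{D}$. Your parity-of-sector argument for part (a) is a harmless (indeed slightly cleaner) variant of the paper's appeal to the vanishing of odd-point functions, since it covers the case $\tilde{\nu}>0$ where $\Chio(t)$ is not quasi-free.
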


We stated part (b) only for the case $\tilde{\nu}=0$. If $\tilde{\nu}>0$, one obtains a similar but more complicated formula for $\langle a^{\sharp_1}_{x_1}\cdots a^{\sharp_n}_{x_n} \rangle^{(t)}_{\l,k}$  in terms of the two-point correlation functions of $\BogUz|\Omega\rangle$, where $\BogUz$ denotes the Bogoliubov transformation from Assumption \ref{ass:initial:data}. 
Moreover, one may consider initial data that are finite superpositions of states with
different $\tilde{\nu}$, but in this case the mixed $n$-point correlation functions corresponding to $k+\l+n$ odd do not necessarily vanish.
\medskip

If $\Chio(0)$ is quasi-free, $\langle a^{\sharp_1}_{x_1}\cdots a^{\sharp_n}_{x_n} \rangle^{(t)}_{\l,k}$ is given explicitly in terms of the two-point correlation functions of $\Chi_0(t)$,
\begin{equation}
\gamma_{\Chio(t)}(x,y)=\lr{\Chio(t),\ad_y a_x\Chio(t)}_\Fock\,,\qquad
\alpha_{\Chio(t)}(x,y)=\lr{\Chio(t),a_x a_y\Chio(t)}_\Fock\,,
\end{equation}
which can be obtained from the two-point correlation functions of $\Chio(0)$. Evaluating the action of the Bogoliubov transformation $\BogUtz$ on creation and annihilation operators (see Section \ref{subsec:Bogoliubov}), one computes
\begin{subequations}
\begin{eqnarray}
\gChiot(x,y)
&=& \lr{\Chio(0),\BogUtz^*\,\ad_y\,\BogUtz\BogUtz^*\,a_x\,\BogUtz\Chio(0)} \nonumber\\
&=& \Big(\Vbar_{t,0}\gChioz^T\Vbar^*_{t,0}+U_{t,0}\gChioz U^*_{t,0}-\Vbar_{t,0}\aChioz^*U_{t,0}^*
-U_{t,0}\aChioz\Vbar_{t,0}^*\Big)(x,y) \nonumber\\
&& +\left(\Vbar_{t,0}\Vbar_{t,0}^*\right)(x,y)\,,\label{computing_gamma_t} \\
\aChiot(x,y)
&=& \left(U_{t,0}\aChioz\Ubar_{t,0}^* +\Vbar_{t,0}\aChioz^* V_{t,0}^*
-U_{t,0}\gChioz V_{t,0}^* - \Vbar_{t,0}\gChioz^T\Ubar_{t,0}^*\right)(x,y) \nonumber\\
&& +\left(U_{t,0}V_{t,0}^*\right)(x,y) 
\,,\label{computing_alpha_t} 
\end{eqnarray}
\end{subequations}
where $U_{t,0}$ and $V_{t,0}$ denote the matrix entries of the solution $\BogV(t,0)$ of \eqref{eqn:V(t,s):early}.
Alternatively, one can obtain $\gChiot$ and $\aChiot$ by solving the system of differential equations 
\begin{subequations}\label{eqn:PDE}
\begin{eqnarray}
\i\partial_t\gamma_{\Chio(t)}
&=& \left(\hpt+\Kopt\right)\gamma_{\Chio(t)}-\gamma_{\Chio(t)}\left(\hpt+\Kopt\right) \nonumber\\
&& +\Ktpt\alpha^*_{\Chio(t)}-\alpha_{\Chio(t)}\big(\Ktpt\big)^*,
\label{eqn:gamma:PDE} 
\\
\i\partial_t\alpha_{\Chio(t)}
&=&\left(\hpt+\Kopt\right)\alpha_{\Chio(t)}
 +\alpha_{\Chio(t)}\left(\hpt+\Kopt\right)^T \nonumber\\
&& +\Ktpt+\Ktpt\gamma_{\Chio(t)}^T+\gChiot\Ktpt\,,\label{eqn:alpha:PDE}
\end{eqnarray}
\end{subequations}
which is not restricted to quasi-free initial states $\Chio(0)$  (see \cite[Proposition 4(i)]{nam2015} and \cite[Eq.\ (17b-c)]{grillakis2013}).

\subsection{Perturbation Series for Correlation Functions}
\label{sec:perturbation_theory1}

Theorem~\ref{thm:norm:approx} and Corollary \ref{cor:gen:Wick:odd} imply an approximation of the $n$-point correlation functions of the excitations, which are defined as follows:

\begin{definition}
Let $\Chi\in\Fock$ and $n\in\N$. The $n$-point correlation functions of $\Chi$ are defined as
\begin{equation}
\lr{\Chi,a^{\sharp_1}_{x_1}\cdots a^{\sharp_n}_{x_n}\Chi}_{\Fock}\,,
\end{equation}
where $a^{\sharp_j}\in\{\ad,a\}$ for $j\in\{1\mydots n\}$.
For $\ChiN(t)=\UNpt\PsiN(t)$, we use the short-hand notation
\begin{equation}
\lrt{a^{\sharp_1}_{x_1}\cdots a^{\sharp_n}_{x_n}}_{N}:=\lr{\ChiN(t),a^{\sharp_1}_{x_1}\cdots a^{\sharp_n}_{x_n}\ChiN(t)}_{\FN}\,.
\end{equation}
\end{definition}

Formally, the expansion of $\ChiN(t)\oplus0$ from \eqref{eqn:ansatz:ChiN(t)_later} yields
\begin{eqnarray}
\lrt{a^{\sharp_1}_{x_1}\cdots a^{\sharp_n}_{x_n}}_{N}
& = &\lr{\sum_{\ell=0}^{\infty} \lN^{\frac{\l}{2}} \Chil(t) ,a^{\sharp_1}_{x_1}\cdots a^{\sharp_{n}}_{x_{n}} \sum_{m=0}^{\infty} \lN^{\frac{m}{2}} \Chim(t)}_\Fock \nonumber\\
& =& \sum_{\ell=0}^{a} \lN^{\frac{\l}{2}} \sum_{m=0}^{\ell} \lr{\Chim(t) ,a^{\sharp_1}_{x_1}\cdots a^{\sharp_{n}}_{x_{n}} \Chi_{\ell-m}(t)}_\Fock + \Order\Big(\lN^{\frac{a+1}{2}}\Big).
\end{eqnarray}
By the generalized Wick rule (Corollary \ref{lem_compute_correlation_functions}), all contributions $\langle{a^\sharp_{x_1}\mycdots\,a^\sharp_{x_n}}\rangle^{(t)}_{m,\l-m}$ with $\l+n$ odd vanish. This is made rigorous in the following corollary (see Section \ref{subsec:proofs:corr} for a proof).

\begin{cor}\label{thm:correlation functions_simplified}
Let Assumption \ref{ass:initial:data} hold for some $\tilde{a}\in\N_0$ and let $n,p \in \N_0$ with $n+p\leq \tilde{a}+1$, $t\in\R$ and $B\in\mathcal{L}(\fH^{p},\fH^{n})$. Let $a\leq\frac{\tilde{a}-1}{2}$ if $n+p$ even and $a\leq\frac{\tilde{a}}{2}$ if $n+p$ odd. Then there exists some constant $C(a,n,p)>0$ such that
for $n+p$ even,
\begin{subequations}
\begin{eqnarray}
&&\Bigg| \int\dx^{(n)}\dy^{(p)} B(x^{(n)};y^{(p)}) \Bigg( \lrt{\ad_{x_1}\,\mycdots\,\ad_{x_n}\,a_{y_{1}}\,\mycdots\,a_{y_{p}}}_N \nonumber\\
&&\hspace{4mm} - \sum_{\ell=0}^a \lambda_N^{\ell} \sum_{m=0}^{2\ell} \lrt{\ad_{x_1}\,\mycdots\,\ad_{x_n}\,a_{y_{1}}\,\mycdots\,a_{y_{p}}}_{m,2\ell-m} \Bigg) \Bigg| \;\leq\; \norm{B}_{\mathcal{L}(\fH^p,\fH^n)}\, \e^{C(a,n,p)t}\,\lN^{a+1},\qquad\quad\label{thm_eq_even_corr}
\end{eqnarray}
and for $n+p$ odd,
\begin{eqnarray}
&&\Bigg| \int\dx^{(n)}\dy^{(p)} B(x^{(n)};y^{(p)}) \Bigg( \lrt{\ad_{x_1}\,\mycdots\,\ad_{x_n}\,a_{y_{1}}\,\mycdots\,a_{y_{p}}}_N \nonumber\\
&&\hspace{4mm}- \sum_{\ell=0}^{a-1} \lambda_N^{\l+\frac12} \sum_{m=0}^{2\ell+1} \lrt{\ad_{x_1}\,\mycdots\,\ad_{x_n}\,a_{y_{1}}\,\mycdots\,a_{y_{p}}}_{m,2\ell+1-m} \Bigg) \Bigg| \leq \norm{B}_{\mathcal{L}(\fH^p,\fH^n)}\, \e^{C(a,n,p)t}\,\lN^{a+\frac12},\qquad\quad\label{thm_eq_odd_corr}
\end{eqnarray}
\end{subequations}
where for $a=0$ the sum $\sum_{\ell=0}^{a-1}$ is interpreted as zero.
\end{cor}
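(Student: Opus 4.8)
\emph{Plan and Step 1.} The plan is to obtain the claim as a bookkeeping consequence of feeding the full expansion of Corollary~\ref{thm:correlation functions} into the parity-vanishing part of the generalized Wick rule, Corollary~\ref{cor:gen:Wick:odd}, followed by a re-indexing of the surviving terms. First I would apply Corollary~\ref{thm:correlation functions} with approximation order $2a+1$ in place of $a$. This is admissible under the present hypotheses: Assumption~\ref{assumption:norm} is assumed with $\tilde a = 2a+1$, Assumption~\ref{assumption:moments_leqN} with $\tilde c = n+p$, and Assumption~\ref{assumption:moments_l} with $\tilde\l = 2a+1$ and $\tilde b = \max\{n+p+3(2a+1),\,4(2a+1)+4\}$ is implied by Assumptions~\ref{assumption:dynamic_structure} and \ref{assumption:Chi0} with $\tilde d = 2a+1$, since the states $\Chi_\l(0)$ arise from the Bogoliubov-transformed finite-particle state $\Chio(0)$ by applying finitely many creation/annihilation operators with $N$-independent bounded coefficient operators, which keeps all moments of $\Number$ finite (cf.\ the discussion after Assumption~\ref{assumption:moments_l}). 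Corollary~\ref{thm:correlation functions} then gives, after testing against $B$,
\[
\lrt{\ad_{x_1}\,\mycdots\,a_{y_{p}}}_N = \sum_{\ell=0}^{2a+1} \lN^{\frac{\ell}{2}} \sum_{m=0}^{\ell} \lrt{\ad_{x_1}\,\mycdots\,a_{y_{p}}}_{m,\ell-m} + \Order\big(\lN^{a+1}\big),
\]
with remainder controlled by $\norm{B}_{\mathcal{L}(\fH^p,\fH^n)}\,\e^{C(a,n,p)t}\,\lN^{a+1}$.

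\emph{Steps 2 and 3 (parity and re-indexing).} Next, by Corollary~\ref{cor:gen:Wick:odd} --- whose hypotheses hold with $\tilde d = 2a+1 \geq \max\{m,\ell-m\}$ for all $0\le m\le\ell\le 2a+1$ --- the mixed $(n+p)$-point function $\lrt{\ad_{x_1}\,\mycdots\,a_{y_{p}}}_{m,\ell-m}$ vanishes identically whenever $\ell+n+p$ is odd, so only indices $\ell\equiv n+p\pmod 2$ survive in the double sum. If $n+p$ is even, the surviving indices are $\ell=2\ell'$ with $0\le\ell'\le a$ (the term $\ell=2a+1$ drops out), and the substitution $\ell=2\ell'$ turns the sum into $\sum_{\ell'=0}^a \lN^{\ell'}\sum_{m=0}^{2\ell'}\lrt{\ad_{x_1}\,\mycdots\,a_{y_{p}}}_{m,2\ell'-m}$, which is precisely \eqref{thm_eq_even_corr}. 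If $n+p$ is odd, the surviving indices are $\ell=2\ell'+1$ with $0\le\ell'\le a$, giving $\sum_{\ell'=0}^{a} \lN^{\ell'+\frac12}\sum_{m=0}^{2\ell'+1}\lrt{\ad_{x_1}\,\mycdots\,a_{y_{p}}}_{m,2\ell'+1-m} + \Order(\lN^{a+1})$. The highest term $\ell'=a$ carries the prefactor $\lN^{a+\frac12}$; writing the corresponding integral as $\scp{\Chi_m(t)}{B_{\mathrm{op}}\,\Chi_{2a+1-m}(t)}$ with $B_{\mathrm{op}}:=\int\dx^{(n)}\dy^{(p)}B(x^{(n)};y^{(p)})\,\ad_{x_1}\mycdots a_{y_{p}}$, the standard estimate $\norm{B_{\mathrm{op}}\bPhi}\ls\norm{B}_{\mathcal{L}(\fH^p,\fH^n)}\norm{(\Number+1)^{(n+p)/2}\bPhi}$ together with the uniform-in-$N$ moment bounds of Lemma~\ref{lem:moments} shows this term is bounded by $\norm{B}_{\mathcal{L}(\fH^p,\fH^n)}\,\e^{C(a,n,p)t}\,\lN^{a+\frac12}$. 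Absorbing it into the remainder shortens the sum to $\sum_{\ell'=0}^{a-1}$ and leaves overall error $\Order(\lN^{a+\frac12})$, which is \eqref{thm_eq_odd_corr}; for $a=0$ the sum is empty and the bound reduces to $\Order(\lN^{1/2})$.

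\emph{On the difficulty.} I do not expect a genuine obstacle here, since the statement is a bookkeeping consequence of Corollaries~\ref{thm:correlation functions} and \ref{lem_compute_correlation_functions}. The only points requiring care are the translation of the hypothesis parameters $\tilde a=\tilde d=2a+1$, $\tilde c=n+p$ into those needed to invoke Corollary~\ref{thm:correlation functions} at order $2a+1$ --- in particular verifying that Assumption~\ref{assumption:moments_l} is implied and need not be assumed separately --- and, in the odd-$(n+p)$ case, noticing that one must truncate one order lower than in the even case and absorb the leftover $\lN^{a+1/2}$-term into the error.
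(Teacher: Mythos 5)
Your argument is correct and coincides with the paper's proof, which simply states that the corollary is an immediate consequence of Corollaries~\ref{thm:correlation functions} and \ref{cor:gen:Wick:odd}; you have supplied exactly the bookkeeping the paper leaves implicit (invoking Corollary~\ref{thm:correlation functions} at order $2a+1$, noting that Assumptions~\ref{assumption:dynamic_structure} and \ref{assumption:Chi0} imply Assumption~\ref{assumption:moments_l} for all parameters — the same observation the paper makes in the proof of Theorem~\ref{thm:RDM} — discarding the odd-parity terms, and in the odd case absorbing the $\ell'=a$ term of size $\lN^{a+\frac12}$ into the error via Lemmas~\ref{lem:aux:new} and \ref{lem:moments}). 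No gaps.
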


If $\Chio(0)$ is quasi-free ($\tilde{\nu}=0$ in Assumption~\ref{ass:initial:data}), the approximation can be simplified further by Corollary \ref{cor:gen:Wick:even}. For example, one obtains for $n+p$ even
\begin{eqnarray}
&&\hspace{-0.8cm}\int\dx^{(n+p)}B(x^{(n+p)}) \lrt{\ad_{x_1}\,\mycdots\,\ad_{x_n}\,a_{x_{n+1}}\mycdots\,a_{x_{n+p}}}_N\nonumber\\
&=&  \sum\limits_{\l=0}^a\lN^\l
\sum\limits_{\substack{b=n+p\\b\text{ even}}}^{n+p+6\l}
\sum\limits_{\boldsymbol{\mu}\in\{-1,1\}^b}
\sum\limits_{\sigma\in P_b}\prod\limits_{i=1}^{b/2}
\int\dx^{(n+p+b)}\tilde{\mathfrak{D}}^{(\bj,\boldsymbol{\mu})}_{n,p;\l,b}(t;x^{(n+p+b)})
\lrt{a_{x_{\sigma(2i-1)}}^{\sharp_{\mu_{\sigma(2i-1)}}} a_{x_{\mu_{\sigma(2i)}}}^{\sharp_{\mu_{\sigma(2i)}}}}_{0,0} \nonumber\\
&&+\mathcal{O}(\lN^{a+1})\e^{C(n,p,a)t}
\end{eqnarray}
for $\bj=\{1\}^n\times\{-1\}^p$ and
where the coefficients $\tilde{\mathfrak{D}}^{(\bj,\boldsymbol{\mu})}_{n,p;\l,b}$ are given in terms of  the kernel of $B$ and \eqref{eqn:D}.

\begin{remark}
\remit{
\item\label{remark_thm3_normal_order} We stated Corollary \ref{thm:correlation functions_simplified} only for normal-ordered correlation functions since any correlation function can be normal-ordered using \eqref{eqn:CCR}.

\item Corollary \ref{thm:correlation functions_simplified} implies for $n+p$ even the $L^2$-bound
\begin{eqnarray}
&&\Bigg\| \lrt{\ad_{x_1}\,\mycdots\,\ad_{x_n}\,a_{y_{1}}\,\mycdots\,a_{y_{p}}}_N  \nonumber\\
&&\qquad\qquad- \sum_{\ell=0}^a \lN^\l \sum_{m=0}^{2\ell} \lrt{\ad_{x_1}\,\mycdots\,\ad_{x_n}\,a_{y_{1}}\,\mycdots\,a_{y_{p}}}_{m,2\ell-m} \Bigg\|_{\fH^{n+p}} \leq \e^{C(a,n,p)t}  \lN^{a+1} \qquad\quad
\end{eqnarray}
and analogously for $n+p$ odd.
This follows directly from choosing $B$ as the Hilbert-Schmidt operator with kernel
\begin{equation*}
B(x^{(n)}; y^{(p)}) = \overline{\lrt{\ad_{x_1}\,\mycdots\,\ad_{x_n}\,a_{y_{1}}\,\mycdots\,a_{y_{p}}}_N} - \sum_{\ell=0}^a \lN^{\l} \sum_{m=0}^{2\ell} \overline{\lrt{\ad_{x_1}\,\mycdots\,\ad_{x_n}\,a_{y_{1}}\,\mycdots\,a_{y_{p}}}_{m,2\ell-m}}
\end{equation*}
and from the fact that $\onorm{B}\leq\norm{B}_{\HS}=\norm{B(\cdot\,;\,\cdot)}_{\fH^{n+p}}$.

\item The $2n$-point correlation function $ \langle\ad_{y_1}\mycdots\,\ad_{y_n} a_{x_1}\mycdots\, a_{x_n}\rangle^{(t)}_N$ can be understood as the integral kernel  $\gamma^{(n)}_{\ChiN(t)}(\xn;y^{(n)})$ of the reduced $n$-particle density matrix of $\ChiN(t)$. Since the trace class operators are the dual of the compact operators, Corollary \ref{thm:correlation functions_simplified} implies that
\begin{equation}\label{even_corr_trace_norm}
\Tr\, \Bigg| \gamma^{(n)}_{\ChiN(t)} - \sum_{\ell=0}^a \lN^{\l} \sum_{m=0}^{2\ell} \gamma_{m,2\ell-m}^{(n)}(t) \Bigg| \leq \e^{C(a,n)t} \lN^{a+1},
\end{equation}
where $\gamma_{m,2\ell-m}^{(n)}(t)$ is the operator on $\fH^n$ with kernel
$\langle\ad_{y_1}\mycdots\,\ad_{y_n} a_{x_1}\mycdots\, a_{x_n}\rangle^{(t)}_{m,2\ell-m}$.
}
\end{remark}

Making use of Proposition~\ref{thm:duhamel}, the mixed $n$-point correlation functions $\langle\ad_{x_1}\,\mycdots\,\ad_{x_n}\,a_{y_{1}}\,\mycdots\,a_{y_{p}}\rangle^{(t)}_{\l,k}$ can be computed explicitly and independently of $N$, given their initial values and  the solutions $\BogV(t,s)$ and $\varphi(t)$ of the two-body problem \eqref{eqn:V(t,s):early} and the one-body problem \eqref{hpt}, respectively. Since the actual correlation functions of the $N$-body problem  are determined by $\langle\ad_{x_1}\,\mycdots\,\ad_{x_n}\,a_{y_{1}}\,\mycdots\,a_{y_{p}}\rangle^{(t)}_{\l,k}$  to any order in $N^{-1/2}$, this implies a drastic reduction of complexity. 

Alternatively, the functions $\langle\ad_{x_1}\,\mycdots\,\ad_{x_n}\,a_{y_{1}}\,\mycdots\,a_{y_{p}}\rangle^{(t)}_{\l,k}$ can be obtained from solving a system of PDEs. A straightforward computation yields
\begin{subequations}\label{PDE_line_all}
\begin{eqnarray}
\label{PDE_line1} \i \partial_t \sum_{m=0}^{\ell} \lrt{a^{\sharp_1}_{x_1}\,\mycdots\,a^{\sharp_n}_{x_n}}_{m,\ell-m} 
&=& \sum_{m=0}^{\ell} \lrt{\Big[a^{\sharp_1}_{x_1}\,\mycdots\,a^{\sharp_n}_{x_n},\FockHopt \Big]}_{m,\ell-m} \\
\label{PDE_line2}&& + \sum_{k=0}^{\l-1} \sum_{m=0}^{k} \lrt{\Big[a^{\sharp_1}_{x_1}\,\mycdots\,a^{\sharp_n}_{x_n},\FockHpt^{(\l-k)} \Big]}_{m,k-m}.\qquad\qquad
\end{eqnarray}
\end{subequations}
The commutator on the right-hand side of \eqref{PDE_line1} contains again $n$ creation and annihilation operators and is evaluated in the same $m$, $\l-m$ inner product as the left-hand side. The commutators in \eqref{PDE_line2} contain at most $n+\l-k$ creation and annihilation operators and are determined by a PDE analogous to \eqref{PDE_line_all}. Since they are evaluated only in inner products with $\Chio(t), \ldots, \Chi_{\l-1}(t)$ and not $\Chi_{\l}(t)$, iterating this procedure yields a closed system of coupled PDEs. For the simplest example with $\l=1$ and $n=1$, see Section~\ref{subsec:results:RDM}.\medskip

As a consequence of Corollary \ref{thm:correlation functions_simplified}, the expectation value of any bounded $k$-body operator $A^{(k)}$ with respect to $\PsiN(t)$ can be computed to arbitrary precision. To see this, recall that
\begin{equation}
\lr{\PsiN(t),A^{(k)}\PsiN(t)}_{\fH^N}=\tbinom{N}{k}^{-1}\lr{\ChiN(t),\UNpt \d\Gamma(A^{(k)})\UNpt^*\ChiN(t)}_{\FN}\,,
\end{equation}
and the latter is as a sum of expressions covered by Corollary \ref{thm:correlation functions_simplified}.
Equivalently, this yields an expansion of the reduced $k$-body density matrices $\gamma^{(k)}_{\Psi_N}(t)$ of $\Psi^N(t)$, which is discussed in the next section for the simplest case $k=1$.

\subsection{One-Particle Reduced Density Matrix}
\label{subsec:results:RDM}

Corollary \ref{thm:correlation functions_simplified} implies an asymptotic expansion of the $k$-particle reduced density matrices $\gPsiNk(t)$ in terms of $\pt$ and the mixed $n$-point correlation functions $\langle a^{\sharp}\mycdots a^{\sharp}\rangle^{(t)}_{\l,k}$. We restrict to the case $k=1$. 
First, one observes that
\begin{eqnarray}\label{eqn:RDM:aux}
\gPsiNo(t)
&=&\ppt\gPsiNo(t)\ppt + \qpt\gPsiNo(t)\qpt + \left(\ppt\gPsiNo(t)\qpt +\hc\right)\nonumber\\
&=&\frac{1}{N}\ppt\lr{\PsiN(t),\ad(\pt)a(\pt)\PsiN(t)}_{\fH^N} + \frac{1}{N}\gChiNt \nonumber\\
&&+ \bigg(\frac{1}{N}\sum\limits_{\l\geq1}|\pt\rangle\langle\varphi_\l(t)|\lr{\PsiN(t),\ad(\varphi_\l(t))a(\pt)\PsiN(t)}_{\fH^N}+\hc\bigg)
\end{eqnarray}
for any orthonormal basis $\{\varphi_\l(t)\}_{\l\geq 0}$ of $\fH$ with $\varphi_0(t)=\pt$, and
where $\gChiNt$ denotes the one-body reduced density matrix of $\ChiN(t)$ with kernel $\gChiNt(x;y)=\langle\ChiN(t),\ad_ya_x\ChiN(t)\rangle$. 
Hence, the substitution rules \eqref{eqn:substitution:rules} yield
\begin{eqnarray}
\gPsiNo(t) 
&=& \ppt + \frac{1}{\sqrt{N}} \left(|\varphi(t) \rangle \langle \beta_{\ChiN(t)}| + |\beta_{\ChiN(t)} \rangle \langle \varphi(t)|\right) \nonumber\\
&&+\frac{1}{N} \left(\gChiNt- \ppt \lr{\ChiN(t),\Number \ChiN(t)}_{\FN}\right),\label{gamma_psi_gamma_chi}
\end{eqnarray}
where
\begin{equation}\label{eqn:bChiNt}
\bChiNt(x) := \lr{\ChiN(t), \sqrt{1-\frac{\Number}{N}} \,a_x \ChiN(t)}_{\FN}.
\end{equation}
Expanding the $N$-dependent expressions in \eqref{gamma_psi_gamma_chi} in powers of $\lN^{1/2}$ and applying Corollary~\ref{thm:correlation functions_simplified} leads to the following result, whose proof is postponed to Section \ref{subsec:proofs:RDM}.

\begin{theorem}\label{thm:RDM}
Let Assumption~\ref{ass:initial:data} hold for some $\tilde{a}\in\N$ and let $a\leq\frac{\tilde{a}-1}{2}$ and $t\in\R$. Then
\begin{equation}\label{eq_red_density_result}
\Tr \Big| \gPsiNo(t) - \sum\limits_{\l=0}^a\lN^\l\gamma_\l^{(1)}(t) \Big| \leq \e^{C(a)t} \lN^{a+1}
\end{equation}
for some constant $C(a)>0$, where
\begin{subequations}\label{eqn:gamma:higher:orders}
\begin{eqnarray}
\gamma_0^{(1)}(t;x;y)&:=& \varphi(t,x)\overline{\varphi(t,y)}\,, \\
\gamma_\l^{(1)}(t;x;y)&:=& \sum\limits_{m=1}^{\l}\Bigg[
\sum\limits_{k=0}^{\l-m}\sum\limits_{n=0}^{2m-1}\tilde{c}_{\l-m,k}\,\bigg( \varphi(t,x)\lrt{ \ad_y(\Number-1)^k}_{n,2m-n-1}\nonumber\\
&&\hspace{4cm}+\lrt{(\Number-1)^ka_x}_{n,2m-n-1}\overline{\varphi(t,y)}
\bigg) \nonumber\\
&&+\sum\limits_{n=0}^{2m-2}\tilde{c}_{\l-m}\left(\lrt{\ad_y a_x}_{n,2m-n-2}
-\varphi(t,x)\overline{\varphi(t,y)}\lrt{\Number}_{n,2m-n-2}\right)\Bigg]\label{gamma_correction_a}\qquad
\end{eqnarray}
\end{subequations}
for $\l\geq1$, where $\tilde{c}_\l=(-1)^\l c_\l^{(3/2)}$ and $\tilde{c}_{\l,k}:=\tilde{c}_{\l-k}c_k^{(0)}$ with $c_\l^{(n)}$ as in \eqref{eqn:taylor:coeff}.
\end{theorem}

Theorem \ref{thm:RDM} implies for any bounded operator $A:\fH\to\fH$ that
\begin{equation}
\left|\Tr_{\fH}A\gPsiNo(t) - \sum\limits_{\l=0}^a\lN^\l\Tr_{\fH}A\gamma^{(1)}_\l(t) \right| \leq \onorm{A}\,\e^{C(a)t} \lN^{a+1}\,.
\end{equation} 
For $a=0$, we recover the well-known statement that $\gPsiNo(t)\approx\ppt$ up to an error of order $N^{-1}$ \cite{erdos2009,chenlee:2011,chen2011,mitrouskas2016}. 
In \cite{rodnianski2009,pickl2011}, an error estimate of order $N^{-1/2}$ was proven by estimating $\beta_{\ChiN(t)}$ uniformly in $N$ without making use of the Bogoliubov approximation. In this case, the Bogoliubov approximation improves the
convergence rate but does not give itself a correction to the one-body reduced density matrix.

The next order correction to the leading order $\gamma_0^{(1)}(t)=\ppt$ is
\begin{eqnarray}
\gamma_1^{(1)}(t;x;y)
&=& \varphi(t,x) \Big(\langle{\ad_y}\rangle^{(t)}_{0,1}+\langle{\ad_y}\rangle^{(t)}_{1,0}\Big)
+\Big(\langle{a_x}\rangle^{(t)}_{1,0}+\langle{a_x}\rangle^{(t)}_{0,1}\Big)\overline{\varphi(t,y)}
+\gChiot(x;y) \nonumber\\
&&- \big(\Tr_\fH\gChiot\big) \,\varphi(t,x)\overline{\varphi(t,y)}\,.
\end{eqnarray}
The two-point correlation function $\gChiot$  is given by the solution  of \eqref{eqn:PDE}, or, if $\Chio(0)$ is quasi-free,  directly by \eqref{computing_gamma_t}. The mixed correlation functions $\langle a^\sharp \rangle^{(t)}_{1,0}$ and $\langle a^\sharp \rangle^{(t)}_{0,1}$ can be obtained from this by Corollary \ref{lem_compute_correlation_functions}.
Alternatively, 
\begin{equation}
\bzo(t,x):=\lrt{a_x}_{0,1}+\lrt{a_x}_{1,0}
\end{equation} 
is determined by a partial differential equation. As a consequence of \eqref{eqn:differential:form:Chil},
\begin{eqnarray}
\i\partial_t\bzo(t,x) 
&=& \lrt{\left[a_x,\FockHopt\right]}_{1,0}+\lrt{\left[a_x,\FockHopt\right]}_{0,1} + \lrt{\left[a_x,\FockHtpt\right]}_{0,0}\nonumber\\
&= &\hpt_x\bzo(t,x) +  \int \dy \Kopt(x;y)\bzo(t,y) + \int \dy\, \Ktpt(x;y)\overline{\bzo(y)} \nonumber\\
&& + \int \dy^{(2)}\bigg( \Kthpt(x,y_1;y_2)+\Kthpt(y_1,x;y_2)\bigg) \gChiot(y_2;y_1) \nonumber\\
&& + \int\dy^{(2)}\big(\Kthpt\big)^*(x;y_1,y_2)\aChiot(y_1,y_2)\,,
\end{eqnarray}
which is \eqref{intro_red_dens_corr} from the introduction.

\section{Bogoliubov Transformations and Quasi-free States}
\label{subsec:Bogoliubov}
In this section, we briefly recall the concepts of Bogoliubov transformations, Bogoliubov maps, and quasi-free states, and prove some of their properties. Our main references are \cite{solovej_lec,lewin2015,nam2015}.
Let us consider
\begin{equation}
F=f\oplus Jg=f\oplus\gbar=\begin{pmatrix}f \\ \overline{g} \end{pmatrix} \in\fH\oplus\fH\,,
\end{equation}
where  $J:\fH\to\fH$,  $(Jf)(x)=\overline{f(x)}$, denotes the complex conjugation map.
The generalized creation and annihilation operators $A(F)$ and $\Ad(F)$ are defined as
\begin{equation}\label{eqn:A}
A(F)=a(f)+\ad(g)\,, \quad \Ad(F)=A(\cJ F)=\ad(f)+a(g)
\end{equation}
for $\cJ=\left(\begin{smallmatrix}0 & J\\J&0\end{smallmatrix}\right)$.
If an operator $\BogV$ on $\fH\oplus\fH$ is such that the map $F\mapsto A(\BogV F)$ has the same properties  as the map $F\mapsto A(F)$, i.e., if
\begin{equation}
\Ad(\BogV F)=A(\BogV\mathcal{J}F)\,,\qquad [A(\BogV F_1),\Ad(\BogV F_2)]= [A( F_1),\Ad( F_2)]\,,
\end{equation}
the operator $\BogV$ is called a \emph{(bosonic) Bogoliubov map}.
This requirement is equivalent to the following definition:

\begin{definition}
A bounded operator 
\begin{equation}
\BogV:\fH\oplus \fH\to \fH\oplus \fH
\end{equation}
is called a Bogoliubov map if it satisfies
\begin{equation}
\BogV^*\mathcal{S}\BogV=\mathcal{S}=\BogV\mathcal{S}\BogV^*\,,\qquad \mathcal{J}\BogV\mathcal{J}=\BogV\,,
\end{equation}
where $\cS:=\left(\begin{smallmatrix}1&0\\0&-1\end{smallmatrix}\right)$.
Equivalently, $\BogV$ is of the block form
\begin{equation}\label{BogV:block:form}
\BogV:=\begin{pmatrix}U & \Vbar\\V & \Ubar\end{pmatrix}\,,\quad U,V:\fH\to \fH\,,
\end{equation}
where $U$ and $V$ satisfy the relations
\begin{equation}\label{eqn:rel:U:V}
U^*U=\id+V^*V\,, \qquad UU^*=\id+\Vbar\,\Vbar^*\,, \quad V^*\Ubar=U^*\Vbar\,,\quad UV^*=\Vbar\,\Ubar^*\,.
\end{equation}
We denote the set of Bogoliubov maps on $\fH\oplus\fH$ as
\begin{equation}
\fV(\fH):=\left\{\BogV\in\mathcal{L}\left(\fH\oplus\fH\right)\,|\,\BogV \text{ is a Bogoliubov map }\right\}.
\end{equation}
\end{definition}

Bogoliubov maps can be implemented on Fock space in the following sense:

\begin{lem}
Let $\BogV\in\fV(\fH)$. Then there exists a unitary transformation $\BogU:\FH\to\FH$ such that
\begin{equation}\label{eqn:unit:impl}
\BogU A(F)\BogU^*=A(\BogV F)
\end{equation}
for all  $F\in\fH\oplus\fH$
if and only if 
\begin{equation}
 \norm{V}_\mathrm{HS(\fH)}^2 :=\Tr(V^*V)<\infty
\end{equation}
(Shale--Stinespring condition).
In this case,   $\BogV$ is called (unitarily) implementable.
\end{lem}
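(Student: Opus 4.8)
\noindent This is the Shale--Stinespring theorem; I would prove the two implications as follows. Throughout I would use the relations \eqref{eqn:rel:U:V}, which give $U^*U=\id+V^*V\ge\id$ and $UU^*=\id+\Vbar\,\Vbar^*\ge\id$, so that $U$ is invertible with bounded inverse and $\BogV^{-1}$ is again a bounded Bogoliubov map whose blocks are built from $U$, $V$, $U^{-1}$ and the conjugation $J$.

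\smallskip\noindent\textit{Necessity.} Assuming a unitary $\BogU$ with \eqref{eqn:unit:impl} exists, I would set $\Omega':=\BogU^*\Omega\in\FH$ (a unit vector), rewrite \eqref{eqn:unit:impl} as $\BogU^*A(G)\BogU=A(\BogV^{-1}G)$, apply it to $\Omega'$ with $G=h\oplus0$, and use $A(h\oplus0)\Omega=a(h)\Omega=0$ to deduce that $\Omega'$ satisfies annihilation-type relations $a(g)\Omega'=\ad(Kg)\Omega'$ for all $g\in\fH$, where $K$ is $V$ composed with bounded, boundedly invertible operators (and $J$), so that $\norm{K}_\HS<\infty$ iff $\norm{V}_\HS<\infty$. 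Projecting this identity onto the $0$- and $1$-particle sectors gives ${\Omega'}^{(1)}=0$; if moreover the scalar ${\Omega'}^{(0)}$ vanished, the identity would recursively annihilate every sector, contradicting $\norm{\Omega'}=1$, so ${\Omega'}^{(0)}\neq0$. Projecting onto the $1$-particle sector then identifies the symmetric kernel ${\Omega'}^{(2)}$ with $K$ up to the nonzero factor ${\Omega'}^{(0)}$ and a combinatorial constant, whence $\norm{{\Omega'}^{(2)}}_{\fH^2}^2$ is a nonzero multiple of $\norm{K}_\HS^2$; finiteness of the left side (since $\Omega'\in\FH$) forces $K$, and therefore $V$, to be Hilbert--Schmidt.

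\smallskip\noindent\textit{Sufficiency.} Assuming $\Tr(V^*V)<\infty$, I would construct $\BogU$ from its action on the vacuum. The relation \eqref{eqn:unit:impl} forces $\Omega':=\BogU^*\Omega$ to be the unique pure quasi-free state solving $a(g)\Omega'=\ad(Kg)\Omega'$, with $K$ the (anti-linear) operator built from $V$ as above; explicitly, up to the normalization $\det(\id-K^*K)^{1/4}$,
\begin{equation*}
\Omega'\ \propto\ \exp\!\Big(-\tfrac12\!\int\! K(x,y)\,\ad_x\ad_y\,\dx\,\dy\Big)\Omega .
\end{equation*}
The key point is that this $\Omega'$ is a genuine unit vector in $\FH$ \emph{exactly when} $K$ (equivalently $V$) is Hilbert--Schmidt: its two-particle sector already has squared norm proportional to $\norm{K}_\HS^2$, and the Fredholm determinant is then finite and makes the whole series summable. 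I would then define $\BogU$ on the dense span of the vectors $\ad(f_1)\cdots\ad(f_m)\Omega$ by $\ad(f_1)\cdots\ad(f_m)\Omega\mapsto\Ad(\BogV(f_1\oplus0))\cdots\Ad(\BogV(f_m\oplus0))\Omega'$, check via the CCR \eqref{eqn:CCR} that this map is well-defined and isometric, and obtain surjectivity---hence unitarity---by running the same construction for $\BogV^{-1}$, whose relevant off-diagonal block $-\Vbar^{*}$ is again Hilbert--Schmidt; by construction \eqref{eqn:unit:impl} then holds. (Alternatively, one can first reduce $\BogV$ to a direct sum of one-particle unitaries and elementary single-mode squeezes with parameters $\theta_k$ satisfying $\sum_k\theta_k^2=\norm{V}_\HS^2<\infty$ and implement each factor; the strong convergence of the infinite product of squeezing unitaries is the same Hilbert--Schmidt-dependent fact.)

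\smallskip\noindent\textit{Main obstacle.} The hard part is the sufficiency direction, and within it the analytic fact that the candidate vacuum $\Omega'$ is normalizable---equivalently, that the infinite product of elementary squeezing operators converges strongly to a \emph{unitary}---which is precisely where $\Tr(V^*V)<\infty$ enters. Once $\Omega'$ is known to exist, building $\BogU$ and verifying the intertwining relation is routine bookkeeping with the CCR. Complete details are in \cite{solovej_lec,lewin2015,nam2015}.
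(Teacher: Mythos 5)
The paper does not actually prove this lemma --- it only cites \cite[Theorem 9.5]{solovej_lec} --- and your sketch is precisely the standard Shale--Stinespring argument given there (necessity via the transformed vacuum $\Omega'=\BogU^*\Omega$ and its two-particle sector, sufficiency via the Gaussian ansatz for $\Omega'$ and extension over the dense span), so it is essentially correct and matches the intended proof. The only point worth making explicit in the sufficiency direction is that normalizability of $\exp\bigl(-\tfrac12\int K\,\ad\ad\bigr)\Omega$ requires $\onorm{K}<1$ in addition to $\norm{K}_\HS<\infty$ (otherwise $\det(\id-K^*K)=0$); this is automatic here, since \eqref{eqn:rel:U:V} gives $(VU^{-1})^*(VU^{-1})=\id-(UU^*)^{-1}<\id$, but it deserves a sentence rather than being absorbed into ``the Fredholm determinant is then finite.''
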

A proof of the lemma is given, for instance, in \cite[Theorem 9.5]{solovej_lec}. 
In the following, we refer to the unitary implementation $\BogU:\FH\to\FH$ of a Bogoliubov map $\BogV\in\fV(\fH)$ as  \emph{Bogoliubov transformation}.
Some relevant properties of Bogoliubov transformations are summarized in the following lemma, which is easily verified by direct computations using \eqref{eqn:rel:U:V}.

\begin{lem}\label{lem:properties:BT}
The Bogoliubov maps $\fV(\fH)$ form a subgroup of the group of isomorphisms on $\fH\oplus\fH$.
In particular, the adjoint and inverse of  $\BogV\in\fV(\fH)$  with block form \eqref{BogV:block:form} are given as
\begin{equation}\label{eqn:BogV:-1}
\BogV^*=\begin{pmatrix}
U^* & V^*\\\Vbar^{\,*} & \Ubar^{\,*} \end{pmatrix}\,,\quad
\BogV^{-1}=\cS\BogV^*\cS=\begin{pmatrix}
U^* & -V^*\\ -\Vbar^{\,*} & \Ubar^{\,*} \end{pmatrix}\,.
\end{equation}
If $V$ is a Hilbert--Schmidt operator, the set of all Bogoliubov transformations,
\begin{equation}
\left\{ \BogU:\FH\to\FH\,|\, \BogV\in\fV(\fH)\right\}\,,
\end{equation}
forms a subgroup of the group of unitary maps on $\FH$.
Moreover, the map
$\BogV\mapsto \BogU$
is a group homomorphism, which, in particular, implies that
\begin{equation}
\mathcal{U}_{\BogV^{-1}}=\left(\BogU\right)^{-1}=\BogU^*\,.
\end{equation}
\end{lem}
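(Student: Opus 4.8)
The plan is to prove all four assertions by elementary algebra, working first with the $2\times 2$ operator matrices on $\fH\oplus\fH$ and only afterwards lifting to Fock space. First I would check that $\fV(\fH)$ is a subgroup of the group of isomorphisms on $\fH\oplus\fH$. Closure under composition is immediate: for $\BogV_1,\BogV_2\in\fV(\fH)$ one has $(\BogV_1\BogV_2)^*\cS(\BogV_1\BogV_2)=\BogV_2^*(\BogV_1^*\cS\BogV_1)\BogV_2=\BogV_2^*\cS\BogV_2=\cS$, symmetrically $(\BogV_1\BogV_2)\cS(\BogV_1\BogV_2)^*=\cS$, and $\cJ\BogV_1\BogV_2\cJ=(\cJ\BogV_1\cJ)(\cJ\BogV_2\cJ)=\BogV_1\BogV_2$ because $\cJ^2=\id$; the identity trivially lies in $\fV(\fH)$. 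Since $\cS^2=\id$, the relation $\BogV^*\cS\BogV=\cS$ reads $(\cS\BogV^*\cS)\BogV=\id$ and $\BogV\cS\BogV^*=\cS$ reads $\BogV(\cS\BogV^*\cS)=\id$, so $\BogV$ is a bounded isomorphism with bounded inverse $\BogV^{-1}=\cS\BogV^*\cS$; a short computation using $\cS^2=\id$ and the relations defining $\fV(\fH)$ then shows $\cS\BogV^*\cS$ again satisfies them, so $\BogV^{-1}\in\fV(\fH)$. Substituting the block form \eqref{BogV:block:form} into $\BogV^*$ and into $\cS\BogV^*\cS$ yields \eqref{eqn:BogV:-1}; equivalently, these formulas can be checked directly against the relations \eqref{eqn:rel:U:V}.

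Next I would turn to the Fock-space statement. By the Shale--Stinespring lemma recalled just above, $\BogV\in\fV(\fH)$ is unitarily implementable iff its lower-left block $V$ is Hilbert--Schmidt. Multiplying out block matrices, the lower-left block of $\BogV_1\BogV_2$ equals $V_1U_2+\Ubar_1V_2$, and that of $\BogV^{-1}=\cS\BogV^*\cS$ equals $-\Vbar^{\,*}$. Since the Hilbert--Schmidt operators form a two-sided $*$-ideal in $\mathcal{L}(\fH)$ and the blocks $U_i,V_i$ are bounded, both expressions are Hilbert--Schmidt whenever $V_1,V_2$ are; hence the implementable Bogoliubov maps form a subgroup of $\fV(\fH)$, and their Fock-space implementations form a subgroup of the unitary group on $\FH$.

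Finally I would establish the homomorphism property. Applying the intertwining relation \eqref{eqn:unit:impl} twice gives $(\mathcal{U}_{\BogV_1}\mathcal{U}_{\BogV_2})\,A(F)\,(\mathcal{U}_{\BogV_1}\mathcal{U}_{\BogV_2})^*=\mathcal{U}_{\BogV_1}A(\BogV_2F)\mathcal{U}_{\BogV_1}^*=A(\BogV_1\BogV_2F)$, so $\mathcal{U}_{\BogV_1}\mathcal{U}_{\BogV_2}$ implements $\BogV_1\BogV_2$, and replacing $F$ by $\BogV^{-1}F$ in \eqref{eqn:unit:impl} shows that $\BogU^*$ implements $\BogV^{-1}$. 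Because the Fock representation of the canonical commutation relations is irreducible, the implementing unitary of a given Bogoliubov map is unique up to a phase; with the phases fixed consistently, $\BogV\mapsto\BogU$ becomes a genuine group homomorphism, and in particular $\mathcal{U}_{\BogV^{-1}}=(\BogU)^{-1}=\BogU^*$. I expect the only real subtlety here to be precisely this phase ambiguity in the unitary implementation, which deserves an explicit remark; everything else is routine manipulation of the matrix identities \eqref{eqn:rel:U:V}.
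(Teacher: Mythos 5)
The paper offers no written proof of this lemma beyond the remark that it is ``easily verified by direct computation using \eqref{eqn:rel:U:V}'', and your write-up supplies exactly that computation: the subgroup property and the formula $\BogV^{-1}=\cS\BogV^*\cS$ from $\cS^2=\id$ together with the defining relations, the Hilbert--Schmidt two-sided-ideal argument for closure of the implementable maps (lower-left blocks $V_1U_2+\Ubar_1V_2$ and $-\Vbar^{\,*}$), and the intertwining computation showing that $\mathcal{U}_{\BogV_1}\mathcal{U}_{\BogV_2}$ and $\BogU^*$ implement $\BogV_1\BogV_2$ and $\BogV^{-1}$ respectively. This is correct and is precisely the intended argument. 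One caution on the last step, which you rightly single out: irreducibility gives uniqueness of the implementer only up to a phase, so what your computation actually proves is that $\BogV\mapsto\BogU$ is a \emph{projective} homomorphism; the claim that the phases ``can be fixed consistently'' to obtain a genuine homomorphism on the whole group is not automatic (this is the usual metaplectic-cocycle obstruction), but nothing in the paper requires more than the phase-insensitive conjugation identities such as \eqref{eqn:trafo:ax}, the convention $\mathcal{U}_{\BogV^{-1}}:=\BogU^*$, and the composition law for the dynamical family $\BogUts$, whose phases are already fixed by the evolution equation.
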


We write the operators $U$ and $V$  as integral operators with (Schwartz) kernels $U(x;y)$ and $V(x;y)$, i.e., for any $f\in \fH$,
\begin{equation}
(Uf)(x):=\int U(x;y)f(y)\dy\,, \qquad (Vf)(x):=\int V(x;y)f(y)\dy \,,
\end{equation} 
and the operators $\Vbar$ and $\Ubar$ are to be understood as integral operators with kernels $\overline{V(x;y)}$ and $\overline{U(x;y)}$.
The transformation rule \eqref{eqn:unit:impl} corresponds to
\begin{subequations}\label{eqn:trafo:ax}
\begin{eqnarray}
\BogU \,a_x\,\BogU^*&=&\int \dy\, \overline{U(y;x)}\,a_y
+\int \dy\, \overline{V(y;x)}\,\ad_y\,,\\
\BogU\,\ad_x\,\BogU^*&=&\int \dy\, V(y;x)\,a_y + \int\dy\, U(y;x)\,\ad_y.
\end{eqnarray}
\end{subequations}
The inverse relation to \eqref{eqn:unit:impl} is given by
\begin{equation}\label{U_star_A_U}
\BogU^*A(F)\BogU \;=\; A(\BogV^{-1}F) \,,
\end{equation}
which, by \eqref{eqn:BogV:-1}, leads to the relations
\begin{subequations}\label{eqn:trafo:ax:invers}
\begin{eqnarray}
\BogU^*\,a_x\,\BogU&=&\int\dy\, U(x;y)a_y 
- \int \dy\,\overline{V(x;y)}\ad_y\,,\\
\BogU^*\,\ad_x\,\BogU&=&-\int\dy V(x;y) a_y + 
\int \dy\,\overline{U(x;y)}\ad_y\,.
\end{eqnarray}
\end{subequations}
The Bogoliubov transformed number operator can be bounded as follows:

\begin{lem}\label{lem:number:BT} 
Let $b\in\N$, let $\BogV\in\mathfrak{V}(\fH)$ be unitarily implementable, and denote by $\BogU$ the corresponding Bogoliubov transformation on $\Fock$. Then it holds in the sense of operators on $\Fock$  that
\begin{equation}
\BogU(\Number+1)^b\BogU^* \leq C_\BogV^b\, b^b(\Number+1)^b
\end{equation}
with
\begin{equation}\label{CV}
C_\BogV:=2\norm{V}_\HS^2+\onorm{U}^2+1
\end{equation}
for $\BogV=\begin{pmatrix}U & \Vbar\\ V&\Ubar\end{pmatrix}$.
\end{lem}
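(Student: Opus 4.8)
The plan is to treat $b=1$ first by an explicit computation of $\BogU\Number\BogU^{*}$, and then to reduce general $b$ to a normally ordered expansion of $(\Number+1)^{b}$ that is conjugated term by term.

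\emph{Base case $b=1$.} I would write $\Number=\int\ad_{x}a_{x}\,\dx$, insert $\BogU^{*}\BogU=\id$ between the two operators, and substitute the transformation rules \eqref{eqn:trafo:ax} for $\BogU\ad_{x}\BogU^{*}$ and $\BogU a_{x}\BogU^{*}$. Multiplying out the four products and bringing the result into normal order with the CCR \eqref{eqn:CCR} gives
\begin{equation*}
\BogU\Number\BogU^{*}=\d\Gamma(UU^{*})+\d\Gamma(\overline{VV^{*}})+\norm{V}_{\HS}^{2}+\Big(\int (UV^{*})(x;y)\,\ad_{x}\ad_{y}\,\dx\dy+\hc\Big),
\end{equation*}
where the scalar $\norm{V}_{\HS}^{2}$ comes from the single contraction. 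By the Bogoliubov relation $UU^{*}=\id+\Vbar\,\Vbar^{*}$ from \eqref{eqn:rel:U:V} the first two terms equal $\Number+2\d\Gamma(\overline{VV^{*}})$, which is at most $(1+2\onorm{V}^{2})\Number\le(1+2\norm{V}_{\HS}^{2})\Number$ by the elementary bound $\d\Gamma(B)\le\onorm{B}\Number$ for $B\ge0$. For the off-diagonal (pairing) term I would write $\int K(x;y)\ad_{x}\ad_{y}=\int\ad_{x}\ad\big(K(x;\cdot)\big)\dx$ with $K:=UV^{*}$, bound $\norm{\ad(K(x;\cdot))\bPhi}\le\norm{K(x;\cdot)}_{\fH}\norm{(\Number+1)^{1/2}\bPhi}$, and apply Cauchy--Schwarz in $x$ together with $\norm{K}_{\HS}\le\onorm{U}\norm{V}_{\HS}$. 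Collecting the pieces, using $2\onorm{U}\norm{V}_{\HS}\le\onorm{U}^{2}+\norm{V}_{\HS}^{2}$, yields $\BogU(\Number+1)\BogU^{*}\le C_{\BogV}(\Number+1)$ with $C_{\BogV}$ as in \eqref{CV}; this last step is the bookkeeping that fixes the precise constant, and it is cleanest to combine the $\d\Gamma$ and pairing contributions rather than estimate them entirely separately. Since $\BogV^{-1}$ has block entries $U^{*}$ and $-V^{*}$ by \eqref{eqn:BogV:-1}, hence the same value of $C_{\BogV}$, applying the identical argument to the Bogoliubov transformation $\BogU^{*}$ (which implements $\BogV^{-1}$) also gives $\BogU^{*}(\Number+1)\BogU\le C_{\BogV}(\Number+1)$.

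\emph{General $b$.} I would expand $(\Number+1)^{b}$ in normally ordered form, $(\Number+1)^{b}=\sum_{k=0}^{b}c_{k}\int\ad_{x_{1}}\mycdots\ad_{x_{k}}\,a_{x_{1}}\mycdots a_{x_{k}}\,\dx^{(k)}$, with nonnegative combinatorial coefficients $c_{k}$ (expressible through Stirling numbers) whose total weight is controlled by a constant times $b^{b}$. Conjugating the $k$-th summand by $\BogU$ and substituting \eqref{eqn:trafo:ax} for each of its $2k$ operators produces a sum of at most $4^{k}$ monomials in $\ad,a$ of length $2k$, each carrying a kernel that is a product of factors $U$ and $\Vbar$; normal-ordering each monomial with \eqref{eqn:CCR} and estimating it against $(\Number+1)^{k}\le(\Number+1)^{b}$ in the usual way (pull out creation operators, Cauchy--Schwarz as in the base case) produces a constant that is a product of $\onorm{U}$'s and $\norm{V}_{\HS}$'s. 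Summing over $k$ and over the monomials and regrouping, using $\onorm{V}\le\norm{V}_{\HS}$, the constant organizes into $C_{\BogV}^{b}b^{b}$. An alternative, purely inductive route is to prove the claim by induction on $b$, commuting one factor $\Number+1$ through $\BogU$ via the base case and controlling the commutators $[\,\cdot\,,\Number]$ of the pairing contributions that arise at each step; the factor $b$ produced at each of the $b$ steps is then the origin of $b^{b}$.

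\emph{Main obstacle.} The real difficulty is the passage from $b=1$ to arbitrary $b$: one cannot simply raise the base-case inequality $\BogU(\Number+1)\BogU^{*}\le C_{\BogV}(\Number+1)$ to the $b$-th power, because $x\mapsto x^{b}$ is not operator monotone for $b>1$ and $\BogU(\Number+1)\BogU^{*}$ does not commute with $\Number$. One is therefore forced either into the explicit normally ordered expansion above---with the attendant combinatorics of the $c_{k}$ and the proliferation of monomials under \eqref{eqn:trafo:ax}---or into a delicate induction, and it is there that the sharp dependence $C_{\BogV}^{b}b^{b}$ has to be extracted by hand. A lesser but genuine point of care is obtaining the exact constant $C_{\BogV}$, rather than a larger one, already in the base case.
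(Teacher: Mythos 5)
Your proposal is essentially correct, and your ``alternative, purely inductive route'' is in fact the paper's proof: the paper computes $\BogU(\Number+1)\BogU^*$ in normal-ordered form exactly as in your base case (an $aa$ term with kernel $\Ubar\,\Vbar^*$, a $\ad a$ term with kernel $\Vbar\,\Vbar^*+UU^*$ plus the scalar $\norm{V}_{\HS}^2+1$, and a $\ad\ad$ term with kernel $UV^*$), and then peels off one factor of $(\Number+1)$ at a time from $\BogU(\Number+1)^b\BogU^*=\bigl[\BogU(\Number+1)\BogU^*\bigr]\bigl[\BogU(\Number+1)^{b-1}\BogU^*\bigr]$. The one device you leave implicit is how the iteration is closed: rather than tracking commutators term by term, the paper proves the stronger weighted estimate $\norm{(\Number+1)^{\l}\BogU(\Number+1)^{b}\BogU^*\bPhi}\le C_\BogV^{b}\norm{(\Number+2)(\Number+4)\mycdots(\Number+2b)(\Number+2b+1)^{\l}\bPhi}$ for arbitrary $\l\ge 0$, because the pairing terms shift the weight $(\Number+1)^{\l}$ to $(\Number-1)^{\l}$ and $(\Number+3)^{\l}$; the product $(\Number+2)(\Number+4)\mycdots(\Number+2b)$ is precisely where the factor $b^{b}$ comes from, which is what you anticipated. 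Two points you should be aware of. First, the operator inequality for exponent $b$ is the quadratic-form statement $\norm{(\Number+1)^{b/2}\BogU^*\bPhi}^2\le C_\BogV^{b}b^{b}\norm{(\Number+1)^{b/2}\bPhi}^2$, so the iterative route only covers even $b$ directly; the paper handles odd $b$ by a separate estimate of $\norm{(\Number+1)^{1/2}\BogU^*\bPhi}^2$ (your observation that $\BogU^*$ implements $\BogV^{-1}$ with the same $C_\BogV$ is exactly what is used there) combined with the weighted bound at $\l=\tfrac12$. Your first route --- normal-ordering $(\Number+1)^{b}$ with Stirling-type coefficients and conjugating monomial by monomial --- would sidestep this parity issue at the cost of heavier combinatorics, and is a legitimate alternative not taken in the paper. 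Second, you are right that pinning down the exact constant $C_\BogV$ is pure bookkeeping; the precise value is immaterial for the applications (only that it is polynomial in $\onorm{U}$ and $\norm{V}_{\HS}$, so that Lemma \ref{lem:U:V:norms} converts it into an $\e^{Ct}$ bound), so no effort need be spent optimizing it.
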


\begin{proof}
Let $\l\in\R$.
As a consequence of \eqref{eqn:trafo:ax}, it follows that 
\begin{eqnarray}
&&\hspace{-1cm}(\Number+1)^\l \BogU(\Number+1)\BogU^* \nonumber\\
&=&\int\dy\dz(\Ubar\,\Vbar^*)(y;z)a_ya_z(\Number-1)^\l +\int\dy\dz  (UV^*)(y;z)\ad_y\ad_z(\Number+3)^\l\nonumber\\
&&+\bigg(\int\dy\dz(\Vbar\,\Vbar^*+UU^*)(y;z)\ad_y a_z+\norm{V}^2_{\HS(\fH)}+1\bigg)(\Number+1)^\l \,.
\end{eqnarray}
By Lemma \ref{lem:aux:new} and \eqref{eqn:rel:U:V}, this yields for $\l \in\R_0^+$, $b\in\N$ and $\bPhi\in\Fock$
\begin{eqnarray}
&&\hspace{-1cm}\norm{(\Number+1)^\l \BogU(\Number+1)^b\BogU^*\bPhi}_\Fock\nonumber\\
&=&\norm{(\Number+1)^\l \BogU(\Number+1)\BogU^*\BogU(\Number+1)^{b-1}\BogU^*\bPhi}_\Fock\nonumber\\[5pt]
&\leq&C_\BogV\Big\|(\Number+2)(\Number+3)^\l \BogU(\Number+1)\BogU^*\BogU(\Number+1)^{b-2}\BogU^*\bPhi\Big\|_\Fock \nonumber\\
&\leq&C_\BogV^b\norm{(\Number+2)(\Number+4)\mycdots(\Number+2b)(\Number+2b+1)^\l \bPhi}_\Fock \,.\label{eqn:number:BT:1}
\end{eqnarray}
The choice $\l =0$ proves the lemma for $b$ even.
For $b$ odd, note first that
\begin{eqnarray}
&&\hspace{-1.5cm}\norm{(\Number+1)^\frac12\BogU^*\bPhi}^2_\Fock\nonumber\\
&=&\left(\lr{\bPhi,\int\dy\dz(\Ubar\,\Vbar^*)(y;z)a_ya_z\bPhi}_\Fock+\hc\right)\nonumber\\
&& +\int\dz\lr{\int\dy (V\,V^*+\Ubar\,\Ubar^*)(y;z) a_y\bPhi, a_z\bPhi}_\Fock+\bigg(\norm{V}^2_{\HS}+1\bigg)\norm{\bPhi}^2_\Fock \nonumber\\
&\leq&C_\BogV\norm{(\Number+1)^\frac12\bPhi}^2_\Fock\,,\label{eqn:number:BT:2}
\end{eqnarray}
where we used  that
\begin{eqnarray}
\lr{\bPhi,\int\dy\dz f(y;z) a_ya_z\bPhi}_\Fock
&\leq& \norm{f}_{\fH^2}\sum\limits_{k\geq0}\norm{\sqrt{k+1}\phi^{(k)}}_{\fH^k}\norm{\sqrt{k+2}\,\phi^{(k+2)}}_{\fH^{k+2}}\nonumber\\
&\leq&\norm{f}_{\fH^2}\norm{(\Number+1)^\frac12\bPhi}^2_\Fock\,,
\end{eqnarray}
and  that $\norm{\int\dy (VV^*)(y;z)a_y\bPhi}_\Fock\leq\onorm{VV^*}\norm{a_z\bPhi}_\Fock$, etc., and  $\int\dz\norm{a_z\bPhi}_\Fock^2=\norm{\Number^\frac12\bPhi}_\Fock^2$.
Thus, we find
\begin{eqnarray}
\norm{(\Number+1)^\frac{2b+1}{2}\BogU^*\bPhi}_\Fock
&=&\norm{(\Number+1)^\frac12\BogU^*\BogU(\Number+1)^b \BogU^* \bPhi}_\Fock\nonumber\\ 
&\leq& C_\BogV^{b+\frac12}(2b+1)^{b+\frac12}\norm{(\Number+1)^{b+\frac12}\bPhi}_\Fock
\end{eqnarray}
by \eqref{eqn:number:BT:2} and \eqref{eqn:number:BT:1} with $\l=\frac12$. 
\end{proof}

A concept that is closely connected to Bogoliubov transformations is the notion of quasi-free states.
\begin{definition}\label{def:QF}
A state $\bPhi\in\FH$ with $\norm{\bPhi}_\FH=1$ is called a quasi-free (pure) state if it can be expressed as a Bogoliubov transformation of the vacuum, i.e., if there exists a $\BogV\in\fV(\fH)$ such that 
\begin{equation}
\bPhi=\BogU|\Omega\rangle\,.
\end{equation}
The set of all quasi-free states on $\FH$ is denoted by
\begin{equation}
\mathcal{Q}(\fH):=\left\{\bPhi\in\FH\,:\, \exists\, \BogV\in\fV(\fH) \text{ s.t. } \bPhi=\BogU|\Omega\rangle\right\} \subset \FH\,.
\end{equation}
\end{definition}
Our definition of quasi-free states is sometimes referred to as \emph{even quasi-free} \cite{derezinski}.
We define the set of quasi-free excitation vectors at time $t\in\R$ as
\begin{equation}
\Qpt:=\cQ(\fH)\cap\Fpt=\left\{\bPhi\in\Fpt \;:\; \exists\,\BogV\in\fV(\fH) \text{ \emph{s.t. }} \bPhi=\BogU|\Omega\rangle\right\}\,.
\end{equation}
Clearly, any Bogoliubov transformation leaves the set of quasi-free states invariant, i.e.,
\begin{equation}
\BogU\cQ(\fH)=\cQ(\fH)
\end{equation}
for all $\BogV\in\fV(\fH)$.

\begin{definition}\label{def:Gamma}
Define the generalized one-particle density matrix 
$\Gamma_\bPhi: \fH\oplus\fH\to \fH\oplus\fH$
of $\bPhi\in\FH$ as
\begin{equation}
\Gamma_\bPhi:=\begin{pmatrix}
    \gamma_\bPhi & \alpha_\bPhi \\
    \alpha_\bPhi^* & \mathbbm{1}+\gamma_\bPhi^T
  \end{pmatrix}
  \,,
\end{equation}
where the one-body density matrix $\gamma_\bPhi:\fH\to \fH$ and the pairing density $\alpha_\bPhi:\fH\to\fH$ are defined as the operators with kernels
\begin{equation}
\gamma_\bPhi(x,y) = \lr{\bPhi,\ad_y\,a_x\,\bPhi}_\Fock\,,\qquad
\alpha_\bPhi(x,y) = \lr{\bPhi, a_x\,a_y\,\bPhi}_\Fock\,.
\end{equation}
\end{definition}
Quasi-free states are uniquely determined by their generalized one-particle density matrix, or alternatively by Wick's rule.

\begin{lem}
Let $\bPhi\in\FH$ with $\norm{\bPhi}_\Fock=1$. Then the following are equivalent:
\lemit{
\item \begin{equation}\bPhi\in\cQ(\fH)\,.\end{equation}

\item \label{lem:wick}
\begin{equation}\lr{\bPhi,\Number\bPhi}_\FH<\infty
\end{equation}
and, for all $a^\sharp\in\{\ad,a\}$, $n\geq 1$ and $f_1\mydots f_{2n}\in \fH$,
\begin{subequations}\label{eqn:Wick}
\begin{eqnarray}
\lr{\bPhi,a^\sharp(f_1)\mycdots a^\sharp(f_{2n-1})\bPhi}_{\FH} &=& 0\,,\\
\lr{\bPhi,a^\sharp(f_1)\mycdots a^\sharp(f_{2n})\bPhi}_{\FH} &=& \sum\limits_{\sigma\in P_{2n}}\prod\limits_{j=1}^n \lr{\bPhi,a^\sharp(f_{\sigma(2j-1)})a^\sharp(f_{\sigma(2j)})\bPhi}_{\FH}\,,\;
\end{eqnarray}
\end{subequations}
where $P_{2n}$ denotes the set of pairings as in \eqref{Pairing_set_def}.
The property \eqref{eqn:Wick} is known as Wick's rule.
\item 
\begin{equation}
\gamma_\bPhi\alpha_\bPhi=\alpha_\bPhi\gamma_\bPhi^T\,,\qquad \alpha_\bPhi\alpha_\bPhi^*=\gamma_\bPhi(\id+\gamma_\bPhi)\,.
\end{equation}
\item 
\begin{equation}
\Gamma_\bPhi \mathcal{S}\Gamma_\bPhi=-\Gamma_\bPhi\,.
\end{equation}
}
\end{lem}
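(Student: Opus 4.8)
The plan is to prove the cycle of implications $(a)\Rightarrow(b)\Rightarrow(d)\Leftrightarrow(c)\Rightarrow(a)$, which yields all four equivalences. For $(a)\Rightarrow(b)$ I would write $\bPhi=\BogU|\Omega\rangle$ with $\BogV\in\fV(\fH)$, necessarily satisfying $\Tr(V^*V)<\infty$; then $\lr{\bPhi,\Number\bPhi}_\Fock=\lr{\Omega,\BogU^*\Number\BogU\,\Omega}_\Fock=\norm{V}_\HS^2<\infty$ by \eqref{eqn:trafo:ax:invers} (or directly by Lemma~\ref{lem:number:BT} with $b=1$). For Wick's rule I would choose $F_i\in\fH\oplus0$ or $F_i\in0\oplus\fH$ so that $a^{\sharp}(f_i)=A(F_i)$ and use \eqref{U_star_A_U} to reduce
\[
\lr{\bPhi,a^{\sharp}(f_1)\cdots a^{\sharp}(f_m)\bPhi}_\Fock=\lr{\Omega,A(G_1)\cdots A(G_m)\Omega}_\Fock,\qquad G_i:=\BogV^{-1}F_i;
\]
normal-ordering the right-hand side with the CCR \eqref{eqn:CCR} leaves only the fully contracted terms, each contraction of a pair being $\lr{\Omega,A(G_i)A(G_j)\Omega}_\Fock$, and bookkeeping the signs of the pairing permutations reproduces the right-hand side of \eqref{eqn:Wick}; undoing the substitution $G_i=\BogV^{-1}F_i$ then gives Wick's rule for $\bPhi$, the odd case vanishing for parity reasons.

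For $(b)\Rightarrow(d)$ I would apply the four-point case ($n=2$) of \eqref{eqn:Wick} to $A(F_1)A(F_2)A(F_3)A(F_4)$, letting each $F_i$ range over $\fH\oplus0$ and $0\oplus\fH$: the four-point function equals the sum over the three pairings of products of two-point functions $\lr{\bPhi,A(F_i)A(F_j)\bPhi}_\Fock$, which in turn are, up to $\mathcal{S}$-terms arising from the CCR, the block entries of $\Gamma_\bPhi$, and matching the two sides is a finite mechanical computation that yields $\Gamma_\bPhi\mathcal{S}\Gamma_\bPhi=-\Gamma_\bPhi$, i.e.\ $(d)$. The equivalence $(d)\Leftrightarrow(c)$ is then pure linear algebra: multiplying out the $2\times2$ block matrices using $\mathcal{S}=\left(\begin{smallmatrix}1&0\\0&-1\end{smallmatrix}\right)$, the bosonic symmetry $\alpha_\bPhi^T=\alpha_\bPhi$, and $\alpha_\bPhi^*=\overline{\alpha_\bPhi}$, the vanishing of the $(1,1)$ block of $\Gamma_\bPhi\mathcal{S}\Gamma_\bPhi+\Gamma_\bPhi$ is the relation $\alpha_\bPhi\alpha_\bPhi^*=\gamma_\bPhi(\id+\gamma_\bPhi)$, the $(1,2)$ block is $\gamma_\bPhi\alpha_\bPhi=\alpha_\bPhi\gamma_\bPhi^T$, and the remaining two blocks are the adjoint and transpose of these; hence $(d)$ holds iff both relations in $(c)$ do.

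For $(d)\Rightarrow(a)$, the substantive step, I would invoke the normal-form theorem for generalized one-particle density matrices recalled in Section~\ref{subsec:Bogoliubov} (see \cite{solovej_lec,lewin2015,nam2015}): condition $(d)$ says exactly that $\Gamma_\bPhi$ is an admissible quasi-free density matrix, so there is a unitarily implementable Bogoliubov map $\BogV\in\fV(\fH)$ such that the state $\BogU^*\bPhi$ has the same generalized one-particle density matrix as the vacuum (namely $\gamma=\alpha=0$); the Shale--Stinespring condition for $\BogV$ follows from $\Tr\gamma_\bPhi=\lr{\bPhi,\Number\bPhi}_\Fock<\infty$, which is implicit in the trace-class standing assumption on $\gamma_\bPhi$ in $(c)$/$(d)$ and can in any case be imported from $(b)$. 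Setting $\tilde\bPhi:=\BogU^*\bPhi$ one gets $\gamma_{\tilde\bPhi}=0$, hence $\lr{\tilde\bPhi,\Number\tilde\bPhi}_\Fock=\Tr\gamma_{\tilde\bPhi}=0$, so $\tilde\bPhi=c|\Omega\rangle$ with $|c|=1$; since $c|\Omega\rangle\in\cQ(\fH)$ and $\BogU\cQ(\fH)=\cQ(\fH)$, this gives $\bPhi=\BogU\tilde\bPhi\in\cQ(\fH)$. The main obstacle here is precisely this diagonalization step --- the existence of a Bogoliubov map carrying $\Gamma_\bPhi$ to the vacuum normal form together with the verification that it satisfies Shale--Stinespring; this is the only ingredient that is not a bookkeeping computation, and I would rely on the standard theory of Bogoliubov transformations of Section~\ref{subsec:Bogoliubov}. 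The combinatorial core of $(a)\Rightarrow(b)$, namely Wick's theorem for the vacuum, is routine but must be carried out carefully to get the signs of the pairings right.
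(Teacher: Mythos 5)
Your proposal is correct in outline, but it is organized quite differently from what the paper actually does: the paper gives no proof at all of the substantive implications, delegating $(a)\Leftrightarrow(b)$ and $(a)\Leftrightarrow(d)$ to \cite[Theorem 1.6]{nam2011} and \cite[Theorem 10.4]{solovej_lec} and keeping only the block-matrix computation $(c)\Leftrightarrow(d)$ in-house. You instead build a self-contained cycle $(a)\Rightarrow(b)\Rightarrow(d)\Leftrightarrow(c)\Rightarrow(a)$, whose ingredients (conjugating by $\BogU$ via \eqref{U_star_A_U} to reduce Wick's rule to the vacuum, reading off $(c)$ from the four-point case, and diagonalizing $\Gamma_\bPhi$ to the vacuum normal form) are essentially the content of the cited references, so the mathematics agrees; what your route buys is transparency about where the real work sits, namely the normal-form theorem in $(d)\Rightarrow(a)$, which you correctly identify as the only non-mechanical step but still import from the standard theory rather than prove. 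Two small points to tighten: first, for bosons there are no signs in Wick's theorem --- every pairing in $P_{2n}$ contributes with coefficient $+1$, so the ``signs of the pairings'' you worry about twice are a non-issue here (the ordering convention $\sigma(2j-1)<\sigma(2j)$ in \eqref{Pairing_set_def} is what matters, since the contraction $\lr{\Omega,A(G_i)A(G_j)\Omega}$ is not symmetric in $i,j$). Second, in a genuine cycle the implication $(c)\Rightarrow(a)$ may only use $(c)$, and neither $(c)$ nor $(d)$ as stated forces $\Tr\gamma_\bPhi=\lr{\bPhi,\Number\bPhi}<\infty$, which you need for the Shale--Stinespring condition; you cannot ``import it from $(b)$'' at that point. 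This is really an imprecision of the lemma statement (the cited theorem of Nam carries finiteness of the particle number as a standing hypothesis), but you should either add it explicitly to $(c)$ and $(d)$ or restructure so that finiteness is always in force before you invoke the diagonalization.
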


The equivalence of $(a\Leftrightarrow b)$ is proven in \cite[Theorem 1.6]{nam2011}, $(a\Leftrightarrow d)$ in \cite[Theorem 1.6]{nam2011} or \cite[Theorem 10.4]{solovej_lec}, and $(c\Leftrightarrow d)$ is a simple computation.\\

Let us now make the connection to the time-dependent setting. The crucial observation is that the time evolution operator $\Uo(t,s)$ of the Bogoliubov equation \eqref{eqn:Bogoliubov:equation} is a time-dependent Bogoliubov transformation.

\begin{lem} \label{lem:time:dep:BT}
Let $s,t\in\R$, $\bPhi(s)\in\FH$, $\varphi(s)\in H^1({\R^d})$ with $\norm{\varphi(s)}_\fH=1$, and denote by $\pt$ the unique solution of \eqref{hpt} with initial datum $\varphi(s)$.
\lemit{
\item \label{lem:time:dep:BT:0}
The Bogoliubov equation \eqref{eqn:Bogoliubov:equation} with initial datum $\bPhi(0)$ in the quadratic form domain $Q(\d\Gamma(1-\Delta))$ has a unique solution $\bPhi\in C^0([0,\infty),\FH) \cap L_{\mathrm{loc}}^\infty([0,\infty),Q(\d\Gamma(1-\Delta)))$. We denote by 
$\Uo(t,s)$
the corresponding unitary time evolution on $\FH$, i.e.,
\begin{equation}\bPhi(t)=\Uo(t,s)\bPhi(s)\,.\end{equation}
\item \label{lem:time:dep:BT:1}
If $\bPhi(s)$ is quasi-free, then $\bPhi(t)$ is quasi-free, i.e.,
\begin{equation}\Uo(t,s)\cQ(\fH)\subseteq\cQ(\fH)\,.\end{equation}
\item \label{lem:time:dep:BT:2}
Let $\bPhi(s)\in\Fps$ such that $\lr{\bPhi(s),\Number\bPhi(s)}<\infty$. Then $\bPhi(t)\in\Fpt$, which, in particular, implies that
\begin{equation}\Uo(t,s)\Qps\subseteq\Qpt\,.\end{equation}
\item \label{lem:time:dep:BT:3}
$\Uo(t,s)$ is a Bogoliubov transformation on $\FH$, i.e., there exists a two-parameter group of Bogoliubov maps $\BogV(t,s)\in\fV(\fH)$ such that 
\begin{equation}\Uo(t,s)=\BogUts\,.\end{equation}
Moreover, $\BogV(t,s)$ is the weak\footnote{\label{footnote_well_posedness}With ``weak'' solution we mean here that $\BogV(t,s)$ is strongly continuous and $i\partial_t \scp{F}{\BogV(t,s) G} = \scp{F}{\mathcal{A}(t)\BogV(t,s)G}$ for all $G \in L^2(\R^d) \oplus L^2(\R^d)$ and $F \in H^2(\R^d) \oplus H^2(\R^d)$.} solution of the differential equation
\begin{equation}\label{eqn:V(t,s)}
\begin{cases}
\displaystyle\i\partial_t\BogV(t,s)&=\mathcal{A}(t)\BogV(t,s)\,\\[10pt]
\quad\displaystyle\BogV(s,s)&=\id\,,
\end{cases}
\end{equation}
where
\begin{equation}\label{eqn:A(t)}
\mathcal{A}(t)=\begin{pmatrix} 
\hpt+\Kopt& -\Ktpt \\ 
\overline{\Ktpt}& -\left(\hpt+\overline{\Kopt}\right) \end{pmatrix} \,.
\end{equation}
}
\end{lem}

\begin{proof}
Parts (a) and (c) are shown in \cite[Theorem 7]{lewin2015}  (see also \cite{ginibre1979,ginibre1979_2,nam2015,bach2016}),
and assertion (b) is proven in \cite[Proposition 4, Step 5]{nam2015}. Part (d) was shown, in a slightly different setting, in \cite[Theorem 2.2]{benarous2013}, and a similar result was obtained in \cite[Lemma 3.10]{bach2016}. We give a simple proof here. Without loss of generality, let us assume that $s=0$. We prove Lemma \ref{lem:time:dep:BT:3} in three steps.

\begin{claim}\label{app:B:claim:1}
Let $\bPhi_0\in\cQ(\fH)$. Then there exists a unitarily implementable  $\BogV_t\in\mathfrak{V}(\fH)$ such that $\bPhi(t):=\Uo(t,0)\bPhi_0=\BogUt\bPhi_0$.
\end{claim}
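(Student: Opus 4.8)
The plan is to read the claim off the fact that the Bogoliubov flow preserves quasi-freeness (Lemma~\ref{lem:time:dep:BT:1}), combined with the group structure of Bogoliubov maps (Lemma~\ref{lem:properties:BT}). Since $\bPhi_0\in\cQ(\fH)$, Definition~\ref{def:QF} provides a unitarily implementable $\BogtildeV_0\in\fV(\fH)$ with $\bPhi_0=\mathcal{U}_{\BogtildeV_0}|\Omega\rangle$. By Lemma~\ref{lem:time:dep:BT:0} the vector $\bPhi(t):=\Uo(t,0)\bPhi_0$ is well defined in $\FH$, and since $\bPhi_0$ is quasi-free, Lemma~\ref{lem:time:dep:BT:1} yields a unitarily implementable $\BogtildeV_t\in\fV(\fH)$ with $\bPhi(t)=\mathcal{U}_{\BogtildeV_t}|\Omega\rangle$.

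I would then set $\BogV_t:=\BogtildeV_t\,\BogtildeV_0^{-1}$. By Lemma~\ref{lem:properties:BT}, $\fV(\fH)$ is a group and the unitarily implementable Bogoliubov maps form a subgroup, so $\BogV_t\in\fV(\fH)$ is again unitarily implementable (directly: using the form of $\BogtildeV_0^{-1}$ from Lemma~\ref{lem:properties:BT}, the lower-left block of the product is a finite sum of terms in which a Hilbert--Schmidt operator is composed with a bounded one, hence Hilbert--Schmidt). Using that $\BogV\mapsto\mathcal{U}_\BogV$ is a group homomorphism and $\mathcal{U}_{\BogtildeV_0^{-1}}=\mathcal{U}_{\BogtildeV_0}^{*}$ (again Lemma~\ref{lem:properties:BT}), this gives
\[
\BogUt\bPhi_0=\mathcal{U}_{\BogtildeV_t}\mathcal{U}_{\BogtildeV_0}^{*}\bPhi_0=\mathcal{U}_{\BogtildeV_t}\mathcal{U}_{\BogtildeV_0}^{*}\mathcal{U}_{\BogtildeV_0}|\Omega\rangle=\mathcal{U}_{\BogtildeV_t}|\Omega\rangle=\bPhi(t),
\]
which is precisely the assertion.

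I do not anticipate a genuine obstacle in this step; it is essentially a bookkeeping argument resting on Lemma~\ref{lem:time:dep:BT:1}. The two points warranting a line of care are (i) stability of unitary implementability under composition, covered by the subgroup statement of Lemma~\ref{lem:properties:BT}, and (ii) the global-phase ambiguity of implementing unitaries, absorbed by the homomorphism property in the same lemma (or, if one prefers, fixed by hand: any implementing unitary of $\BogV_t$ differs from $\mathcal{U}_{\BogtildeV_t}\mathcal{U}_{\BogtildeV_0}^{*}$ by a phase, which may be removed without altering the conjugation action, hence without altering $\BogV_t$). The substantial content of Lemma~\ref{lem:time:dep:BT:3} — that $\BogV_t$ is independent of the chosen quasi-free $\bPhi_0$, so that $\Uo(t,0)=\BogUt$ holds as an operator identity, and that $\BogV(t,0):=\BogV_t$ solves the differential equation~\eqref{eqn:V(t,s)} — is left to the subsequent claims.
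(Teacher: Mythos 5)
Your proposal is correct and follows essentially the same route as the paper: the paper's proof likewise invokes preservation of quasi-freeness (Lemma \ref{lem:time:dep:BT:1}) to write $\bPhi_0=\mathcal{U}_{\tilde{\BogV}_0}|\Omega\rangle$ and $\bPhi(t)=\mathcal{U}_{\tilde{\BogV}_t}|\Omega\rangle$, and then sets $\BogUt:=\mathcal{U}_{\tilde{\BogV}_t\tilde{\BogV}_0^{-1}}$ using the group structure from Lemma \ref{lem:properties:BT}. Your additional remarks on implementability of the composition and the phase ambiguity are sound elaborations of what the paper leaves implicit.
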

\begin{proof}\renewcommand{\qed}{}
Since $\bPhi_0$ is quasi-free, there exist Bogoliubov transformations $\cU_{\tilde{\BogV}_0}$ and $\cU_{\tilde{\BogV}_t}$ such that $\bPhi_0=\cU_{\tilde{\BogV}_0}|\Omega\rangle$ and $\bPhi(t)=\cU_{\tilde{\BogV}_t}|\Omega\rangle$ by Lemma \ref{lem:time:dep:BT:1}. This proves the claim with $\BogUt:=\cU_{\tilde{\BogV}_t\tilde{\BogV}_0^{-1}}$.
\end{proof}

\begin{claim} \label{app:B:claim:2}
The Bogoliubov map $\BogV_t$ is the unique weak solution of \eqref{eqn:V(t,s)} in $\mathfrak{V}(\fH)$.
\end{claim}
\begin{proof}\renewcommand{\qed}{}
By \cite[Proposition 4, Step 3]{nam2015}, the pair $(\gamma_{\bPhi(t)},\alpha_{\bPhi(t)})$ of density matrices is the unique weak solution of the set of equations \eqref{eqn:PDE} with initial condition $(\gamma_{\bPhi_0},\alpha_{\bPhi_0})$. Equivalently, by Definition \ref{def:Gamma},  $\Gamma_{\bPhi(t)}$ is the unique weak solution of the equation
\begin{equation}
\i\partial_t\Gamma(t) = \mathcal{A}(t)^*\Gamma(t)-\Gamma(t)\mathcal{A}(t)\,,\qquad
 \Gamma(0)=\Gamma_{\bPhi_0}\,,
\end{equation}
for $\mathcal{A}(t)$ as in \eqref{eqn:A(t)}. As $\lr{F_1,\Gamma_{\bPhi(t)} F_2}_{\fH\oplus\fH}=\lr{\bPhi(t),\Ad(F_2)A(F_1)\bPhi(t)}_\FH$ for $F_1,F_2\in\fH\oplus\fH$, one easily verifies that
\begin{equation}\label{eqn:app:B:Gamma:V}
\BogV_t^*\Gamma_{\bPhi(t)}\BogV_t=\Gamma_{\bPhi_0}\,,
\end{equation}
which implies that $\i\partial_t\left(\BogV_t^*\Gamma_{\bPhi(t)}\BogV_t\right)=0$ and consequently
\begin{eqnarray}
\BogV_t^*\Gamma_{\bPhi(t)}\left(\i\partial_t\BogV_t-\mathcal{A}(t)\BogV_t\right) -\hc=0\,.
\end{eqnarray}
This proves the claim  because there is a one-to-one correspondence between $\BogV_t$ and $\Gamma_{\bPhi(t)}$. Note that the weak well-posedness as in Footnote~\ref{footnote_well_posedness} follows from \cite[Theorem~X.69]{rs2} by a Dyson expansion in the interaction picture.
\end{proof}

\begin{claim}
In the sense of operators on $\FH$, it holds that $\BogUt = \Uo(t,0)$.
\end{claim}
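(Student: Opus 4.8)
The plan is to upgrade the identity on generalized density matrices, \eqref{eqn:app:B:Gamma:V}, together with the characterizations of quasi-free states and Bogoliubov transformations, into the operator identity $\BogUt = \Uo(t,0)$ on all of $\FH$. The obstacle is that Claim \ref{app:B:claim:1} only produces $\BogV_t$ (and hence $\BogUt$) via its action on a \emph{single} quasi-free state $\bPhi_0$, whereas we need $\BogUt\bPsi = \Uo(t,0)\bPsi$ for \emph{every} $\bPsi\in\FH$. The key point is that the Bogoliubov map $\BogV_t$ is now independent of which quasi-free state we started from: by Claim \ref{app:B:claim:2}, $\BogV_t$ is \emph{the} unique solution of the initial value problem \eqref{eqn:V(t,s)} in $\mathfrak{V}(\fH)$, an object that does not reference $\bPhi_0$ at all. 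So the first step is to fix an arbitrary quasi-free reference state, e.g.\ $\bPhi_0 = |\Omega\rangle$, apply Claims \ref{app:B:claim:1} and \ref{app:B:claim:2} to obtain the canonical $\BogV_t$ solving \eqref{eqn:V(t,s)} and the associated Bogoliubov transformation $\BogUt$ on $\FH$.

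Next I would compare the two unitary families $\BogUt$ and $\Uo(t,0)$ by their action on creation and annihilation operators. On one hand, $\BogUt A(F)\BogUt^* = A(\BogV_t F)$ for all $F\in\fH\oplus\fH$ by the defining property \eqref{eqn:unit:impl} of a Bogoliubov transformation. On the other hand, one computes directly from the Bogoliubov equation \eqref{eqn:Bogoliubov:equation} — i.e.\ from $\i\partial_t \Uo(t,0) = \FockHopt \Uo(t,0)$ with $\FockHopt$ the quadratic Hamiltonian \eqref{FockHopt} — that the Heisenberg-evolved operators $\Uo(t,0)^* A(F) \Uo(t,0)$ satisfy a linear first-order ODE whose generator is exactly $\mathcal{A}(t)$ from \eqref{eqn:A(t)}; this is the standard fact that conjugation by the flow of a quadratic Hamiltonian is a Bogoliubov flow, and the matrix $\mathcal{A}(t)$ reads off from the coefficients $\hpt$, $\Kopt$, $\Ktpt$ of $\FockHopt$. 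Hence $\Uo(t,0)^* A(F)\Uo(t,0) = A(\BogV(t,0)^{-1}F)$ with $\BogV(t,0)$ the solution of \eqref{eqn:V(t,s)} — the same IVP as in Claim \ref{app:B:claim:2}. By uniqueness, $\BogV(t,0) = \BogV_t$, so $\Uo(t,0)^* A(F)\Uo(t,0) = \BogUt^* A(F)\BogUt$ for all $F$.

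Finally I would conclude that $\BogUt = \Uo(t,0)$ from the fact that the two operators implement the same transformation of the CCR algebra and agree on a suitable vector. Concretely, $\mathcal{W} := \Uo(t,0)\BogUt^*$ is a unitary on $\FH$ commuting with $A(F)$ and $\Ad(F)$ for all $F$, hence commuting with the irreducibly acting $*$-algebra generated by the (smeared) creation/annihilation operators; by irreducibility of the Fock representation, $\mathcal{W} = c\,\mathbb{1}$ for a phase $c\in\C$, $|c|=1$. To pin down $c=1$, I would test against the reference state: since $\bPhi(t) = \Uo(t,0)\bPhi_0 = \BogUt\bPhi_0$ by construction in Claim \ref{app:B:claim:1}, we get $\mathcal{W}\bPhi_0 = \Uo(t,0)\BogUt^*\bPhi_0$; using $\BogUt^*\bPhi_0 = \Uo(t,0)^*\bPhi(t)$ — wait, more cleanly: $\mathcal{W}\BogUt\bPhi_0 = \Uo(t,0)\bPhi_0 = \bPhi(t) = \BogUt\bPhi_0$, so $c\,\BogUt\bPhi_0 = \BogUt\bPhi_0$, and since $\BogUt\bPhi_0 \neq 0$ this forces $c = 1$. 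Therefore $\Uo(t,0) = \BogUt$, which is the claim; the general case $s\neq 0$ follows by replacing the initial time $0$ with $s$ throughout, or by the cocycle property $\Uo(t,s) = \Uo(t,0)\Uo(0,s)$ together with $\BogV(t,s) = \BogV(t,0)\BogV(0,s)$. The only mildly delicate point is ensuring the Heisenberg ODE for $\Uo(t,0)^*A(F)\Uo(t,0)$ is justified on a suitable domain (e.g.\ on vectors in $\mathcal{D}(\Number)$, which is preserved by the well-posed Bogoliubov flow), but this is routine given Lemma \ref{lem:time:dep:BT:0} and the bounds in Lemma \ref{lem:Kjpt:bounded:op}.
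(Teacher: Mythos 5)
Your argument is correct, but it proves the claim along a genuinely different route from the paper. The paper's proof fixes a quasi-free $\bPhi_0$ and shows by induction on $n$ that every vector $\BogUt A(F_1)\mycdots A(F_n)\bPhi_0$ satisfies the Bogoliubov Schr\"odinger equation $\i\partial_t(\cdot)=\FockHopt(\cdot)$, using exactly the commutator identity \eqref{Bog_commutator_relation}; uniqueness of the Bogoliubov flow (Lemma \ref{lem:time:dep:BT:0}) then gives $\BogUt=\Uo(t,0)$ on the dense span of such vectors. You instead show that both unitaries implement the \emph{same} Bogoliubov map (by deriving the Heisenberg ODE for $\Uo(t,0)^*A(F)\Uo(t,0)$ and matching it against the IVP \eqref{eqn:V(t,s)} via the uniqueness of Claim \ref{app:B:claim:2}), and then invoke irreducibility of the Fock representation to conclude $\Uo(t,0)\BogUt^*=c\,\id$, fixing $c=1$ on the reference state from Claim \ref{app:B:claim:1}. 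The computational core is identical in both proofs --- the relation $[A(F),\FockHopt]=A(\mathcal{A}(t)F)$ and uniqueness for \eqref{eqn:V(t,s)} --- but the paper's induction-plus-density argument is self-contained and never needs Schur's lemma for the CCR algebra, whereas your argument imports irreducibility (which, to be made rigorous, should be phrased via Weyl operators or a dense invariant core, since $\mathcal{W}$ must be shown to commute with \emph{unbounded} $A(F)$, $\Ad(F)$). In exchange, your version makes structurally transparent \emph{why} the two unitaries can differ at most by a phase, and your phase-fixing step and the reduction of $s\neq 0$ to $s=0$ are both sound. The one point to be slightly careful about is the antilinearity of $F\mapsto A(F)$ when differentiating $A(\BogV(t,0)F)$ in $t$; this only affects signs and conjugations in the intermediate computation, not the conclusion.
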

\begin{proof}\renewcommand{\qed}{}
Let $\bPhi_0\in\cQ(\fH)$. We prove by induction over $n\in\N$ that
\begin{equation}\label{eqn:app:B:1}
\i\partial_t\BogUt A(F_1)\mycdots A(F_n) \bPhi_0 \;=\;\FockHopt\BogUt A(F_1)\mycdots A(F_n)\bPhi_0
\end{equation}
for any $n\in\N$ and $F_1\mydots F_n \in \fH\oplus\fH$. By  Lemma \ref{lem:time:dep:BT:0}, this implies that
\begin{equation}
\Uo(t,0) A(F_1) \cdots A(F_n) \bPhi_0 = \BogUt A(F_1) \cdots A(F_n) \bPhi_0\,,
\end{equation} 
and then the claim follows by density.
By Claim \ref{app:B:claim:1}, $\i\partial_t\BogUt\bPhi_0=\FockHopt\BogUt\bPhi_0$, and one easily verifies that
\begin{equation}\label{Bog_commutator_relation}
\Big[ A(F), \FockHopt \Big] = A(\mathcal{A}(t)F)\,,\qquad\i\partial_t\BogUt  A(F)\BogUt^* = \left[\FockHopt,A(\BogV_t F)\right]\,,
\end{equation}
which yields \eqref{eqn:app:B:1} for $n=1$. Given \eqref{eqn:app:B:1} for $n-1$,
\begin{eqnarray}
\i\partial_t\BogUt A(F_1)\mycdots A(F_n)\bPhi_0
&=&\left[\FockHopt,A(\BogV_tF_1)\right]\BogUt A(F_2)\mycdots A(F_n)\bPhi_0 \nonumber\\
&&+A(\BogV_tF_1)\FockHopt\BogUt A(F_2)\mycdots A(F_n)\bPhi_0\nonumber\\
&=&\FockHopt\BogUt A(F_1)\mycdots A(F_n)\bPhi_0\,,
\end{eqnarray} 
which completes the induction.
\end{proof}

\end{proof}

Finally, we provide estimates on the time dependence of the solution $\BogV(t,s)$
of \eqref{eqn:V(t,s)}:
\begin{lem}\label{lem:U:V:norms}
Let $\BogV(t,0)$ denote the (weak) solution of \eqref{eqn:V(t,s)} with initial condition $\BogV(0,0)=\id$, i.e.,
\begin{equation}
\BogV(t,0)=\begin{pmatrix}
U_{t,0} & \overline{V}_{t,0} \\ V_{t,0} & \overline{U}_{t,0} 
\end{pmatrix}\,,\qquad
\BogV(0,0)=\begin{pmatrix}
\id_\fH& 0\\0&\id_\fH
\end{pmatrix}.
\end{equation}
Then there exists a constant $C>0$ such that
\begin{equation}
\onorm{U_{t,0}}\ls \e^{Ct}\,,\qquad
\norm{V_{t,0}}_\mathrm{HS} \ls \e^{Ct}\,.
\end{equation}
\end{lem}
\begin{proof}
Define $\mathfrak{u}(t,s):\fH\to\fH$ as the unitary two-parameter group (in the weak sense as discussed in Footnote~\ref{footnote_well_posedness}) generated by the self-adjoint operator $\hpt+\Kopt$, i.e.,
\begin{equation}
\i\partial_s \mathfrak{u}(t,s)=-\mathfrak{u}(t,s)\left(h^\ps+\Kops\right)\,.
\end{equation}
Then it follows from  \eqref{eqn:V(t,s)} that
\begin{equation}
\i\partial_t\left(\mathfrak{u}(0,t)U_{t,0}\right) =-\mathfrak{u}(0,t)\Ktpt V_{t,0}\,,
\end{equation}
for $\Ktpt$ the Hilbert--Schmidt operator with kernel $\Ktpt(x_1,x_2)$,
hence
\begin{equation}
\onorm{U_{t,0}}=\onorm{\mathfrak{u}(0,t)U_{t,0}}=\Big\|U_{0,0}+\i\int\limits_0^t\mathfrak{u}(0,s)\Ktps V_{s,0}\ds\Big\|_\mathrm{op}
\leq 1+\int\limits_0^t\onorm{\Ktps V_{s,0}}\ds\,.
\end{equation}
Since 
\begin{eqnarray}
\onorm{\Ktps V_{s,0}}^2
&\leq&\norm{\Ktps V_{s,0}}^2_\mathrm{HS} \nonumber\\
&=&\Tr\left(\Ubar_{s,0}\Ubar_{s,0}^*\big(\Ktps\big)^*\Ktps \right)-\Tr\left(\big(\Ktps\big)^*\Ktps\right)\nonumber\\
&\leq& \onorm{U_{s,0}}^2\norm{\Ktps(\cdot,\cdot)}^2_{\fH^2}
\;\ls\; \onorm{U_{s,0}}^2
\end{eqnarray}
by \eqref{eqn:rel:U:V} and Lemma \ref{lem:Kjpt:bounded:op}, the first statement follows with Gronwall's lemma.
For the second part of the lemma, one defines $\overline{\mathfrak{u}}(t,s):\fH\to\fH$ as the unitary group generated by $-\big(\hpt+\overline{\Kopt\big)}$, and analogously to above obtains
\begin{equation}
\overline{\mathfrak{u}}(0,t)V_{t,0}=-\i\int\limits_0^t\overline{\mathfrak{u}}(0,s)\overline{\Ktps}U_{s,0}\ds\,,
\end{equation}
hence the previous result implies that
\begin{equation}
\norm{V_{t,0}}_\mathrm{HS} \leq \int\limits_0^t \norm{\Ktps}_\mathrm{HS}\onorm{U_{s,0}}\ds \ls \e^{Ct}\,.
\end{equation}
\end{proof}

\section{Proofs}
\label{sec:proofs}

In the following, we will abbreviate $\norm{\cdot}\equiv\norm{\cdot}_\Fock$ and $\lr{\cdot\,,\cdot}\equiv\lr{\cdot\,,\cdot}_\Fock$.

\subsection{Auxiliary Estimates}\label{subsec:proofs:aux}
Recall that for any $F:\R_0^+ \to \R$ and $j\in\{-1,1\}$, it holds that 
\begin{equation} \label{eqn:aux:1}
a_x^{\sharp_j} F(\Number)= F(\Number-j)a_x^{\sharp_j} \,,\qquad
F(\Number)a_x^{\sharp_j} = a_x^{\sharp_j} F(\Number+j) \,.
\end{equation} 
Second-quantized operators can be bounded in terms of $\Number$:
\begin{lem}\label{lem:aux:new}
Let $n,p\geq 0$, $f:\fH^p\to\fH^n$ be a bounded operator with (Schwartz) kernel $f(x^{(n)};y^{(p)})$, and $\bPhi\in\Fock$. Then
\begin{eqnarray}
\left\|\int\dx^{(n)}\dy^{(p)}f(x^{(n)};y^{(p)})\ad_{x_1}\,\mycdots\,\ad_{x_n}\,a_{y_{1}}\,\mycdots\,a_{y_{p}}\bPhi\right\|
&\leq& \norm{f}_{\fH^{p}\to\fH^n}\norm{(\Number+n)^\frac{n+p}{2}\bPhi}\nonumber\\
&\leq& n^\frac{n+p}{2}\norm{f}_{\fH^{p}\to\fH^n}\norm{(\Number+1)^\frac{n+p}{2}\bPhi}\,.\qquad
\end{eqnarray}
\end{lem}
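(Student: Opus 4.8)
The statement is a standard "number-operator bound for second-quantised operators". I would prove it by induction on $n+p$, reducing the general monomial $\ad_{x_1}\cdots\ad_{x_n}a_{y_1}\cdots a_{y_p}$ to products of single creation/annihilation operators and using the elementary bounds $\|a(f)\bPhi\| \le \|f\|_\fH\|\Number^{1/2}\bPhi\|$ and $\|\ad(f)\bPhi\| \le \|f\|_\fH\|(\Number+1)^{1/2}\bPhi\|$, together with the pull-through relations \eqref{eqn:aux:1}. More precisely, the cleanest route is to work sector-by-sector: fix $\bPhi\in\Fock$, compute the $k$-th sector of $\Psi:=\int\dx^{(n)}\dy^{(p)}f(x^{(n)};y^{(p)})\ad_{x_1}\cdots\ad_{x_n}a_{y_1}\cdots a_{y_p}\bPhi$ explicitly, and estimate its $\fH^k$-norm.

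\textbf{Key steps.} First I would record that $a_{y_1}\cdots a_{y_p}$ maps the $(k+p)$-sector $\phi^{(k+p)}$ to something controlled by $\|(\Number)(\Number-1)\cdots(\Number-p+1)\|^{1/2}$ applied to $\bPhi$; concretely, for $\psi := a_{y_1}\cdots a_{y_p}\bPhi$ one has, using the CCR and Cauchy--Schwarz in the integration variables, the bound
\begin{equation}
\int\dy^{(p)}\,\|\psi^{(k)}(\,\cdot\,;y^{(p)})\text{-part}\|^2 \;\le\; \frac{(k+p)!}{k!}\,\|\phi^{(k+p)}\|_{\fH^{k+p}}^2,
\end{equation}
so that schematically $a(\cdot)^p$ costs a factor $(\Number+ \text{const})^{p/2}$. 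Next, the operators $\ad_{x_1}\cdots\ad_{x_n}$ acting on the $(k-n)$-sector of the result produce the $k$-sector, and by the triangle inequality over the $\binom{k}{n}\cdot n!$ ways of inserting the new variables (or more slickly, by the standard symmetrisation estimate) one picks up a factor bounded by $(k)(k-1)\cdots(k-n+1)\le k^n$, i.e.\ $(\Number)^{n/2}$-type growth, where crucially one must track that after the $p$ annihilations and before the $n$ creations the number operator has been shifted, which is exactly why the clean bound comes out as $(\Number+n)^{(n+p)/2}$ rather than $\Number^{(n+p)/2}$. Finally I would combine: pulling the kernel $f(x^{(n)};y^{(p)})$ through and applying the operator-norm bound $\|f\|_{\fH^p\to\fH^n}$ in the $x,y$ variables (after Cauchy--Schwarz in the "spectator" variables $x_{n+1},\dots$) gives
\begin{equation}
\|\Psi\| \;\le\; \|f\|_{\fH^p\to\fH^n}\,\big\|(\Number+n)^{\frac{n+p}{2}}\bPhi\big\|,
\end{equation}
and the second inequality of the lemma follows from $(\Number+n)^{(n+p)/2}\le n^{(n+p)/2}(\Number+1)^{(n+p)/2}$ as operators on $\Fock$ (using $N+n\le n(N+1)$ for $n\ge1$; the case $n=0$ is trivial since then there are no creation operators and the bound reads $\|a(\cdot)^p\bPhi\|\le\|f\|\,\|\Number^{p/2}\bPhi\|\le\|f\|\,\|(\Number+1)^{p/2}\bPhi\|$).

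\textbf{Main obstacle.} The only genuinely delicate point is bookkeeping the number-operator shifts so that one arrives at $(\Number+n)^{(n+p)/2}$ with the correct constant $1$ in front, rather than some larger combinatorial constant; the natural way is to apply \eqref{eqn:aux:1} to move all number-operator factors to the right past the annihilation operators and to the left past the creation operators, and then to use that $\langle\bPhi,\,a^\dagger(f_1)\cdots a^\dagger(f_n)a(g_1)\cdots a(g_p)\,\cdot\,\rangle$ is, after Wick-ordering is already assumed, bounded sectorwise by the falling factorials above. A convenient shortcut avoiding explicit combinatorics is to bound $\|\ad_{x_1}\cdots\ad_{x_n}\Theta\|^2$ by expanding the square, integrating, and recognising the resulting double sum as $\langle\Theta,(\Number+n)(\Number+n-1)\cdots(\Number+1)\Theta\rangle$ when $\Theta$ ranges over the relevant sector; this is routine but is where one must be careful, and I would relegate it to the short computation it deserves rather than belabour it.
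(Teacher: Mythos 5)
Your proposal is correct and follows essentially the same route as the paper: the authors also work sector by sector, writing out the $k$-th component of the resulting vector as a sum over the $k!/(k-n)!$ insertions of the created variables, applying the triangle inequality and the operator-norm bound on $f$ after Cauchy--Schwarz in the spectator variables, and combining the combinatorial prefactors into $(k+p)^{(n+p)/2}\|\phi^{(k-n+p)}\|$ before summing the squares over $k$. The only cosmetic difference is that you frame it as a possible induction on $n+p$ before settling on the direct sectorwise computation, which is what the paper actually carries out.
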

\begin{proof}
Abbreviating $z_j^{(n)}=(z_{j_1}\mydots z_{j_n})$, we compute for $k \geq n$
\begin{eqnarray}
&&\hspace{-1cm}\left\|\left[\int\dx^{(n)}\dy^{(p)}f(x^{(n)};y^{(p)})\ad_{x_1}\mycdots\ad_{x_n}a_{y_1}\mycdots a_{y_p}\bPhi\right]^{(k)}\right\|_{\fH^k} \nonumber\\
&\leq& \sqrt{\frac{(k-n+p)!}{k!}} 
\sum\limits_{\substack{j_1\neq\dots\neq j_n\\\in\{1\mydots k\}}}
\left(\int\dz^{(k)}\left|\int\dy^{(p)}f(z_j^{(n)};y^{(p)})\phi^{(k-n+p)}\big(z^{(k)},y^{(p)}\setminus z_j^{(n)}\big)\right|^2\right)^\frac12\nonumber\\
&\leq& (k+p)^\frac{p+n}{2}\norm{f}_{\fH^p\to\fH^n}\norm{\phi^{(k-n+p)}}_{\fH^{k-n+p}}\,,
\end{eqnarray}
and summing over $k\geq0$ completes the proof. 
\end{proof}

We now collect some useful properties of $\FockHpt$, $\FockHnpt$ and $\mathbb{K}^{(j)}_\pt$. 

\begin{lem}\label{lem:Kjpt:bounded:op}
\lemit{
\item For  $\Kopt$ to $\Kfpt$ as defined in \eqref{K}, it holds that
\begin{equation}\begin{split}
&\norm{\Kopt}_{\fH\to\fH}\leq\norm{v}_\infty\,, \qquad
\norm{\Ktpt}_{\fH^2}\leq\norm{v}_\infty\,,\\
&\norm{\Kthpt}_{\fH\to\fH^2}\ls\norm{v}_\infty\,,\qquad
\norm{\Kfpt}_{\fH^2\to\fH^2}\ls\norm{v}_\infty\,.
\end{split}\end{equation}

\item \label{lem:K:a:norms}
Let $b\geq0$. For $\mathbb{K}_\pt^{(j)}$ as in \eqref{eqn:K:notation} and any $\bPhi\in\Fock$, it holds that
\begin{subequations}
\begin{eqnarray}
\norm{(\Number+1)^b\boldKopt\bPhi} &\ls& \norm{(\Number+1)^{b+1}\bPhi}\\
\norm{(\Number+1)^b\mathbb{K}^{(j)}_\pt\bPhi}&\leq& C(b)\norm{(\Number+1)^{b+\frac{j}{2}}\bPhi} \quad\text{ for } j=2,3,4\,,\\[5pt]
\norm{(\Number+1)^b\overline{\mathbb{K}^{(j)}_\pt}\bPhi} &\leq& C(b) \norm{(\Number+1)^{b+\frac{j}{2}}\bPhi} \quad \text{  for  }j=2,3
\end{eqnarray}
\end{subequations}
for some constant $C(b)>0$.

\item \label{lem:H:norms}
Let $n\in\N$ and $b\geq0$. For $\FockHnpt$ as in \eqref{FockHnpt}  and any $\bPhi\in\Fock$, 
\begin{equation}
\norm{(\Number+1)^b\FockHnpt\bPhi} \ls C(n,b) \norm{(\Number+1)^{b+\frac{n+2}{2}}\bPhi}
\end{equation}
for some constant $C(n,b)>0$.
}

\end{lem}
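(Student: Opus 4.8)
The plan is to prove the three parts in order, each reducing to a direct application of Lemma \ref{lem:aux:new} together with the explicit kernel bounds from \eqref{K}. For part (a), I would estimate each operator kernel against $\norm{v}_\infty$ using $\norm{\pt}_\fH = 1$. Concretely, $\norm{\Kopt}_{\fH\to\fH} \leq \norm{\qpt \tKopt \qpt}_{\fH\to\fH} \leq \onorm{\tKopt}$, and $\tKopt$ has kernel $\overline{\varphi(t,x_2)}v(x_1-x_2)\varphi(t,x_1)$; writing it as $\tKopt = M_{\pt} \star M_{\pt}$ where $M_\pt$ is multiplication by $\pt$ and $\star$ is convolution against $v$, one bounds $\onorm{\tKopt} \leq \norm{v}_\infty \norm{\pt}_\fH^2 = \norm{v}_\infty$ via Schur's test (the kernel satisfies $\int|\tKopt(x_1;x_2)|\dx_2 \leq \norm{v}_\infty\norm{\pt}_\fH|\pt(x_1)|$, and symmetrically). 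Similarly, $\Ktpt$ as a vector in $\fH^2$ has $\norm{\Ktpt}_{\fH^2}^2 = \int|v(x_1-x_2)|^2|\pt(x_1)|^2|\pt(x_2)|^2 \leq \norm{v}_\infty^2$; the bounds for $\Kthpt$ and $\Kfpt$ follow the same way from the kernel \eqref{eqn:W(t,x,y)}, noting $\norm{v*|\pt|^2}_\infty \leq \norm{v}_\infty$ and $|\mpt| \leq \tfrac12\norm{v}_\infty$.

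For part (b), the key observation is the commutation identity \eqref{eqn:aux:1}: any bounded function of $\Number$ commutes past creation/annihilation operators at the cost of shifting its argument by a bounded integer. So for $\boldKtpt = \tfrac12\int \Ktpt(x_1,x_2)\ad_{x_1}\ad_{x_2}$, I would write $(\Number+1)^b\boldKtpt = \boldKtpt(\Number+3)^b \cdot (\text{bounded shift})$ — more precisely $(\Number+1)^b \boldKtpt\bPhi = \boldKtpt(\Number-1)^b\bPhi$ — and then apply Lemma \ref{lem:aux:new} with $n=2$, $p=0$, $f = \tfrac12\Ktpt$, giving $\norm{(\Number+1)^b\boldKtpt\bPhi} \lesssim \norm{\Ktpt}_{\fH^2}\norm{(\Number+1)^{b+1}\bPhi}$ after absorbing $(\Number-1)^b \leq (\Number+1)^b$ and $(\Number+1)^{1} = (\Number+1)^{2/2}$; the exponent $b+\tfrac{j}{2}$ for $j=2$ is exactly $b+1$. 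For $j=3$ ($\boldKthpt$, two creation and one annihilation), Lemma \ref{lem:aux:new} gives $(\Number+1)^{(n+p)/2} = (\Number+1)^{3/2}$, matching $b + \tfrac32$; and for $j=4$, $(\Number+1)^{4/2} = (\Number+1)^2$. The case $j=1$ ($\boldKopt$, one creation one annihilation) gives $(\Number+1)^1$ as stated. The conjugated operators $\overline{\mathbb{K}^{(j)}_\pt} = (\mathbb{K}^{(j)}_\pt)^*$ are handled identically since taking adjoints swaps $\ad \leftrightarrow a$ but preserves the total number $n+p$, and the kernel bounds from part (a) are symmetric.

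Part (c) is then just assembling part (b) over the finitely many terms in \eqref{FockHnpt}, keeping track of the worst exponent. For $\FockHnpt$ with $n$ even, \eqref{FockHt:2n+1} is a sum $\sum_\nu d_{n,\nu}(\boldKtpt(\Number-1)^\nu + \hc)$ with $0 \leq \nu \leq n/2$; pushing $(\Number+1)^b$ and the extra $(\Number-1)^\nu \leq (\Number+1)^\nu$ through, part (b) with $j=2$ yields $\norm{(\Number+1)^{b+\nu+1}\bPhi} \leq \norm{(\Number+1)^{b+(n+2)/2}\bPhi}$ since $\nu + 1 \leq n/2 + 1 = (n+2)/2$. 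For $n$ odd, \eqref{FockHt:2n} has $\boldKthpt(\Number-1)^{(n-1)/2}$, giving exponent $b + \tfrac{n-1}{2} + \tfrac32 = b + \tfrac{n+2}{2}$; and $\boldKfpt$ contributes $b+2 \leq b + \tfrac{n+2}{2}$ once $n \geq 2$, while for the special cases $n=1,2$ in \eqref{FockHtpt}--\eqref{FockHthpt} one checks the exponents directly ($\FockHtpt$: max $\tfrac32 = \tfrac{1+2}{2}$; $\FockHthpt$: terms $\boldKopt(\Number-1)$ giving $2$, $\boldKtpt(\Number-\tfrac12)$ giving $2$, $\boldKfpt$ giving $2$, all equal to $\tfrac{2+2}{2}$). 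Absorbing the constants $c_n^{(\l)}, d_{n,\nu}$ and the binomial factors from the shifts into $C(n,b)$ finishes the proof.

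\textbf{Main obstacle.} No step is genuinely hard; the only thing requiring care is bookkeeping the $\Number$-shifts when commuting $(\Number+1)^b$ past the monomials in creation/annihilation operators (identity \eqref{eqn:aux:1}) and verifying that in every term of \eqref{FockHnpt} the resulting exponent never exceeds $b + \tfrac{n+2}{2}$ — in particular that the powers of $\Number$ explicitly multiplying the $\mathbb{K}^{(j)}$ in the higher-order terms do not push the count past this bound. This is a finite, if slightly tedious, case check driven entirely by the structure of \eqref{FockHnpt}.
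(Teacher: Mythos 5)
Your proposal takes essentially the same route as the paper, whose entire proof is the one-line observation that part (a) follows from the definitions \eqref{K} using $\norm{\pt}_\fH=1$ and $\norm{\Wpt}_\infty\ls\norm{v}_\infty$, while (b) and (c) follow from (a) together with Lemma \ref{lem:aux:new}; your exponent bookkeeping in (b)--(c) is correct. Two harmless slips worth fixing: the Schur-test bound $\int|\tKopt(x_1;x_2)|\dx_2\le\norm{v}_\infty\norm{\pt}_\fH|\pt(x_1)|$ is false as written (it would require $\norm{\pt}_{L^1}$) --- the correct elementary argument is Cauchy--Schwarz in $x_2$, giving $|(\tKopt f)(x_1)|\le|\pt(x_1)|\,\norm{v}_\infty\norm{\pt}_\fH\norm{f}_\fH$ and hence $\onorm{\tKopt}\le\norm{v}_\infty$ --- and the commutation identity from \eqref{eqn:aux:1} gives $(\Number+1)^b\boldKtpt=\boldKtpt(\Number+3)^b$ rather than $\boldKtpt(\Number-1)^b$, which only changes the constant $C(b)$.
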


\begin{proof}
Since  $\norm{\pt}_\fH=1$ and $\norm{\Wpt}_\infty \ls \norm{v}_\infty$,
part (a) follows directly from the definition \eqref{K}, and parts (b) and (c) are implied by part (a) and Lemma \ref{lem:aux:new}.
\end{proof}

Finally, note that $\FockHpt$ leaves the subspace $\FN$ invariant because the first and last line in \eqref{eq_H_N_Fock} preserve the particle number and the remaining terms are zero on the sectors with more than $N$ particles. 
Moreover, all terms in $\FockHpt$ except for $\boldKzpt$ preserve the subspace $\Fpt$, hence $\FockHpt$ leads out of the subspace $\Fpt$ but the time evolution generated by $\FockHNpt$ maps $\FNps$ into $\FNpt$.
We summarize these observations in the following lemma:
\begin{lem}\label{lem:HN:preserves:truncation}
For $\FockHpt$ as in \eqref{eq_H_N_Fock}, $\mathbb{K}^{(j)}_\pt$ as in \eqref{eqn:K:notation} and $j\neq 0$,
\begin{eqnarray}
\FockHpt\FN&\subseteq&\FN\,,\\
\mathbb{K}^{(j)}_\pt\Fpt\,,\, \overline{\mathbb{K}^{(j)}_\pt}\Fpt&\subseteq&\Fpt\,\,,\\
\left(\FockHNpt-\boldKzpt\right)\FNpt&\subseteq&\FNpt\,.
\end{eqnarray}
\end{lem}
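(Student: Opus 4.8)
The plan is to verify the three inclusions directly from the explicit formula \eqref{eq_H_N_Fock} and the definitions \eqref{K}, using two elementary facts about the excitation Fock space $\Fpt=\bigoplus_k\bigotimes_\sym^k\{\pt\}^\perp$: for $g\in\{\pt\}^\perp$ the operator $\ad(g)$ maps $\bigotimes_\sym^k\{\pt\}^\perp$ into $\bigotimes_\sym^{k+1}\{\pt\}^\perp$, while for \emph{any} $f\in\fH$ the operator $a(f)$ maps $\bigotimes_\sym^k\{\pt\}^\perp$ into $\bigotimes_\sym^{k-1}\{\pt\}^\perp$ (killing the vacuum sector); both follow at once from the defining formulas of $a(f),\ad(f)$ by pairing the image with $\pt$ in one variable. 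Consequently an operator of the form $\int(\text{kernel})\,\ad_{x_1}\mycdots\ad_{x_r}a_{y_1}\mycdots a_{y_s}$ leaves $\Fpt$ invariant as soon as its kernel is unchanged by the projection $\qpt$ applied in each of the creation variables $x_1,\dots,x_r$. For the first inclusion I would then run through the summands of \eqref{eq_H_N_Fock}: $\boldKzpt$, $\boldKopt$ and $\boldKfpt$ preserve the particle number and the coefficients $\tfrac{N-\Number}{N-1}$, $\tfrac{1}{N-1}$ are functions of $\Number$, so those terms preserve every sector and in particular $\FN$; the terms containing $\boldKtptbar$ and $\boldKthptbar$ lower the particle number and so cannot leave $\FN$; and for $\boldKtpt\,\tfrac{\sqrt{[(N-\Number)(N-\Number-1)]_+}}{N-1}$ and $\boldKthpt\,\tfrac{\sqrt{[N-\Number]_+}}{N-1}$ one uses that $[(N-\Number)(N-\Number-1)]_+$ vanishes on the $(N-1)$- and $N$-particle sectors and $[N-\Number]_+$ on the $N$-particle sector, so these operators act nontrivially only where the subsequent raising of the particle number by $2$, resp.\ $1$, still lands inside $\FN$.

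For the second inclusion, the key is that by \eqref{K} the kernels $\Kopt=\qpt\tKopt\qpt$, $\Ktpt\in\{\pt\}^\perp\otimes\{\pt\}^\perp$, $\Kthpt$ and $\Kfpt$ all carry a $\qpt$ on every creation slot (the ranges of $\Kthpt$ and $\Kfpt$ lie in $\{\pt\}^\perp$ and $\{\pt\}^\perp\otimes\{\pt\}^\perp$). Hence, with $\mathbb{K}^{(j)}_\pt$ written as in \eqref{eqn:K:notation}, the criterion above gives $\mathbb{K}^{(j)}_\pt\Fpt\subseteq\Fpt$ for $j=1,\dots,4$. For the conjugates: $\overline{\mathbb{K}^{(2)}_\pt}=(\boldKtpt)^*$ contains no creation operators, so it trivially preserves $\Fpt$; and $\overline{\mathbb{K}^{(3)}_\pt}=(\boldKthpt)^*=\int\dx^{(3)}\,\overline{\Kthpt(x_1,x_2;x_3)}\,\ad_{x_3}a_{x_2}a_{x_1}$ has the single creation operator $\ad(f)$ with $f=\overline{\Kthpt(x_1,x_2;\cdot)}$, and $f\in\{\pt\}^\perp$ because $\lr{\pt,f}=\overline{(\Kthpt\pt)(x_1,x_2)}=0$, the vanishing of $\Kthpt\pt$ being immediate from \eqref{K:3} and $\qpt\pt=0$. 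Thus $\overline{\mathbb{K}^{(j)}_\pt}\Fpt\subseteq\Fpt$ for $j=2,3$ as well. (Throughout it is convenient to expand in an orthonormal basis of $\fH$ whose first vector is $\pt$, so that ``creates a state in $\{\pt\}^\perp$'' simply means the coefficients with a $\pt$-index vanish.)

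For the third inclusion I would observe that $\FNpt=\FN\cap\Fpt$ as subspaces of $\Fock$ and that $\FockHNpt-\boldKzpt=(\FockHpt-\boldKzpt)\big|_{\FNpt}$. Each summand of $\FockHpt-\boldKzpt$ is a function of $\Number$ (hence sector-preserving, hence $\Fpt$-preserving) composed with one of $\boldKopt,\boldKtpt,\boldKtptbar,\boldKthpt,\boldKthptbar,\boldKfpt$, all of which preserve $\Fpt$ by the second inclusion, so $(\FockHpt-\boldKzpt)\Fpt\subseteq\Fpt$. Since $\boldKzpt$ preserves the particle number, the first inclusion gives $(\FockHpt-\boldKzpt)\FN\subseteq\FN$, and intersecting the two invariances yields $(\FockHNpt-\boldKzpt)\FNpt\subseteq\FN\cap\Fpt=\FNpt$. (The term $\boldKzpt=\mathbb{K}^{(0)}_\pt$ must be excluded here, since $\hpt$ does not map $\{\pt\}^\perp$ into itself.)

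I do not expect any genuine obstacle: the statement is a verification, and the only points requiring care are the exact identification of the sectors on which the positive-part cutoffs in \eqref{eq_H_N_Fock} vanish (first inclusion) and the tracking of the $\qpt$-projections through \eqref{K}, and through the adjoint for $\overline{\mathbb{K}^{(3)}_\pt}$ (second inclusion); both become transparent once a $\pt$-adapted orthonormal basis of $\fH$ is fixed.
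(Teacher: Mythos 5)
Your proof is correct and follows essentially the same route as the paper, which justifies this lemma only in the short paragraph preceding its statement: the particle-number bookkeeping together with the vanishing of the cutoffs $[(N-\Number)(N-\Number-1)]_+$ and $[N-\Number]_+$ on the top sectors gives invariance of $\FN$, and the $\qpt$-projections built into \eqref{K} (including the orthogonality $\Kthpt\pt=0$ needed for $\boldKthptbar$) give invariance of $\Fpt$ for everything except $\boldKzpt$. Your writeup merely makes these observations explicit, with no gap.
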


\subsection{Proof of Lemma \ref{cor:ass}}\label{subsec:proofs:cor:ass}

Combining \eqref{eqn:Chil:0:normal:order}, Lemma \ref{lem:aux:new} and \eqref{def:Chi0:0}, we find for part (a) that
\begin{equation}
\norm{(\Number+1)^\frac{b}{2}\Chil(0)})
\ls C(\l,b)\norm{(\Number+1)^\frac{3\l+b}{2}\BogUz\ad(f_1)\mycdots \ad(f_{\tilde{\nu}})|\Omega\rangle}
\ls C(\l,b)
\end{equation}
by Lemma \ref{lem:number:BT}. Part (b) follows from part (a) and \eqref{eqn:initial:expansion} by a simple triangle argument since
\begin{eqnarray}
&&\hspace{-1cm}\left\|(\Number+1)^{\frac{b}{2}}\ChiN(0)\right\|_{\FN} \nonumber\\
&&\leq\; \left\|(\Number+1)^{\frac{b}{2}}\bigg(\ChiN(0)-\sum\limits_{\l=0}^{a}\lN^\frac{\l}{2}\Chil(0)\bigg)\right\|_{\FN}
+\;\sum\limits_{\l=0}^{a}\lN^\frac{\l}{2} \left\|(\Number+1)^{\frac{b}{2}}\Chil(0)\right\|_{\FN} \nonumber\\
&&\ls\; C(b)\left((N+1)^{\frac{b}{2}}N^{-\frac{a+1}{2}} + 1\right)
\end{eqnarray}
as $\Number\leq N$ in the sense of operators on $\FN$.\qed

\subsection{Proof of Proposition \ref{thm:duhamel}}
\label{subsec:proofs:duhamel}

First, we expand the $N$-dependent square roots in \eqref{eq_H_N_Fock} in a Taylor series and estimate the remainders.
\begin{lem}\label{lem:calculus}
Let $a\in\N_0$ and define $c_\l^{(n)}$ and $d_{\l,j}$ as in \eqref{eqn:taylor:coeff} and \eqref{eqn:taylor:coeff:2}.
Then
\begin{equation}\label{eqn:cln:bound}
|c_\l| \leq \frac{1}{2\l}\,,\qquad |c_\l^{(j)}| \leq 2^{j+\l-1}\,,\qquad |d_{\l,j}|\leq 2^\l(j+1)\,.
\end{equation}
\lemit{
\item \label{lem:calculus:1}
Define the operator $R^{(3)}_a$ on $\Fock$ via the identity
\begin{equation}\label{eqn:lem:calculus:R3:1}
\frac{\sqrt{\big[N-\Number\big]_+}}{N-1}
\;=\; \sum\limits_{\l=0}^a c_\l\, \lN^{\l+\frac12}
(\Number-1)^\l + \lN^{a+\frac32}\,R^{(3)}_a\,.
\end{equation}
Then $[R^{(3)}_a,\Number]=0$ and it holds for any  $\bPhi\in\Fock$ that
\begin{equation}\label{eqn:lem:calculus:R3:2}
\norm{R^{(3)}_a\bPhi}
\;\leq\; 2^{a+1}\norm{(\Number+1)^{a+1}\bPhi}\,.
\end{equation} 

\item \label{lem:calculus:2}
Define the operator $R^{(2)}_a$ on $\Fock$ through 
\begin{equation}\label{eqn:lem:calculus:R2:1}
\frac{\sqrt{\big[(N-\Number)(N-\Number-1)\big]_+}}{N-1}
\;=\; \sum\limits_{\l=0}^a \lN^\l \sum\limits_{j=0}^\l d_{\l,j} (\Number-1)^j\,
+\lN^{a+1} R^{(2)}_a\,.
\end{equation}
Then $[ R^{(2)}_a,\Number]=0$ and it holds for any  $\bPhi\in\Fock$ that
\begin{equation}\label{eqn:lem:calculus:R2:2}
\norm{R^{(2)}_a\bPhi}\;\leq\; (a+1)^24^{a+1} \norm{(\Number+1)^{a+1}\bPhi}\,.
\end{equation} 
}
\end{lem}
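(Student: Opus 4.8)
\emph{Strategy.} Every operator in the statement is a function of $\Number$ that acts as a scalar on each $k$-particle sector, so $[R^{(3)}_a,\Number]=[R^{(2)}_a,\Number]=0$ is automatic and it suffices to bound the number by which $R^{(3)}_a$, resp.\ $R^{(2)}_a$, multiplies the $k$-sector; the operator bounds \eqref{eqn:lem:calculus:R3:2} and \eqref{eqn:lem:calculus:R2:2} then follow from $\norm{(\Number+1)^b\bPhi}^2=\sum_{k\ge0}(k+1)^{2b}\norm{\phi^{(k)}}_{\fH^k}^2$. The plan is to prove the coefficient estimates \eqref{eqn:cln:bound} first: recognizing $c_n^{(\l)}$ as the coefficient of $x^n$ in $(1-x)^{1/2-\l}=\sum_n c_n^{(\l)}x^n$, one gets $|c_\l^{(j)}|\le\tfrac1{\l!}\prod_{i=0}^{\l-1}(j+i)=\binom{j+\l-1}{\l}\le 2^{j+\l-1}$ for $j\ge1$, while $|c_\l|\le\tfrac1{2\l}$ follows by induction from $|c_1|=\tfrac12$ and the recursion $|c_\l|=|c_{\l-1}|\,\tfrac{2\l-3}{2\l}$ ($\l\ge2$); this also settles $j=0$ since $\tfrac1{2\l}\le2^{\l-1}$, and the bound on $d_{\l,j}$ is then immediate from \eqref{eqn:taylor:coeff:2} by the triangle inequality, using $|c_\l^{(0)}|\le1$ and $|c_{n-\nu}^{(\l)}|\le2^{n-1}$ for $0\le\l\le\nu$.

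\emph{Part (a).} For $0\le k\le N$ the left side of \eqref{eqn:lem:calculus:R3:1} acts on the $k$-sector by $\lN^{1/2}g(\lN(k-1))$ with $g(y):=\sqrt{1-y}$, and the subtracted polynomial by $\lN^{1/2}P_a(\lN(k-1))$, where $P_a$ is the degree-$a$ Taylor polynomial of $g$ at $0$. The point is that the coefficients $c_\l$, $\l\ge1$, are all negative and $\sum_{\l\ge1}|c_\l|=c_0-g(1)=1$, so $|g(y)-P_a(y)|\le|y|^{a+1}\sum_{\l\ge a+1}|c_\l|\le|y|^{a+1}$ for $|y|\le1$; hence $R^{(3)}_a$ multiplies the $k$-sector by a number of modulus $\lN^{-(a+1)}|g(\lN(k-1))-P_a(\lN(k-1))|\le|k-1|^{a+1}\le(k+1)^{a+1}$. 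For $k>N$ the left side of \eqref{eqn:lem:calculus:R3:1} vanishes on the $k$-sector, so $R^{(3)}_a$ acts there by $-\sum_{\l=0}^a c_\l\lN^{\l-a-1}(k-1)^\l$, of modulus $\le(a+1)(k+1)^{a+1}$ since $|c_\l|\le1$ and $(N-1)^{a+1-\l}(k-1)^\l\le(k-1)^{a+1}$ (using $N-1<k$). In all cases the modulus is $\le2^{a+1}(k+1)^{a+1}$, which is \eqref{eqn:lem:calculus:R3:2}.

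\emph{Part (b).} On the sectors $k\le N-2$ I would factor $\tfrac{\sqrt{(N-\Number)(N-\Number-1)}}{N-1}=g_1g_2$ with $g_1=\sqrt{1-\lN(\Number-1)}$ and $g_2=\sqrt{1-\lN\Number}$; both arguments lie in $[0,1)$ on these sectors, so by part (a) the sector scalars split as $g_i=G_a^{(i)}+\lN^{a+1}\widetilde R_a^{(i)}$, where $G_a^{(i)}$ is a polynomial of degree $\le a$ in $\lN$ (with coefficients polynomial in $\Number$), $G_a^{(i)}$ acts on the $k$-sector with modulus $\le\sum_{\l=0}^a|c_\l|\le2$, and $\widetilde R_a^{(i)}$ with modulus $\le(k+1)^{a+1}$. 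Multiplying out gives $g_1g_2=G_a^{(1)}G_a^{(2)}+\lN^{a+1}\big(G_a^{(1)}\widetilde R_a^{(2)}+\widetilde R_a^{(1)}G_a^{(2)}\big)+\lN^{2(a+1)}\widetilde R_a^{(1)}\widetilde R_a^{(2)}$, and since the degree-$\le a$ part of the polynomial $G_a^{(1)}G_a^{(2)}$ is forced to be the degree-$a$ Taylor polynomial of $g_1g_2$ — which one checks equals $\sum_{\l=0}^a\lN^\l\sum_j d_{\l,j}(\Number-1)^j$, the expression subtracted in \eqref{eqn:lem:calculus:R2:1} — the scalar $R^{(2)}_a$ on the $k$-sector is the sum of: the part of $G_a^{(1)}G_a^{(2)}$ of degree between $a+1$ and $2a$ divided by $\lN^{a+1}$, the two cross terms, and $\lN^{a+1}\widetilde R_a^{(1)}\widetilde R_a^{(2)}$. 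Bounding the coefficients of $G_a^{(1)}G_a^{(2)}$ crudely and summing the resulting geometric series via $\lN k\le1$, using $|G_a^{(i)}|\le2$ for the cross terms and $\lN^{a+1}(k+1)^{2(a+1)}\le(k+1)^{a+1}$ for the last term, each piece is $\lesssim_a(k+1)^{a+1}$, and carrying the explicit constants through lands within $(a+1)^2 4^{a+1}(k+1)^{a+1}$. The remaining sectors $k\ge N-1$ — where the radicand is $0$ for $k\in\{N-1,N\}$ and equals $(k-N)(k-N+1)$ for $k\ge N+1$ — are handled directly as in the $k>N$ case of (a): the sector value of the radical expression is $\le\lN(k+1)$, the polynomial part is $\le\sum_{\l,j}|d_{\l,j}|(N-1)^{a+1-\l}(k-1)^\l\le(a+1)^2 2^{a+1}(k+1)^{a+1}$, and together these stay within $(a+1)^2 4^{a+1}(k+1)^{a+1}$.

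\emph{Main obstacle.} The only genuinely delicate step is the part-(b) remainder. The $\lN$-expansion of $\tfrac{\sqrt{(N-\Number)(N-\Number-1)}}{N-1}$ has radius of convergence only of order $\Number^{-1}$, so on sectors with $\Number$ close to $N$ the parameter $\lN$ lies near the boundary of convergence and a naive term-by-term bound on the tail diverges; the cancellation that keeps $R^{(2)}_a$ bounded must be made explicit, which is precisely why I keep the factored form $g_1g_2$ and only ever expand the two individual square roots, whose arguments $\lN(\Number-1)$ and $\lN\Number$ are genuinely $\le1$ on the relevant sectors. The rest is bookkeeping: identifying the $d_{\l,j}$ with the Taylor coefficients (a direct computation from \eqref{eqn:taylor:coeff}--\eqref{eqn:taylor:coeff:2}) and propagating the numerical constants through the product expansion and the three sector regimes so as to land exactly on $(a+1)^2 4^{a+1}$.
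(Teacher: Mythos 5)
Your proof is correct, and it takes a genuinely different route from the paper's in both parts. For part (a) the paper estimates the Taylor remainder of $\sqrt{1-x}$ via its integral/Lagrange form with a case split at $x=\tfrac12$; you instead exploit that $c_\l<0$ for $\l\geq1$ with $\sum_{\l\geq1}|c_\l|=1$, so the tail of the series (absolutely convergent on $[-1,1]$) is bounded by $|x|^{a+1}$ outright --- shorter, and it lands on the same constant $2^{a+1}$. For part (b) the paper Taylor-expands the product $\sqrt{(1-x)(1-x-\lN)}$ in $x$ (obtaining $\lN$-dependent coefficients), then re-expands those coefficients in $\lN$, and controls the two resulting remainders $r(k,a)$ and $\tilde{R}_{a,\lN}$ by separate integral-remainder estimates with case distinctions on the range of $x$; your factorization into $\sqrt{1-\lN(\Number-1)}\cdot\sqrt{1-\lN\Number}$ lets you reuse the part-(a) tail bound on each factor, so the remainder reduces to two cross terms plus the degree-$(a+1)$-to-$2a$ part of the product of the truncations, all of which collapse under $\lN(k+1)\leq 1$ on the sectors $k\leq N-2$. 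This is a cleaner organization of the same cancellation the paper must exhibit, and your explicit treatment of the sectors $k\geq N+1$ --- where the positive part of the radicand does \emph{not} vanish --- is more careful than the paper's one-line remark. Two small points to tighten: (i) the identification of the degree-$\leq a$ part (in $\lN$) of $G^{(1)}_aG^{(2)}_a$ with $\sum_{\l=0}^{a}\lN^\l\sum_{j}d_{\l,j}(\Number-1)^j$ is asserted but not carried out; it does follow from uniqueness of the Taylor coefficients of $\lN\mapsto\sqrt{(1-\lN(k-1))(1-\lN k)}$, but one must still check that re-expanding $\sum_{\l+m=n}c_\l c_m(k-1)^\l k^m$ in powers of $k-1$ reproduces the $d_{n,\nu}$ of \eqref{eqn:taylor:coeff:2}, since that is precisely where the definition of $d_{n,\nu}$ enters; (ii) for $k=0$ the argument of $g_1$ is $-\lN<0$, so ``both arguments lie in $[0,1)$'' should read $[-1,1]$, which is all the tail bound requires.
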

To obtain \eqref{eqn:lem:calculus:R3:2} and \eqref{eqn:lem:calculus:R2:2}, note that the sums in \eqref{eqn:lem:calculus:R3:1} and \eqref{eqn:lem:calculus:R2:1} are the Taylor expansions of the corresponding square roots, which converge on $\FNp$ (for $R^{(3)}_a$) and $\Fock_\perp^{\leq N-1}$ (for $R^{(2)}_a$), respectively. On $\Fock^{\geq N}$ and $\Fock_\perp^{> N-1}$, respectively, one observes that $\Number\geq \lN^{-1}$. The full proof is given in Appendix \ref{appendix:calculus}.\medskip

Next, we  make use of this result to prove the expansion of $\FockHpt$ from Lemma \ref{lem:expansion:HNpt}.\medskip

\noindent\textbf{Proof of Lemma \ref{lem:expansion:HNpt}.}
By \eqref{eq_H_N_Fock}, the Taylor expansion of Lemma \ref{lem:calculus} leads to \eqref{eqn:lem:expansion:HNpt} with $\FockHnpt$ as in \eqref{FockHnpt}. The remainders in \eqref{eqn:lem:expansion:HNpt} are given as
\begin{subequations}
\begin{eqnarray}
\mathcal{R}^{(0)}
&:=&\left(\boldKthpt \sqrt{\frac{\left[N-\Number\right]_+}{N-1}}+\hc\right)\nonumber\\
&&+\lN^{\frac12}\left(\boldKfpt-(\Number-1)\boldKopt
+ \left(\boldKtpt R^{(2)}_0+\hc\right)\right)\,,\label{FockHt:R:1}
\\[3pt]
\mathcal{R}^{(1)}
&:=&\boldKfpt-(\Number-1)\boldKopt
+\left(\boldKtpt R^{(2)}_0+\hc\right)+\lN^{\frac12}\left(\boldKthpt R^{(3)}_0+\hc\right),\qquad\label{FockHt:R:2}
\\[3pt]
\mathcal{R}^{(2n)} &:= &
\boldKthpt R^{(3)}_{n-1}+\lN^{\frac12} \boldKtpt R^{(2)}_{n}+\hc,\label{FockHt:R:2n-1}
\\[3pt]
\mathcal{R}^{(2n+1)} &:= &
\boldKtpt R^{(2)}_{n}+\lN^{\frac12}\boldKthpt R^{(3)}_{n}+\hc\label{FockHt:R:2n}
\end{eqnarray}
\end{subequations}
for $n\geq 1$,  with $R^{(2)}_j$ and $R^{(3)}_j$ from Lemma \ref{lem:calculus}.
Hence, Lemmas \ref{lem:K:a:norms} and \ref{lem:calculus} imply
\begin{eqnarray}
\norm{\mathcal{R}^{(2n)}\bPhi}
&\leq& C(n)\left(\norm{(\Number+1)^{n+\frac32}\bPhi}
+\lN^{\frac12}\norm{(\Number+1)^{n+2}\bPhi}\right),\\
\norm{\mathcal{R}^{(2n+1)}\bPhi}
&\leq&C(n)\left(\norm{(\Number+1)^{n+2}\bPhi}
+\lN^{\frac12}\norm{(\Number+1)^{n+\frac52}\bPhi}\right)
\end{eqnarray}
for any $n\in\N_0$, where we used that $\sqrt{\frac{\left[N-\Number\right]_+}{N-1}}\ls1$ in the sense of operators on $\Fock$.
\qed\\

\noindent\textbf{Proof of Proposition \ref{thm:duhamel}.}
First, we construct the explicit form  \eqref{solution_chil_integral_form} of $\Chil(t)$ (Step 1). Second, we conclude that $\Chil(t)\in\Fpt$ (Step 2), and finally  we derive the expression \eqref{eqn:tilde:H} for the operators $\tilde{\FockH}^{(n)}_{t,s}$ (Step 3).\medskip

\noindent\textbf{Step 1}.
Iterating \eqref{eqn:int:form:Chil:H:tilde} and using that $\cU_{\BogV(t,s_1)}\cU_{\BogV(s_1,s_2)}=\cU_{\BogV(t,s_2)}$, we find that
\begin{eqnarray}
\Chil(t) &=& \BogUtz\Chil(0) + (-\i) \sum\limits_{j_1=1}^\l \int\limits_0^t \ds_1\, \tilde{\mathbb{H}}^{(j_1)}_{t,s_1}\, \mathcal{U}_{\BogV(t,s_1)} \Chi_{\l-j_1}(s_1) \nonumber\\
&=& \BogUtz\Chil(0) + (-\i) \sum\limits_{j_1=1}^\l \int\limits_0^t \d s_1\, \tilde{\mathbb{H}}^{(j_1)}_{t,s_1}\, \mathcal{U}_{\BogV(t,0)} \Chi_{\l-j_1}(0) \nonumber\\
&& + (-\i)^2 \sum\limits_{j_1=1}^\l \sum\limits_{j_2=1}^{\l-j_1} \int\limits_0^t \d s_1 \int\limits_0^{s_1} \d s_2\, \tilde{\mathbb{H}}^{(j_1)}_{t,s_1}\, \tilde{\mathbb{H}}^{(j_2)}_{t,s_2}\, \mathcal{U}_{\BogV(t,s_2)} \Chi_{\l-j_1-j_2}(s_2)\,,
\end{eqnarray}
and successive iterations yield
\begin{eqnarray}\label{eqn:aux:Ijt}
\Chil(t)-\BogUtz\Chil(0)\;=\;\sum\limits_{m=1}^\l\sum\limits_{n=m}^\l\sum\limits_{\substack{\bj\in\N^m\\|\bj|=n}}\mathbb{I}^{(\bj)}_t\Chi_{\l-n}(0)
\;=\;\sum\limits_{n=0}^{\l-1}\sum\limits_{m=1}^{\l-n}\sum\limits_{\substack{\bj\in\N^m\\|\bj|=\l-n}}\mathbb{I}^{(\bj)}_t\Chi_n(0)\,,
\end{eqnarray}
where we abbreviated 
\begin{equation}\label{eqn:aux:Ijt:2}
\mathbb{I}^{(\bj)}_t:=(-\i)^m\int\limits_0^t\ds_1\,\mycdots\hspace{-0.1cm}\int\limits_0^{s_{m-1}}\ds_m\tilde{\mathbb{H}}^{(j_1)}_{t,s_1} \mycdots\, \tilde{\mathbb{H}}^{(j_m)}_{t,s_m}\, \mathcal{U}_{\BogV(t,0)}\,.
\end{equation}
\medskip

\noindent\textbf{Step 2.}
Combining Lemmas \ref{lem:number:BT} and \ref{lem:H:norms}, we find that
\begin{eqnarray}
\norm{(\Number+1)^b\tilde{\FockH}^{(j)}_{t,s}\bPhi}
&=&\norm{(\Number+1)^b\BogUts\FockHps^{(j)}\BogUts^*\bPhi}\nonumber\\
&\leq& C(b,j)C_{\BogV(t,s)}^{2b+1+\frac{j}{2}} \norm{(\Number+1)^{b+1+\frac{j}{2}}\bPhi}\nonumber\\
&\lesssim& \e^{C(b,j)t}\norm{(\Number+1)^{b+1+\frac{j}{2}}\bPhi}\,,
\end{eqnarray}
where we used in the last step that
\begin{equation}
C_{\BogV(t,s)}=2\norm{V_{t,s}}_\HS^2 + \onorm{U_{t,s}}^2 + 1 \ls \e^{Ct}
\end{equation}
by Lemma \ref{lem:U:V:norms}. Consequently,  for any $\bj\in\N^m$ with $|\bj|=\l-n$,
\begin{equation}\label{eqn:aux:Ijt:3}
\norm{\tilde{\mathbb{H}}^{(j_1)}_{t,s_1} \mycdots \,\tilde{\mathbb{H}}^{(j_m)}_{t,s_m}\BogUtz\Chi_n(0)}
\ls \e^{C(\l)t}\norm{(\Number+1)^{\frac{\l-n}{2}+m}\Chi_n(0)}
\end{equation}
and we conclude that
\begin{eqnarray}
\norm{\Chil(t)} \ls \sum\limits_{n=0}^{\l-1}\sum\limits_{m=1}^{\l-n}\sum\limits_{\substack{\bj\in\N^m\\|\bj|=\l-n}} t^m \e^{C(\l)t}\norm{(\Number+1)^{\frac{\l-n}{2}+m}\Chi_n(0)}<\infty
\end{eqnarray}
for $\Chi_n(0)\in\D(\Number^{\frac{3}{2}(\l-n)})$. Finally, $\Chil(t)\in\Fpt$ because $\Chi_n(0)\in\Fpz$ and since $\cU_{\BogV(s_1,s_2)}$ maps $\Fock_{\perp\varphi(s_1)}$ to $\Fock_{\perp\varphi(s_2)}$ by Lemma \ref{lem:time:dep:BT}. \medskip

\noindent\textbf{Step 3.}
Recall that by \eqref{eqn:trafo:ax}, $\BogUts$ transforms creation and annihilation operators as
\begin{equation}\label{eqn:trafo:ax:lj}
\BogUts\asl_x\BogUts^* = \sum\limits_{j\in\{-1,1\}}\int\dy\,\omlj_{t,s}(x;y)\asj_y\,,\qquad \l\in\{-1,1\}\,,
\end{equation}
with $\omega^{(\l,j)}_{t,s}$ as in \eqref{abbrv:omega}.
Thus, abbreviating $z_j:=(y_j;x_j)$, we find
\begin{subequations}
\begin{eqnarray}
\BogUts\Number\BogUts^*\hspace{-5pt}&=&\hspace{-5pt}\sum\limits_{\bj\in\{-1,1\}^2}\int\dx^{(2)}\dy\,\omojo_{t,s}(y;x_1)\omzjt_{t,s}(y;x_2)\asjo_{x_1}\asjt_{x_2}\,,\label{eqn:BT:number:op}\\
\BogUts\boldKops\BogUts^*\hspace{-5pt}&=&\hspace{-5pt}\sum\limits_{\bj\in\{-1,1\}^2}\int\dx^{(2)}\dy^{(2)} \Kops(y^{(2)})\omojo_{t,s}(z_1)\omzjt_{t,s}(z_2)\asjo_{x_1}\asjt_{x_2}\,,\qquad\qquad  
\end{eqnarray}
\end{subequations}
and analogously for $\boldKtpt$ to $\boldKfpt$.
Using that $\cU_{\BogV(t,s_1)}\,\cU_{\BogV(s_1,s_2)}\;=\;\cU_{\BogV(t,s_2)}$ and abbreviating
$\nu(k):=\left\lfloor\tfrac{k-2}{2}\right\rfloor$
for $\lfloor r\rfloor =\max\{z\in\mathbb{Z}:z\leq r\}$, we obtain \eqref{eqn:tilde:H} with the coefficients 
\begin{subequations}\label{eqn:A:coeff}
\begin{eqnarray}
&&\hspace{-1cm}\mathfrak{A}^{(\bj)}_{2n-1,k}(t,s;\xk) \nonumber\\
&=& (-1)^{n-\nu(k)-1}c_{n-1} \tbinom{n-1}{\nu(k)}\Bigg[
\int \dy^{(3)}\Kthps(y^{(3)}) \omojo_{t,s}(z_1)\omojt_{t,s}(z_2)\omzjth_{t,s}(z_3)  \nonumber\\
&& \times \prod\limits_{\mu=2}^{(k-1)/2}\left(\int\dy_{2\mu}\omega^{(1,j_{2\mu})}_{t,s}(z_{2\mu}) \omega^{(-1,j_{2\mu+1})}_{t,s}(y_{2\mu};x_{2\mu+1})\right)  \nonumber\\
&&+\prod\limits_{\mu=1}^{(k-3)/2}\left(\int\dy_{2\mu-1}\omega^{(1,j_{2\mu-1})}_{t,s}(z_{2\mu-1})\omega^{(-1,j_{2\mu})}_{t,s}(y_{2\mu-1};x_{2\mu})\right) \nonumber \\
&&\times\int\dy_{k-2}\dy_{k-1}\dy_k\big(\Kthps\big)^*(y_{k-2},y_{k-1},y_{k})  \nonumber\\
&&\times\omega^{(1,j_{k-2})}_{t,s} (z_{k-2})\omega^{(-1,j_{k-1})}_{t,s}(z_{k-1}) \omega^{(-1,j_k)}_{t,s}(z_k)\Bigg]
\end{eqnarray}
for $n\geq1$, and
\begin{eqnarray}
&&\hspace{-1cm}\mathfrak{A}^{(\bj)}_{2n,k}(t,s;\xk)  \nonumber\\
&=&\tfrac12\sum\limits_{\mu=\nu(k)}^n \left((-1)^{\mu-\nu(k)}d_{n,\mu}\tbinom{\mu}{\nu(k)}\right) \Bigg[
\int\dy^{(2)}\Ktps(y^{(2)})\omojo_{t,s}(z_1)\omojt_{t,s}(z_2) \nonumber \\
&&\times\prod\limits_{m=1}^{(k-2)/2}\left(\int\dy_{2m+1}\omega^{(1,j_{2m+1})}_{t,s}(z_{2m+1}) \omega^{(-1,j_{2m+2})}_{t,s}(y_{2m+1};x_{2m+2})\right)  \nonumber\\
&&+
\prod\limits_{m=1}^{(k-2)/2}\left(\int\dy_{2m-1}\omega^{(1,j_{2m-1})}_{t,s}(z_{2m-1}) \omega^{(-1,j_{2m})}_{t,s}(y_{2m-1};x_{2m})\right) \nonumber \\
&&\times \int\dy_{k-1}\dy_k\overline{\Ktps(y_{k-1},y_k)}\omega^{(-1,j_{k-1})}_{t,s}(z_{k-1})\omega^{(-1,k)}_{t,s}(z_k)
\Bigg]
\end{eqnarray}
for $n\geq2$. The remaining coefficients can be expressed as 
\begin{eqnarray}
&&\hspace{-1cm}\mathfrak{A}^{(\bj)}_{2,2}(t,s;x^{(2)})\nonumber\\
&=&\int\dy^{(2)}\Bigg[\Kops(y^{(2)})\omojo_{t,s}(z_1)\omzjt_{t,s}(z_2) + \tfrac14\Ktps(y^{(2)})\omojo_{t,s}(z_1)\omojt_{t,s}(z_2)\nonumber\\
&&+ \tfrac14\overline{\Ktps(y^{(2)})}\omzjo_{t,s}(z_1)\omzjt_{t,s}(z_2)\Bigg]\,,
\end{eqnarray}
\begin{eqnarray}
&&\hspace{-1cm}\mathfrak{A}^{(\bj)}_{2,4}(t,s;x^{(4)})\nonumber\\
&=&-\int\dy_1\dy_3\dy_4\Bigg[ \omojo_{t,s}(z_1)\omzjt_{t,s}(y_1;x_2) \Kops(y_3,y_4)\omojth_{t,s}(z_3)\omzjf_{t,s}(z_4)\nonumber\\
&& + \tfrac12\Ktps(y_1,y_3)\omojo_{t,s}(z_1)\omojt_{t,s}(y_3;x_2) \omojth_{t,s}(y_4;x_3)\omzjf_{t,s}(z_4)\nonumber\\
&& + \tfrac12\omojo_{t,s}(z_1)\omzjt_{t,s}(y_1;x_2)\overline{\Ktps(y_3,y_4)}\omzjth_{t,s}(z_3)\omzjf_{t,s}(z_4)\Bigg]\nonumber\\
&&+\tfrac12\int\dy^{(4)}\Kfps(y^{(4)})\omojo_{t,s}(z_1)\omojt_{t,s}(z_2)\omzjth_{t,s}(z_3)\omzjf_{t,s}(z_4)\,.
\end{eqnarray}
\end{subequations}
\qed

\subsection{Proof of Theorem \ref{thm:norm:approx}}
\label{subsec:proofs:norm:approx}

We begin with proving that under Assumption~\ref{ass:initial:data}, moments of $\Number$ with respect to $\Chil(t)$ and $\ChiN(t)$ remain bounded uniformly in $N$ under the time evolution.

\begin{lem}\label{lem:moments}
Let Assumption \ref{ass:initial:data} hold for some $\tilde{a}\in\N_0$ and let $t\in\R$, $b\in\N_0$ and $0\leq\l\leq\tilde{a}$.
\lemit{
\item  \label{lem:moments:Chil}
There exists a constant $C(\l,b)>0$ such that
\begin{equation}
\lr{\Chil(t), (\Number+1)^b\Chil(t)}_\Fock \;\ls\;  \e^{C(\l,b)t}\,.
\end{equation} 
\item \label{lem:moments:Chil:N}
Let $c\in \N_0$. There exists a constant $C(\l,b,c)>0$ such that
\begin{equation}
\lr{\Chil(t), (\Number+1)^b\Chil(t)}_{\Fock^{\geq N}} \;\ls\; N^{-c} \,\e^{C(\l,b,c)t}\,.
\end{equation}
\item \label{lem:moments:ChilN:full:evolution}
There exists a constant $C(b)>0$ such that for all $0 \leq b \leq \tilde{a}+1$,
\begin{equation}
\lr{\ChiN(t), (\Number+1)^b\ChiN(t)}_{\FN} \;\ls\;  \e^{C(b)t}\,.
\end{equation}
}
\end{lem}

 Part (a) for $\l=0$ and  part (c) are standard results, which are proven, e.g., in \cite[Proposition~3.3]{rodnianski2009}, \cite[Lemma~2.3]{mitrouskas2016}, and \cite[Proposition 4 (iii)]{nam2015}.

\begin{proof}
For part (a), we infer from \eqref{eqn:aux:Ijt}, \eqref{eqn:aux:Ijt:3} and Lemma \ref{lem:number:BT} that  
\begin{eqnarray}
\norm{(\Number+1)^\frac{b}{2}\Chil(t)}
&\leq& \norm{(\Number+1)^\frac{b}{2}\BogUtz\Chil(0)} 
+\sum\limits_{n=0}^{\l-1}\sum\limits_{m=1}^{\l-n}\sum\limits_{\substack{\bj\in\N^m\\|\bj|=\l-n}}\norm{(\Number+1)^\frac{b}{2}\mathbb{I}_t^{(\bj)}\Chi_n(0)}\nonumber\\
&\ls& \e^{C(\l,b)t}\sum\limits_{n=0}^\l\norm{(\Number+1)^{\frac{b+3(\l-n)}{2}}\Chi_n(0)}
\end{eqnarray}
for $\mathbb{I}^{(\bj)}_t$ as in \eqref{eqn:aux:Ijt:2}.
For part (b), note that $\Number\geq N$ on $\Fock^{\geq N}$, hence
\begin{equation}
\left\| (\Number+1)^{\frac{b}{2}}\Chil(t) \right\|_{\Fock^{\geq N}}
\leq \left\| \frac{(\Number+1)^{{\frac{b+c}{2}}}}{N^{\frac{c}{2}}}\Chil(t) \right\|_{\Fock^{\geq N}} 
\ls N^{-{\frac{c}{2}}} \, \e^{C(b,\l,c)t}
\end{equation}
 for any $c\in \N_0$ by part (a).
\end{proof}

\noindent\textbf{Proof of Theorem \ref{thm:norm:approx}.}
Let us abbreviate 
\begin{equation}\label{tilde_abbreviation}
\tilde{\Chi}_a(t):=\ChiN(t)\oplus0-\sum\limits_{\l=0}^a \lN^{\frac{\l}{2}}\Chil(t)\,.
\end{equation}
For simplicity of presentation, we first do a formal computation. Note that $(\FockHNpt\ChiN)\oplus0=\FockHpt(\ChiN\oplus0)$, hence \eqref{eqn:SE:Fock} and \eqref{eqn:differential:form:Chil} imply
\begin{align}\label{time_derivative_chi_tilde}
&\partial_t\norm{\tilde{\Chi}_a(t)}^2_{\FN}\nonumber\\
&\quad= 2\,\Im \lr{\tilde{\Chi}_a(t),\left(\FockHpt \ChiN(t)\oplus0-\sum\limits_{\l=0}^a\lN^{\frac{\l}{2}}\sum\limits_{n=0}^\l \FockHnpt\Chi_{\l-n}(t)\right)}_{\FN} \nonumber\\
&\quad= 2\,\Im \lr{\tilde{\Chi}_a(t),\left(\FockHpt \left(\tilde{\Chi}_a(t)+\sum\limits_{\l=0}^a \lN^{\frac{\l}{2}}\Chil(t) \right)-\sum\limits_{\l=0}^a\lN^{\frac{\l}{2}}\sum\limits_{n=0}^\l \FockHnpt\Chi_{\l-n}(t)\right)}_{\FN}\nonumber\\
&\quad= 2\,\Im \lr{\tilde{\Chi}_a(t),\FockHpt\tilde{\Chi}_a(t)}_{\FN} \nonumber\\
&\quad\quad + 2\sum\limits_{\l=0}^a\lN^{\frac{\l}{2}}\,\Im \lr{\tilde{\Chi}_a(t),\left(\FockHpt \Chil(t) -\sum\limits_{n=0}^\l \FockHnpt\Chi_{\l-n}(t)\right)}_{\FN}.
\end{align}
By self-adjointness of $\FockHpt$ and since $[\FockHpt, \id_{\FN}] = 0$, the first summand vanishes. In a similar way, using only the (well-defined) integral form of \eqref{eqn:SE:Fock} and Definition~\ref{def:Chil}, we rigorously obtain the integrated version of \eqref{time_derivative_chi_tilde}, viz.,
\begin{align}
&\norm{\tilde{\Chi}_a(t)}^2_{\FN} - \norm{\tilde{\Chi}_a(0)}^2_{\FN} \nonumber\\
&\quad= 2\sum\limits_{\l=0}^a\lN^{\frac{\l}{2}}\,\Im \int_0^t \ds\, \lr{\tilde{\Chi}_a(s),\left(\left(\FockHps-\FockHops\right) \Chil(s) -\sum\limits_{n=1}^\l \FockHnps\Chi_{\l-n}(s)\right)}_{\FN}.
\end{align}
By reordering the summation and using Cauchy-Schwarz we find
\begin{align}
&\norm{\tilde{\Chi}_a(t)}^2_{\FN} - \norm{\tilde{\Chi}_a(0)}^2_{\FN} \nonumber\\
&\quad= 2\sum\limits_{\l=0}^a\lN^{\frac{\l}{2}}\,\Im \int_0^t \ds\, \lr{\tilde{\Chi}_a(s),\left(\left(\FockHps-\FockHops\right) \Chil(s) -\sum\limits_{n=1}^{a-\l} \lN^{\frac{n}{2}} \FockHnps\Chi_{\l}(s)\right)}_{\FN} \nonumber\\
&\quad \leq 2 \int_0^t \ds\, \norm{\tilde{\Chi}_a(s)}_{\FN}
\sum\limits_{\l=0}^a\lN^{\frac{\l}{2}}\left\|{ \left(\FockHps - \FockHops - \sum\limits_{n=1}^{a-\l} \lN^{\frac{n}{2}}\FockHnps\right) \Chil(s)}\right\|_{\FN}\nonumber\\
&\quad=2\lN^\frac{a+1}{2} \int_0^t \ds\, \norm{\tilde{\Chi}_a(s)}_{\FN}
\sum\limits_{\l=0}^a\norm{\mathcal{R}^{a-\l} \Chil(s)}_{\FN},
\end{align}
with the definition of $\mathcal{R}^{a}$ from Lemma~\ref{lem:expansion:HNpt}. Then the remainder estimate from Lemma~\ref{lem:expansion:HNpt}, and Lemma~\ref{lem:moments:Chil} yield
\begin{align}
\norm{\tilde{\Chi}_a(t)}^2_{\FN} - \norm{\tilde{\Chi}_a(0)}^2_{\FN} &\leq 2 \lN^\frac{a+1}{2}C(a) \int_0^t \ds\, \norm{\tilde{\Chi}_a(s)}_{\FN}
\sum\limits_{\l=0}^a\norm{(\Number+1)^\frac{a-\l+4}{2} \Chil(s)}\nonumber\\
&\ls \lN^\frac{a+1}{2} \int_0^t \ds\, \e^{C(a)s}\norm{\tilde{\Chi}_a(s)}_{\FN}\,.
\label{eqn:proof:thm1:1}
\end{align}
Since $\norm{\tilde{\Chi}_a(0)}^2_{\FN}\leq C(a) N^{-(a+1)}$ by Assumption \ref{ass:initial:data}, Gronwall's lemma implies that
\begin{equation}
\norm{\tilde{\Chi}_a(t)}^2_{\FN} \ls \lN^{a+1}\e^{C(a)t}\,.
\end{equation}
\qed

\subsection{Proof of Corollary \ref{thm:even_more_explicit_form}}
\label{subsec:proofs:explicit}
\setcounter{claim}{0}

The proof is an inductive construction of the coefficients in \eqref{eqn:Chil:explicit:final}. We proceed by proving four auxiliary claims; Claim~\ref{claim:4} is the statement of the Corollary.
\begin{claim} \textbf{\emph{(Bogoliubov time evolution of the initial data).}}\label{claim:1}
\begin{equation}\label{eqn:BT:Chil(0)}
\BogUtz\Chil(0) = \sum\limits_{n=1}^\l \sum\limits_{\substack{m=0\\m+n\text{ even}}}^{n+2} \; \sum\limits_{\bj\in\{-1,1\}^m} \int\dx^{(m)}\tilde{\mathfrak{A}}^{(\bj)}_{\l,n,m}\big(t;x^{(m)}\big)
\asjo_{x_1}\,\mycdots \,a_{x_m}^{\sharp_{j_m}} \BogUtz\Chi_{\l-n}(0)\,,
\end{equation}
where
\begin{equation}
\tilde{\mathfrak{A}}^{(\bj)}_{\l,n,m}\big(t;x^{(m)}\big)
 = \sum\limits_{\mu=0}^m 
\int\dy^{(m)}\alnmm(y^{(m)})\prod\limits_{p=1}^\mu\omega_{t,0}^{(1,j_p)}(y_p;x_p)\prod\limits_{q=\mu+1}^{m} \omega_{t,0}^{(-1,j_q)}(y_q;x_q)
\,.\label{eqn:tilde:A}
\end{equation}
\end{claim}

\begin{proof}
Analogous to Step 3 in the proof of Proposition \ref{thm:duhamel}.
\end{proof}\smallskip

\begin{claim} \textbf{\emph{(Induction base case).}}\label{claim:2}
\begin{equation}\label{eqn:Chit(t):aux}
\Chit(t)=\sum\limits_{q=1,3}\sum\limits_{\bj\in\{-1,1\}^q} \int\dx^{(q)} \mathfrak{C}^{(\bj)}_{1,q}(t;x^{(q)}) \asjo_{x_1}\,\mycdots a^{\sharp_{j_q}}_{x_q}\,\Chio(t)\,,
\end{equation}
where
\begin{equation}\label{eqn:C:2:coeff}
\mathfrak{C}^{(\bj)}_{1,q}\big(t;x^{(q)}\big) = \mathfrak{B}^{(\bj)}_{1,q}\big(t;x^{(q)}\big)+ \tilde{\mathfrak{B}}^{(\bj)}_{1,q}\big(t;x^{(q)}\big)
\end{equation}
with
\begin{subequations}\label{eqn:B:base:case}
\begin{eqnarray}
\mathfrak{B}^{(\bj)}_{1,1}(t;x)&=&0\,, \qquad \mathfrak{B}^{(\bj)}_{1,3}(t;x^{(3)})=-\i\int\limits_0^t\mathfrak{A}^{(\bj)}_{1,3}\big(t,s;x^{(3)}\big)\ds\,,\label{eqn:B:2}\\
\tilde{\mathfrak{B}}^{(\bj)}_{1,q}(t;x^{(q)})&=&\tilde{\mathfrak{A}}^{(\bj)}_{1,1,q}(t;x^{(q)})\label{eqn:tilde:B:2}
\end{eqnarray}
\end{subequations}
for $\mathfrak{A}^{(\bj)}_{n,p}$ as in \eqref{eqn:A:coeff} and $\tilde{\mathfrak{A}}^{(\bj)}_{\l,n,m}$ as in \eqref{eqn:tilde:A}.
\end{claim}

\begin{proof}
By \eqref{eqn:int:form:Chil:H:tilde}, we can decompose 
\begin{equation} \label{eqn:Chit(t):int:formula}
\Chit(t)=\BogUtz\Chit(0)+\Chit^\mathrm{int}(t)\,,\qquad
\Chit^\mathrm{int}(t):=-\i\int\limits_0^t\tilde{\FockH}^{(1)}_{t,s}\ds\Chio(t)\,.
\end{equation}
Insertion of \eqref{eqn:BT:Chil(0)} and \eqref{eqn:tilde:H} for $n=1$ yields
\begin{subequations}
\begin{eqnarray}
\Chit^\mathrm{int}(t)&=&\sum\limits_{m=1,3}\sum\limits_{\bj\in\{-1,1\}^m} \int\dx^{(m)}\mathfrak{B}^{(\bj)}_{1,m}\big(t;x^{(m)}\big) \asjo_{x_1}\mycdots\,a_{x_m}^{\sharp_{j_m}}\Chio(t)\,,\\
\BogUtz\Chit(0)&=&\sum\limits_{m=1,3}\sum\limits_{\bj\in\{-1,1\}^m} \int\dx^{(m)}\tilde{\mathfrak{B}}^{(\bj)}_{1,m}\big(t;x^{(m)}\big) \asjo_{x_1}\mycdots\,a_{x_m}^{\sharp_{j_m}}\Chio(t)\label{eqn:U:Chit(0):base:case}
\end{eqnarray}
\end{subequations}
with $\mathfrak{B}_{1,m}^{(\bj)}$ and $\tilde{\mathfrak{B}}_{1,m}^{(\bj)}$ as in \eqref{eqn:B:base:case}.
\end{proof}\smallskip

\begin{claim}\textbf{\emph{(Induction for the Bogoliubov time evolved initial data)}}\label{claim:3}
\begin{equation}\label{eqn:induction:hyp:Chil(0)}
\cU_{\BogV(t,0)}\Chil(0) = \sum\limits_{\substack{0\leq q\leq 3\l\\q+\l\text{ even}}}\; \sum\limits_{\bj\in\{-1,1\}^q} 
\int\dx^{(q)}\tilde{\mathfrak{B}}^{(\bj)}_{\l,q}\big(t;x^{(q)}\big) \asjo_{x_1}\mycdots\, a_{x_q}^{\sharp_{j_q}}\Chio(t)\,,
\end{equation}
where the coefficients $ \tilde{\mathfrak{B}}^{(\bj)}_{\l,q}$  are determined  by the iteration rule
\begin{subequations}\label{eqn:tilde:B}
\begin{eqnarray}
\tilde{\mathfrak{B}}_{0,0}(t)&:=&1\,,\\
\tilde{\mathfrak{B}}^{(\bj)}_{\l,q}\big(t;x^{(q)}\big)&:=&0\quad \text{ if }q>3\l \text{ or } q+\l\text{ odd}\,,
\end{eqnarray} 
and otherwise
\begin{equation}
\tilde{\mathfrak{B}}^{(\bj)}_{\l,q}\big(t;x^{(q)}\big) 
=\sum\limits_{n=1}^{\min\{\l,\frac{3\l+2-q}{2}\}}
\sum\limits_{\substack{m=\max\{0,q-3(\l-n)\}\\m+n\text{ even}}}^{\min\{q,n+2\}}
\tilde{\mathfrak{A}}^{(j_1\mydots j_m)}_{\l,n,m}\big(t;x^{(m)}\big) \tilde{\mathfrak{B}}^{(j_{m+1}\mydots j_q)}_{\l-n,q-m}\big(t;x_{m+1}\mydots x_q\big)
\end{equation}
with $\tilde{\mathfrak{A}}^{(\bj)}_{\l,n,m}$ as in \eqref{eqn:tilde:A}.
\end{subequations}
\end{claim}

\begin{proof}
We prove the hypothesis \eqref{eqn:induction:hyp:Chil(0)} by induction over $\l\in\N$. The base case $\l=1$ is established by \eqref{eqn:U:Chit(0):base:case}.
Assume \eqref{eqn:induction:hyp:Chil(0)} holds for $1,2,\mydots\l-1$. Then it follows from Claim \ref{claim:1} that
\begin{eqnarray}
\BogUtz\Chil(0)
&=&\sum\limits_{n=1}^\l \sum\limits_{\substack{m=0\\m+n\text{ even}}}^{n+2} \; \sum\limits_{\substack{0\leq q\leq 3(\l-n)\\q+\l-n\text{ even}}}  \sum\limits_{\bj\in\{-1,1\}^{m+q}}
\int\dx^{(m+q)}\tilde{\mathfrak{A}}^{(j_1\mydots j_m)}_{\l,n,m}\big(t;x^{(m)}\big) \nonumber\\
&&\times\tilde{\mathfrak{B}}^{(j_{m+1}\mydots j_{m+q})}_{\l-n,q}\big(t;x_{m+1}\mydots x_{m+q}\big)
\asjo_{x_1}\mycdots\, a_{x_{m+q}}^{\sharp_{j_{m+q}}}\Chio(t)\,.\label{eqn:induction:Chil(0):1}
\end{eqnarray}
We now rearrange the sums in \eqref{eqn:induction:Chil(0):1}. 
Abbreviating the integrand  as $I^{(\bj)}_{n;m;q}$, we find that
\begin{eqnarray}
&&\sum\limits_{n=1}^\l  \;\sum\limits_{\substack{0\leq m\leq n+2\\m+n \text{ even}}} \;\sum\limits_{\substack{0\leq q\leq 3(\l-n)\\q+\l-n\text{ even}}} \; \sum\limits_{\bj\in\{-1,1\}^{m+q}} \int\dx^{(m+q)}
I_{n;m;q}^{(\bj)}\nonumber\\
&&\quad = \sum\limits_{n=1}^\l  \;\sum\limits_{\substack{0\leq m\leq n+2\\m+n \text{ even}}} \;\sum\limits_{\substack{m\leq q\leq m+3(\l-n)\\q+\l\text{ even}}} \; \sum\limits_{\bj\in\{-1,1\}^{q}} \int\dx^{(q)}
I_{n;m;q-m}^{(\bj)}\nonumber\\
&&\quad = \sum\limits_{n=1}^\l  \;\sum\limits_{\substack{0\leq q\leq 3\l+2-2n\\q+\l \text{ even}}} 
\;\sum\limits_{\substack{m=\max\{0,q-3(\l-n)\}\\m+n\text{ even}}}^{\min\{q,n+2\}}\;
 \sum\limits_{\bj\in\{-1,1\}^{q}} \int\dx^{(q)}
I_{n;m;q-m}^{(\bj)}\nonumber\\
&&\quad = \;\sum\limits_{\substack{0\leq q\leq 3\l\\q+\l \text{ even}}} \; \sum\limits_{\bj\in\{-1,1\}^{q}} 
\sum\limits_{n=1}^{\min\{\frac{3\l+2-q}{2},\l\}}\;
\;\sum\limits_{\substack{m=\max\{0,q-3(\l-n)\}\\m+n\text{ even}}}^{\min\{q,n+2\}} 
 \int\dx^{(q)}I_{n;m;q-m}^{(\bj)}\,.
\end{eqnarray}
This closes the induction.
\end{proof}
\smallskip
\begin{claim}\textbf{\emph{(Induction for the full corrections).}}\label{claim:4}
\begin{equation}\label{Chil:explicit}
\Chil(t)=\sum\limits_{\substack{0\leq q\leq 3\l\\q+\l\text{ even}}} \sum\limits_{\bj\in\{-1,1\}^q} \int\dx^{(q)}\mathfrak{C}^{(\bj)}_{\l,q}(t;x^{(q)})\,\asjo_{x_1}\,\mycdots a^{\sharp_{j_q}}_{x_q} \Chio(t)\,,
\end{equation} 
where 
\begin{subequations}\label{eqn:C:coeff}
\begin{equation}
\mathfrak{C}^{(\bj)}_{\l,q}\big(t;x^{(q)}\big):=\tilde{\mathfrak{B}}^{(\bj)}_{\l,q}\big(t;x^{(q)}\big)
\end{equation}
if $q=0,1$, and otherwise
\begin{eqnarray}
\mathfrak{C}_{\l,q}^{(\bj)}\big(t;x^{(q)}\big) &=& \tilde{\mathfrak{B}}^{(\bj)}_{\l,q}\big(t;x^{(q)}\big)
-\i\sum\limits_{n=1}^{\min\{\l,\frac{3\l+2-q}{2}\}} \sum\limits_{\substack{p=\max\{2,q-3(\l-n)\}\\ p+n\text{ even}}}^{\min\{q,n+2\}}
 \sum\limits_{\bm\in\{-1,1\}^{q-p}} \int\limits_0^t\ds   \int\dy^{(q-p)} \nonumber\\ 
&&\times\mathfrak{A}^{(j_1\mydots j_p)}_{n,p}(t,s;x^{(p)})\mathfrak{C}^{(\bm)}_{\l-n,q-p}(s;y^{(q-p)} )
 \prod\limits_{\mu=1}^{q-p} \omega^{(m_\mu,j_{p+\mu})}_{t,s}(y_\mu,x_{p+\mu})
\end{eqnarray}
\end{subequations}
with $\mathfrak{A}^{(\bj)}_{n,p}$ as in \eqref{eqn:tilde:A} and $\tilde{\mathfrak{B}}^{(\bj)}_{\l,q}$ as in \eqref{eqn:tilde:B}.
\end{claim}

\begin{proof}
We prove the hypothesis \eqref{Chil:explicit} by induction over $\l\in\N$. The base case $\l=1$ is established in Claim \ref{claim:2}. By \eqref{eqn:int:form:Chil:H:tilde}, it follows that
\begin{equation}\label{eqn:proof:thm2:1}
\Chil(t)=\BogUtz\Chil(0)+\Chil^\mathrm{int}(t)\,,\qquad 
\Chil^\mathrm{int}(t):=-\i\sum\limits_{n=1}^\l\int\limits_0^t \ds\ \tilde{\FockH}^{(n)}_{t,s}\BogUts\Chi_{\l-n}(s)\,,
\end{equation}
and the first term  is given by Claim \ref{claim:3}.
Assume \eqref{Chil:explicit} holds  for $1,2\mydots \l-1$. Then   by  \eqref{eqn:tilde:H},
\begin{eqnarray}
\Chi_\l^\mathrm{int}(t) \hspace{-2mm}
&=&-\i\sum\limits_{n=1}^\l \int\limits_0^t\ds \sum\limits_{\substack{2\leq p\leq n+2\\p+n \text{ even}}} \;\sum\limits_{\substack{0\leq q\leq 3(\l-n)\\q+\l-n\text{ even}}} \; \sum\limits_{\bj\in\{-1,1\}^{p+q}} \int\dx^{(p+q)} \mathfrak{A}^{(j_1\mydots j_p)}_{n,p}(t,s;x^{(p)})\asjo_{x_1}\,\mycdots a^{\sharp_{j_p}}_{x_p} \nonumber\\
&& \times \mathfrak{C}_{\l-n,q}^{(j_{p+1}\mydots j_{p+q})}(s;x_{p+1}\mydots x_{p+q})
\BogUts  a^{\sharp_{j_{p+1}}}_{x_{p+1}} \BogUts^*\,\mycdots \BogUts a^{\sharp_{j_{p+q}}}_{x_{p+q}} \BogUts^*\Chio(t)\nonumber\\
&=&\sum\limits_{n=1}^\l  \;\sum\limits_{\substack{2\leq p\leq n+2\\p+n \text{ even}}} \;\sum\limits_{\substack{0\leq q\leq 3(\l-n)\\q+\l-n\text{ even}}} \; \sum\limits_{\bj\in\{-1,1\}^{p+q}} \int\dx^{(p+q)}
\Bigg(-\i\sum\limits_{\mathbf{m}\in\{-1,1\}^q} \int\limits_0^t\ds \int\dy^{(q)} \nonumber\\
&&\times
 \mathfrak{A}^{(j_1\mydots j_p)}_{n,p}(t,s;x^{(p)}) \mathfrak{C}^{(\mathbf{m})}_{\l-n,q}(s;y^{(q)} )
\bigg(\prod\limits_{\mu=1}^q \omega^{(m_\mu,j_{p+\mu})}_{t,s}(y_\mu;x_{p+\mu})\bigg)
\Bigg) \nonumber\\
&&\times \asjo_{x_1}\,\mycdots a^{\sharp_{j_{p+q}}}_{x_{p+q}}\Chio(t)\,,\label{eqn:proof:thm2:3}
\end{eqnarray}
where we used that
\begin{eqnarray}
&&\hspace{-1cm}\BogUts a^{\sharp_{j_{p+1}}}_{x_{p+1}} \BogUts^*\,\mycdots \BogUts a^{\sharp_{j_{p+q}}}_{x_{p+q}} \BogUts^* \nonumber\\
& =& \sum\limits_{\mathbf{m}\in\{-1,1\}^q} \int\dy^{(q)} \bigg(\prod\limits_{\mu=1}^q\omega_{t,s}^{(j_{p+\mu},m_\mu)}(x_{p+\mu};y_\mu)\bigg) a^{\sharp_{m_1}}_{y_1}\,\mycdots a^{\sharp_{m_q}}_{y_q}
\end{eqnarray}
and subsequently renamed the variables $x_{p+1}\mydots x_{p+q}$ $\leftrightarrow$ $y_{1}\mydots y_{q}$ as well as the indices $j_{p+1}\mydots j_{p+q}$ $\leftrightarrow$ $m_{1}\mydots m_{q}$.
Reordering the sums as in Claim \ref{claim:3}  and adding \eqref{eqn:induction:hyp:Chil(0)} completes the induction.
\end{proof}

\subsection{Proof of Corollary \ref{lem_compute_correlation_functions}}\label{subsec:proofs:wick}

As a consequence of Corollary~\ref{thm:even_more_explicit_form} and since $(a^{\sharp_j})^\dagger=a^{\sharp_{-j}}$, we obtain
\begin{eqnarray}
\lrt{a^{\sharp_{j_1}}_{x_1}\cdots a^{\sharp_{j_n}}_{x_{n}}}_{\l,k} 
&=&\sum\limits_{\substack{0\leq p\leq 3\l\\p+\l\text{ even}}}\,
\sum\limits_{\substack{0\leq q\leq 3k\\q+k\text{ even}}}\,
\sum\limits_{\substack{\bm\in\{-1,1\}^p\\\boldsymbol{\mu}\in\{-1,1\}^q}}
\int\dy^{(p)}\dz^{(q)}\overline{\mathfrak{C}^{(\bm)}_{\l,p}(t;y^{(p)})}\mathfrak{C}^{(\boldsymbol{\mu})}_{k,q}(t;z^{(q)})\nonumber\\
&&\times \lrt{ a_{y_p}^{\sharp_{-m_p}}\,\mycdots\, a_{y_p}^{\sharp_{-m_1}} \asjo_{x_1}\,\mycdots a^{\sharp_{j_n}}_{x_n}\,a_{z_1}^{\sharp_{\mu_1}}\,\mycdots\,a_{z_q}^{\sharp_{\mu_q}}}_{0,0}\nonumber\\
&=&\sum\limits_{\substack{0\leq p\leq 3\l\\p+\l\text{ even}}}\,
\sum\limits_{\substack{0\leq q\leq 3k\\q+k\text{ even}}}\,
\sum\limits_{\bm\in\{-1,1\}^{n+p+q}}
I_{p,q}\nonumber\\
&=&\sum\limits_{\substack{n\leq b\leq n+3(\l+k)\\b+k+l+n \text{ even}}}
\sum\limits_{\bm\in\{-1,1\}^b}\sum\limits_{\substack{q=\max\{0,b-n-3\l\}\\q+k\text{ even}}}^{\min\{3k,b-n\}} I_{b-n-q,q}\,,
\end{eqnarray}
where we abbreviated 
\begin{eqnarray}
I_{p,q}&:=&\int\dy^{(n+p+q)}
\overline{\mathfrak{C}^{(-m_p\mydots -m_1)}_{\l,p}(t;y_p\mydots y_1)}\,
\mathfrak{C}^{(m_{n+p+1}\mydots m_{n+p+q})}_{k,q}(t;y_{n+p+1}\mydots y_{n+p+q})\nonumber\\
&&\times \prod\limits_{\mu=1}^n\delta(y_{p+\mu}-x_\mu)\delta_{m_{p+\mu},j_\mu}
 \lrt{a_{y_1}^{\sharp_{m_1}}\,\mycdots\, a_{y_{n+p+q}}^{\sharp_{m_{n+p+q}}}}_{0,0}\,.
\end{eqnarray}
Consequently,
\begin{equation}
\lrt{a^{\sharp_{j_1}}_{x_1}\cdots a^{\sharp_{j_n}}_{x_{n}}}_{\l,k} 
=\sum\limits_{\substack{n\leq b\leq n+3(\l+k)\\b+k+l+n \text{ even}}}
\sum\limits_{\bm\in\{-1,1\}^b}\int\dy^{(b)}\mathfrak{D}^{(\bj,\bm)}_{\l,k,n;b}(t;\xn;y^{(b)})
\lrt{a_{y_1}^{\sharp_{m_1}}\,\mycdots\, a_{y_{b}}^{\sharp_{m_{b}}}}_{0,0}\label{eqn:proof:wick:1}
\end{equation}
for $\mathfrak{D}^{(\bj,\bm)}_{\l,k,n;b}$ as defined as in \eqref{eqn:D}. 
By Assumption \ref{ass:initial:data}, the inner product in \eqref{eqn:proof:wick:1} is the $b+2\tilde{\nu}$-point correlation function of a quasi-free state. By Lemma \ref{lem:wick}, it vanishes for $b$ odd (or, equivalently, $k+\l+n$ odd) and decomposes into the sum over all possible  pairings for $b$ even.\qed

\subsection{Proof of Corollary \ref{thm:correlation functions_simplified}}
\label{subsec:proofs:corr}

With the  abbreviation  \eqref{tilde_abbreviation}, we find
\begin{eqnarray}
\lrt{\prod_{i=1}^n \ad_{x_i} \prod_{j=1}^{p} a_{y_j}}_N
&=& \lr{\tilde{\Chi}_a(t), \prod_{i=1}^n \ad_{x_i} \prod_{j=1}^{p} a_{y_j} \ChiN(t)}_{\FN} \nonumber\\
&& + \sum_{\ell=0}^{a} \lambda_N^{\frac{\l}{2}} \lr{\prod_{j=0}^{p-1} \ad_{y_{p-j}} \prod_{i=0}^{n-1} a_{x_{n-i}}\Chil(t),\tilde{\Chi}_{a-\l}(t)} 
\nonumber\\
&& + \sum_{\ell=0}^{a} \lambda_N^{\frac{\l}{2}} \sum_{m=0}^{\ell} \lr{\Chim(t), \prod_{i=1}^n \ad_{x_i} \prod_{j=1}^{p} a_{y_j} \Chi_{\l-m}(t)}\,.\label{eqn:proof:thm:corr:fctns}
\end{eqnarray}
Therefore, using Cauchy-Schwarz and Lemma~\ref{lem:aux:new}, it holds for any bounded $B:\fH^p\to\fH^n$ that
\begin{eqnarray}\label{corr_computation_general}
&&\hspace{-1cm}\Bigg| \int\dx^{(n)}\dy^{(p)} B(x^{(n)};y^{(p)}) \Bigg( \lrt{\prod_{i=1}^n \ad_{x_i} \prod_{j=1}^{p} a_{y_j}} - \sum_{\ell=0}^a \lambda_N^{\frac{\l}{2}} \sum_{m=0}^{\ell} \lrt{\prod_{i=1}^n \ad_{x_i} \prod_{j=1}^{p} a_{y_j}}_{m,\ell-m} \Bigg) \Bigg| \nonumber\\
& \leq& \norm{\tilde{\Chi}_a(t)}_{\FN} \bigg\| \int\dx^{(n)}\dy^{(p)} B(x^{(n)};y^{(p)}) \prod_{i=1}^n \ad_{x_i} \prod_{j=1}^{p} a_{y_j} \ChiN(t) \bigg\|_{\FN} \nonumber\\
&&+ \sum_{\ell=0}^{a} \lambda_N^{\frac{\l}{2}} \bigg\| \int\dx^{(n)}\dy^{(p)} B^*(y^{(p)};x^{(n)}) \prod_{j=1}^{p}\ad_{y_j} \prod_{i=1}^n a_{x_i}\Chil(t) \bigg\| \norm{\tilde{\Chi}_{a-\l}(t)} \nonumber\\
&\leq& C(n,p) \onorm{B} \bigg( \norm{\tilde{\Chi}_a(t)}_{\FN} \left\| (\Number+1)^{\frac{n+p}{2}} \ChiN(t) \right\|_{\FN} \nonumber\\
&&\hspace{2.5cm}
+ \sum_{\ell=0}^{a} \lambda_N^{\frac{\l}{2}} \norm{\tilde{\Chi}_{a-\l}(t)}\left\| (\Number+1)^{\frac{n+p}{2}}\Chil(t) \right\| \bigg)\nonumber\\
&\ls& \e^{C(n,p,a)t}\onorm{B} \left(\lN^{\frac{a+1}{2}} +\sum_{\ell=0}^{a} \lambda_N^{\frac{\l}{2}} \lN^{\frac{a-\l+1}{2}}\right),
\end{eqnarray}
where we used for the last estimate Lemmas~\ref{lem:moments:Chil} and \ref{lem:moments:ChilN:full:evolution} and Theorem~\ref{thm:norm:approx}. Finally, the statement follows by expanding up to $2a+1$ for $n+p$ even and up to $2a$ for $n+p$ odd because all contributions to the last line of \eqref{eqn:proof:thm:corr:fctns} with $n+p+\l$ odd vanish by Corollary~\ref{cor:gen:Wick:odd}.
\qed

\subsection{Proof of Theorem \ref{thm:RDM}}\label{subsec:proofs:RDM}

We start from the expression \eqref{gamma_psi_gamma_chi} for the one-particle reduced density matrix, i.e., 
\begin{eqnarray}\label{gamma_psi_gamma_chi_again}
\gPsiNo(t) 
&=& \ppt + \frac{1}{\sqrt{N}} \left(|\varphi(t) \rangle \langle \beta_{\ChiN(t)}| + |\beta_{\ChiN(t)} \rangle \langle \varphi(t)|\right) \nonumber\\
&&+\frac{1}{N} \left(\gChiNt- \ppt \lr{\ChiN(t),\Number \ChiN(t)}_{\FN}\right),
\end{eqnarray}
where
\begin{equation}
\bChiNt(x) := \lr{\ChiN(t), \sqrt{1-\frac{\Number}{N}} \,a_x \ChiN(t)}_{\FN}.
\end{equation}
The idea of the proof is to use expansions of $\sqrt{1-\frac{\Number}{N}}$ and $\frac{1}{N}$ in powers of $\lambda_N$, estimate the remainder terms, and use Corollary~\ref{thm:correlation functions_simplified} to estimate the difference of the microscopic and effective $n$-point correlation functions. Following the calculations in Appendix \ref{appendix:calculus}, we obtain
\begin{subequations}
\begin{eqnarray}
\frac{1}{N}
&=&\lN\frac{1}{1+\lN} \;=\; \sum\limits_{\l=0}^{a-1} \tilde{c}_\l \lN^{1+\l} + \lN^{a+1}R_{1,a}\,,\\
\frac{\sqrt{\left[N-\Number\right]_+}}{N} 
&=&\sum\limits_{\l=0}^a\lN^{\l+\frac12}\sum\limits_{k=0}^\l \tilde{c}_{\l,k}(\Number-1)^k + \lN^{a+\frac32}\mathbb{R}_{2,a}\,,
\end{eqnarray}
\end{subequations}
with $\tilde{c}_\l:=(-1)^\l c_\l^{(3/2)}$, $\tilde{c}_{\l,k}:=\tilde{c}_{\l-k}c_k^{(0)}$, $|R_{1,a}| \leq C(a)$ and $\norm{\mathbb{R}_{2,a}\bPhi}\leq C(a)\norm{(\Number+1)^{a+1}\bPhi}$ for any $\bPhi\in\Fock$ and some constant $C(a)>0$.
For the $|\varphi(t) \rangle \langle \beta_{\ChiN(t)}|$ term in \eqref{gamma_psi_gamma_chi_again} we then find, for any $A\in\mathcal{L}(\fH)$,
\begin{eqnarray}
&&\hspace{-0.7cm}\frac{1}{\sqrt{N}}\int\dx\dy A(x;y)\overline{\varphi(t,x)}\bChiNt(y)\nonumber\\
&=&\hspace{-2.5mm}\sum\limits_{\l=0}^{a}\sum\limits_{k=0}^\l\lN^{\l+\frac12}\tilde{c}_{\l,k}\int\dx \overline{(A^*\varphi)(t,x)}
\lrt{(\Number-1)^k a_x}_N + \mathcal{R}_A^{(1)}\nonumber\\
&=&\hspace{-2.5mm}\sum\limits_{\l=0}^{a}\sum\limits_{k=0}^\l\sum\limits_{m=0}^{a-\l-1}\sum\limits_{n=0}^{2m+1}\lN^{\l+m+1}\tilde{c}_{\l,k}\int\dx\overline{(A^*\varphi)(t,x)}\lrt{(\Number-1)^ka_x}_{n,2m+1-n}
\hspace{-0.1cm}+ \mathcal{R}_A^{(1)} +\mathcal{R}_A^{(2)}\nonumber\\[5pt]
&=&\hspace{-2.5mm}\sum\limits_{\l=0}^{a-1}\lN^{\l+1}\sum\limits_{m=0}^\l\sum\limits_{k=0}^{\l-m}\sum\limits_{n=0}^{2m+1}\tilde{c}_{\l-m,k}\int\dx\overline{(A^*\varphi)(t,x)}\lrt{(\Number-1)^ka_x}_{n,2m+1-n}
\hspace{-0.1cm}+ \mathcal{R}_A^{(1)}+\mathcal{R}_A^{(2)}\qquad\quad
\end{eqnarray}
with
\begin{eqnarray}
\mathcal{R}_A^{(1)}&:=&\lN^{a+\frac32}\int\dx  \overline{(A^*\varphi)(t,x)}\lrt{\mathbb{R}_{2,a} a_x}_N\nonumber\\
\mathcal{R}_A^{(2)}&:=&
\sum\limits_{\l=0}^{a}\lN^{\l+\frac12}\sum\limits_{k=0}^\l\tilde{c}_{\l,k}\int\dx \overline{(A^*\varphi)(t,x)}\Bigg[\lrt{(\Number-1)^ka_x}_N\nonumber\\
&&\qquad\qquad\qquad\qquad-\sum\limits_{m=0}^{a-\l-1}\sum\limits_{n=0}^{2m+1}\lN^{m+\frac{1}{2}}\lrt{(\Number-1)^ka_x}_{n,2m+1-n}\Bigg]\,.
\end{eqnarray}
Analogously, we compute for the second line in \eqref{gamma_psi_gamma_chi_again},
\begin{eqnarray}
&&\hspace{-1cm}\frac{1}{N}\int\dx\dy A(x;y)\left(\lrt{\ad_xa_y}_N+\ppt(y;x)\lrt{\Number}_N\right)\nonumber\\
&=&\sum\limits_{\l=0}^{a-1}\lN^{\l+1}\sum\limits_{m=0}^\l \sum\limits_{n=0}^{2m}\tilde{c}_{\l-m}\int\dx\dy A(x;y)\left(\lrt{\ad_x a_y}_{n,2m-n}+\ppt(y;x)\lrt{\Number}_{n,2m-n}\right)\nonumber\\[5pt]
&& +\mathcal{R}_A^{(3)}+\mathcal{R}_A^{(4)}\,,
\end{eqnarray}
where
\begin{eqnarray}
\mathcal{R}_A^{(3)}&:=&\lN^{a+1} R_{1,a}\int\dx\dy A(x;y)\left(\lrt{\ad_xa_y}_N+\ppt(y;x)\lrt{\Number}_N\right)\,,\\
\mathcal{R}_A^{(4)}
&:=&\sum\limits_{\l=0}^{a-1}\tilde{c}_\l\lN^{\l+1}\int\dx\dy A(x;y)\Bigg[\lrt{\ad_xa_y}_N
-\sum\limits_{m=0}^{a-\l-1}\lN^m\sum\limits_{n=0}^{2m}
\lrt{\ad_xa_y}_{n,2m-n} \Bigg]\nonumber\\
&&+\lr{\pt,A\pt}_{\fH}
\sum\limits_{\l=0}^{a-1}\tilde{c}_\l\lN^{\l+1}\Bigg[\lrt{\Number}_N
 -\sum\limits_{m=0}^{a-\l-1}\lN^m\sum\limits_{n=0}^{2m}
\lrt{\Number}_{n,2m-n}\Bigg]\nonumber\,.
\end{eqnarray}
Lemma~\ref{lem:aux:new} and the bound on moments of $\Number+1$ from Lemma~\ref{lem:moments:ChilN:full:evolution} imply
\begin{eqnarray}
\big|\mathcal{R}_A^{(1)}\big|
&\leq&\lN^{a+\frac32}C(a)\onorm{A}\norm{(\Number+1)^{a+1}\ChiN(t)}_{\FN}\norm{\Number^\frac12\ChiN(t)}_{\FN}\nonumber\\
&\leq& \lN^{a+\frac32}\onorm{A}\,\e^{C(a)t}\,,\\[5pt]
\big|\mathcal{R}_A^{(3)}\big|
&\leq&\lN^{a+1}|R_{1,a}|\Big(\norm{\ChiN(t)}_{\FN}\Big\|\int\dx\dy A(x;y)\ad_xa_y\ChiN(t)\Big\|_{\FN} \nonumber\\
&&\qquad\qquad\qquad+ \lr{\pt,A\pt}_{\fH}\lr{\ChiN(t),\Number\ChiN(t)}\Big)\nonumber\\
&\leq& \lN^{a+1} \onorm{A} \,\e^{C(a)t}\,.
\end{eqnarray}
In order to bound $\mathcal{R}_A^{(2)}$ and $\mathcal{R}_A^{(4)}$, we bring $\Number^k$  into normal order by iteratively applying \eqref{eqn:aux:1}, i.e.,
\begin{eqnarray}
\Number^k
=\sum\limits_{\l_1=0}^{k-1}\tbinom{k-1}{\l_1} \int\dx_1\ad_{x_1}\Number^{\l_1}a_{x_1}=\sum_{\l=1}^k C_{k,\l} \int \dx^{(\l)} \, \ad_{x_1}\mycdots\,\ad_{x_\l} a_{x_1}\mycdots\, a_{x_\l}
\end{eqnarray}
for some constants $C_{k,\l}$. Consequently, for $\lrt{\cdot}\in\big\{\lrt{\cdot}_{\l,k}\,,\,\lrt{\cdot}_N\big\}$,
\begin{eqnarray}
&&\hspace{-1cm}\int\dx\overline{(A^*\varphi)(t,x)}\lrt{(\Number-1)^ka_x}\nonumber\\
&=& \sum\limits_{\l=1}^k  \tilde{C}_{k,\l}\int\dx^{(\l)}\dy^{(\l+1)}B(t,x^{(\l)};y^{(\l+1)})\lrt{\ad_{x_1}\mycdots\,\ad_{x_\l}a_{y_1}\mycdots\,a_{y_{\l+1}}}\,,
\end{eqnarray}
where $B(t,x^{(\l)};y^{(\l+1)}):=\delta(y_1-x_1)\mycdots\delta(y_\l-x_\l)\overline{(A^*\varphi)(t,x_{\l+1})}$ is the Schwartz integral  kernel of the  operator $B:\fH^{\l+1}\to\fH^\l$, $\psi\mapsto(B\psi)(x^{(\l)})=\int\dx_{\l+1}\overline{(A^*\varphi)(t,x_{\l+1})}\psi(x^{\l+1})$ with $\norm{B}_{\mathcal{L}(\fH^{\l+1},\fH^\l)}\leq\onorm{A}$. Hence, Corollary \ref{thm:correlation functions_simplified} leads to the bound
\begin{equation}
\big|\mathcal{R}_A^{(j)}\big|\leq \onorm{A}\,\e^{C(a)t}\lN^{a+1}\,,\qquad j=2,4\,.
\end{equation}
Finally, Theorem \ref{thm:RDM} follows by duality of compact and trace class operators.
\qed

\section*{Acknowledgments}
\begin{wrapfigure}{l}{0.088\textwidth}
 \vspace{-15pt}
\includegraphics[scale=0.27]{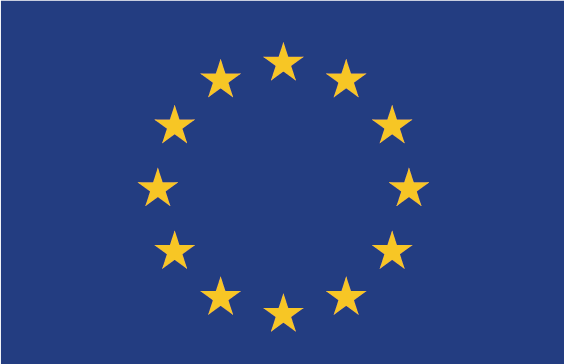}
  \vspace{-11pt}
\end{wrapfigure}
We are grateful for the hospitality of Central China Normal University (CCNU), where parts of this work were done, and thank Phan Th\`{a}nh Nam, Simone Rademacher, Robert Seiringer and Stefan Teufel for helpful discussions. L.B.\ gratefully acknowledges the support by the German Research Foundation (DFG) within the Research Training Group 1838 ``Spectral Theory and Dynamics of Quantum Systems'', and the funding from the European Union’s Horizon 2020 research and innovation programme under the Marie Sk{\textl}odowska-Curie Grant Agreement No.\ 754411.

\appendix
\section{Derivation of the Excitation Hamiltonian}\label{appendix:Hamiltonian}
Let us write the Hamiltonian \eqref{HN} as
\begin{equation}
\HN = \sum\limits_{j=1}^N \hpt_j + \lN\sum\limits_{1\leq i<j\leq N} \Wpt(x_i,x_j) \,,
\end{equation}
with $\Wpt$ as in \eqref{eqn:W(t,x,y)}.
Since $\ChiN(t) := \UNpt\PsiN(t)$, we find with \eqref{LNSS:map:U} that
\begin{eqnarray}
\i \partial_t \ChiN(t) &=& \i \partial_t \left( \bigoplus\limits_{j=0}^N\big(\qpt\big)^{\otimes j}\left(\frac{a(\pt)^{N-j}}{\sqrt{(N-j)!}}\PsiN(t)\right) \right) \nonumber\\
&=& \d\Gamma\big(\hpt\big) \ChiN(t) + \UNpt\left( \HN - \sum_{j=1}^N \hpt_j \right) \UNpt^* \ChiN(t)\,.
\end{eqnarray}
In some orthonormal basis $\{ \varphi_n(t) \}_{n\geq 0}$ with $\varphi_0(t) = \pt$, we define the matrix elements
\begin{equation}
W_{ijk\ell} := \int \dx \dy\, \overline{\varphi_i(t,x)} \, \overline{\varphi_j(t,y)} \Wpt(x,y) \varphi_k(t,x) \, \varphi_\ell(t,y)\,,
\end{equation}
omitting the time dependence for ease of notation. Then, denoting $a^\sharp_i := a^\sharp(\varphi_i(t))$,
\begin{eqnarray}
\HN - \sum_{j=1}^N \hpt_j &=& \frac{1}{2} \lN \sum_{i,j,k,\ell \geq 0} W_{ijk\ell} a^\dagger_i a^\dagger_j a_k a_\ell \nonumber\\
&=& \lN \sum_{j,k > 0} W_{0jk0}\, a^\dagger_0 a^\dagger_j a_k a_0 + \frac{\lN}{2} \left( \sum_{i,j > 0} W_{ij00}\, a^\dagger_i a^\dagger_j a_0 a_0 + \hc \right) \nonumber\\
&& + \lN \left( \sum_{i,j,k > 0} W_{ijk0} \, a^\dagger_i a^\dagger_j a_k a_0 + \hc \right) + \frac{\lN}{2} \sum_{i,j,k,\ell > 0} W_{ijk\ell}\, a^\dagger_i a^\dagger_j a_k a_\ell\qquad\;
\end{eqnarray}
since $W_{0000} = W_{000\ell} = W_{i0k0} = 0$ etc.\ for all $i,k,\ell > 0$ by definition of $\Wpt$. Finally, \eqref{eq_H_N_Fock} follows directly from the transformation rules \eqref{eqn:substitution:rules} because $W_{0jk0} = \lr{\varphi_j(t), \Kopt \varphi_k(t)}$, $W_{ij00} = \lr{\varphi_i(t) \otimes \varphi_j(t), \Ktpt}$, and $W_{ijk0} = \lr{\varphi_i(t) \otimes \varphi_j(t), \Kthpt(t) \varphi_k(t)}$ for all $i,j,k > 0$.

\section{Assumption~\ref{ass:initial:data} for Trapped Initial Data}\label{appendix:ini_data}

Let us discuss in more detail how the results of \cite{spectrum} provide a natural class of initial data satisfying  Assumption~\ref{ass:initial:data}. If $\PsiN(0)$ is the ground state or a low-energy eigenstate of $\HN_\trap$ as in \eqref{H:N:trap}, the initial condensate wave function $\varphi(0)$ is given by the normalized minimizer $\varphi_\mathrm{trap}$ of the Hartree energy functional
\begin{equation}
\mathcal{E}_\trap[\varphi]=\int\limits_{\R^d}\left(|\nabla\varphi(x)|^2+V_\trap(x)|\varphi(x)|^2\right)\dx + \tfrac12\int\limits_{\R^{2d}}v(x-y)|\varphi(x)|^2|\varphi(y)|^2\dx\dy\,,
\end{equation}
whose minimum is denoted by $e_\trap:=\mathcal{E}_\trap[\varphi_\trap]$. 
More precisely, Assumption \ref{ass:initial:data} is satisfied for each eigenstate $\PsiN_\trap$ of $\HN_\trap$ associated with an eigenvalue $E_\trap\in\mathfrak{E}_\trap^\zeta$ for some $\zeta\geq 0$. Here, $\mathfrak{E}_\trap^\zeta$ is the set of eigenvalues of $\HN_\trap$ such that 
$|E_\trap^{(n)}-N e_\trap|\leq \zeta$
for any $E_\trap^{(n)}\in\mathfrak{E}_\trap^\zeta$ and such  that
\begin{equation}\label{eqn:E:gamma:trap}
\lim\limits_{N\to\infty}(Ne_\trap-E_\trap^{(n_1)})\neq \lim\limits_{N\to\infty}(Ne_\trap-E_\trap^{(n_2)})\,
\end{equation}
for any $E_\trap^{(n_1)}, E_\trap^{(n_2)}\in\mathfrak{E}_\trap^\zeta$ with $n_1\neq n_2$,
where the eigenvalues are counted with multiplicity.
Hence, $\mathfrak{E}^\zeta_\trap$ contains all non-degenerate eigenvalues of $\HN_\trap$ with an energy of order one above the ground state energy, which additionally satisfy the condition that no two elements of $\mathfrak{E}^\zeta_\trap$ converge to the same eigenvalue of the Bogoliubov Hamiltonian corresponding to $\HN_\trap$\footnote{
The result in \cite{spectrum} is essentially a perturbative expansion of the spectral projectors of $\FockHN_\trap$ around the spectral projectors of the Bogoliubov Hamiltonian. Due to the strategy of proof (the spectral projectors are expressed as integrals of the resolvent of $\FockHN_\trap$ along a contour enclosing the appropriate Bogoliubov eigenvalue), one cannot separate projectors of $\FockHN$ corresponding to eigenvalues with equality in \eqref{eqn:E:gamma:trap}. Moreover, in the case of degenerate eigenvalues, there is no one-to-one correspondence between the spectral projector and a wave function. We expect that suitable functions $\Chil(t)$ can be defined in the context of degenerate perturbation theory.}. 
The state $\Chio(0)$ as in \eqref{def:Chi0:0} is a normalized eigenstate of this Bogoliubov Hamiltonian. If $\PsiN(0)$ is the ground state of $\HN_\trap$, it follows that $\tilde{\nu}=0$ in \eqref{def:Chi0:0}, i.e., $\Chio(0)$ is quasi-free.

\begin{proposition}\label{prop:H:trap}
Let $v:\R^d\to\R$ be measurable, even, and of positive type (i.e., $v$ has a non-negative Fourier transform). Let $V_\mathrm{trap}:\R^d\to\R$ be measurable, locally bounded and non-negative, and let $V_\mathrm{trap}$ tend to infinity as $|x|\to\infty$. 
Let $E_\trap\in\mathfrak{E}_\trap^\zeta$ for some $\zeta\geq 0$ and denote by $\PsiN_\trap\in\fH^N_\sym$ the associated normalized  eigenstate of $\HN_\trap$.
Let $\PsiN(0)=\PsiN_\trap$ and $\varphi(0)=\varphi_\trap$.
Then Assumption \ref{ass:initial:data} is satisfied for any $\tilde{a}\in\N_0$.
\end{proposition}

Proposition \ref{prop:H:trap} is proven in \cite[Theorem 3]{spectrum}. The coefficients $\mathfrak{a}^{(\l)}_{n,m,\mu}$ can be retrieved from \cite{spectrum}. For example, for $\PsiN(0)$ the ground state of $\HN_\mathrm{trap}$, the first order correction $\Chit(0)$ to the Bogoliubov ground state $\Chio(0)$ is given as
\begin{equation}
\Chit(0)=\frac{\id-|\Chio(0)\rangle\langle\Chio(0)|}{E^{(0)}_\mathrm{trap}-\FockH^{(0)}_\mathrm{trap}}\,\FockH^{(1)}_\mathrm{trap}\,\Chio(0)\,,
\end{equation}
where the operators $\FockH^{(0)}_\mathrm{trap}$ and $\FockH^{(1)}_\mathrm{trap}$ (corresponding to  $\HN_\mathrm{trap}$) are defined analogously to the operators $\FockH^{(0)}_\pz$ and $\FockH^{(1)}_\pz$  (corresponding to $\HN$), and where $E^{(0)}_\mathrm{trap}$ denotes the ground state energy of $\FockH^{(0)}_\mathrm{trap}$. For $\cU_\mathrm{trap}$ the Bogoliubov transform diagonalizing $\FockH^{(0)}_\mathrm{trap}$, it holds that 
\begin{equation}
\Chio(0)=\cU_\mathrm{trap}^*|\Omega\rangle\,
\end{equation}
and
\begin{equation}
\Chit(0)=\cU_\mathrm{trap}^*\Big(\bigoplus_{m\geq 0}\mathcal{O}^{(m)}\Big) \cU_\mathrm{trap}\, \FockH^{(1)}_\mathrm{trap}\,\cU_\mathrm{trap}^*|\Omega\rangle\,,
\end{equation}
where 
\begin{equation}
\cU_\mathrm{trap}\,\frac{\id-|\Chio(0)\rangle\langle\Chio(0)|}{E^{(0)}_\mathrm{trap}-\FockH^{(0)}_\mathrm{trap}}\,\cU_\mathrm{trap}^*
=:\mathcal{O}
=\bigoplus_{m\geq 0}\mathcal{O}^{(m)}
\end{equation}
because $\mathcal{O}$ is particle number preserving. The $m$-body operators $\mathcal{O}^{(m)}$ are given by
\begin{equation}
\mathcal{O}^{(m)} = 
-\sum\limits_{\substack{\bj\in\N^m\\j_1\leq\dots\leq j_m}}\frac{1}{\sum_{k=1}^m e_{j_k}}\ad(\xi_{j_1})\mycdots\ad(\xi_{j_m})|\Omega\rangle\langle
\Omega|a(\xi_{j_m})\mycdots a(\xi_{j_1})\,,
\end{equation}
where $\{\xi_j\}_{j\geq 1}$ denotes a complete set of normalized eigenfunctions of the one-body operator resulting from the diagonalization of $\FockH^{(0)}_\mathrm{trap}$. The corresponding eigenvalues are denoted by $e_j$.
One computes 
\begin{equation}
\cU_\mathrm{trap}\,\FockH^{(1)}_\mathrm{trap}\,\cU_\mathrm{trap}^*|\Omega\rangle 
= \int\dx f^{(1)}_\mathrm{trap}(x)\ad_x|\Omega\rangle +\int\dx^{(3)} f^{(3)}_\mathrm{trap}(x^{(3)})\ad_{x_1}\ad_{x_2}\ad_{x_3}|\Omega\rangle 
\end{equation}
with
\begin{subequations}
\begin{eqnarray}
f^{(1)}_\mathrm{trap}(x)&:=&\int\dy\left(\mathfrak{A}^{(-1,1,1)}_{\mathrm{trap};1,3}(y,y,x) + \mathfrak{A}^{(-1,1,1)}_{\mathrm{trap};1,3}(y,x,y)+\mathfrak{A}^{(1,-1,1)}_{\mathrm{trap};1,3}(x,y,y)\right),\qquad\\
f^{(3)}_\mathrm{trap}(x^{(3)})&:=&\mathfrak{A}^{(1,1,1)}_{\mathrm{trap};1,3}(x^{(3)})
\end{eqnarray}
\end{subequations}
for $\mathfrak{A}^{(\bj)}_{\mathrm{trap};1,3}$ defined analogously to \eqref{eqn:A:coeff}. Consequently, 
\begin{eqnarray}
\Chit(0)&=& \cU_\mathrm{trap}^*\,\int\dx\big(\mathcal{O}^{(1)}f_\mathrm{trap}^{(1)}\big)(x)\ad_x\,\cU_\mathrm{trap}\Chio(0)\nonumber\\
&&+ \,\cU_\mathrm{trap}^*\,\int\dx^{(3)}\big(\mathcal{O}^{(3)}\tilde{f}_\mathrm{trap}^{(3)}\big)(x^{(3)})\ad_{x_1}\ad_{x_2}\ad_{x_3}\,\cU_\mathrm{trap}\Chio(0)
\end{eqnarray}
for $\tilde{f}^{(3)}_\mathrm{trap}(x^{(3)})=(3!)^{-1/2}\sum_{\sigma\in\mathfrak{S}_3} f^{(3)}_\mathrm{trap}(x_{\sigma(1)},x_{\sigma(2)},x_{\sigma(3)})$ the symmetrized version of $f^{(3)}_\mathrm{trap}$. Finally, the coefficients $\mathfrak{a}^{(1)}_{1,m,\mu}$ follow from this by \eqref{eqn:trafo:ax:invers}, e.g., 
\begin{equation}
\mathfrak{a}^{(1)}_{1,1,0}(x) = -\big(\overline{V}^*_\mathrm{trap}\mathcal{O}^{(1)}f^{(1)}_\mathrm{trap}\big)(x)\,,\qquad
\mathfrak{a}^{(1)}_{1,1,1}(x) = \big(U^*_\mathrm{trap}\mathcal{O}^{(1)}f^{(1)}_\mathrm{trap}\big)(x)
\end{equation}
for $U_\mathrm{trap}$ and $V_\mathrm{trap}$ the matrix entries of $\cU_\mathrm{trap}$ as in \eqref{BogV:block:form}.

\section{Proof of Lemma \ref{lem:calculus}.}
\label{appendix:calculus}

We estimate the Taylor series remainders for the functions $f_0^{(0)}$ and $f_\mu^{(0)} f_0^{(0)}$, where
\begin{equation}
f_\mu^{(n)}:[-1,1-\mu]\to[0,\infty)\,,\qquad x\mapsto f_\mu^{(n)}(x):=(1-x-\mu)^{\frac12-n}\,,
\end{equation}
for $\mu\in[0,\frac14)$ and $n\geq0$. For $c_\l^{(n)}$ as in \eqref{eqn:taylor:coeff}, we find that
\begin{eqnarray}
f_\mu^{(n)}(x)&=:&\sum\limits_{\l=0}^a \frac{c^{(n)}_\l}{(1-\mu)^{n+\l-\frac12}} \,x^\l+R^{(n)}_{a,\mu}(x) \,,\label{eqn:proof:thm:calculus:2}\\
\big(f^{(0)}_\mu f^{(0)}_0\big)(x)
&=:&\sum\limits_{\l=0}^a \sum\limits_{m=0}^\l c_m^{(0)} c^{(0)}_{\l-m} \left(\frac{1}{1-\mu}\right)^{m-\frac12} \,x^\l+\tilde{R}_{a,\mu}(x)\,.
\label{eqn:proof:thm:calculus:1}
\end{eqnarray} 
\medskip

\noindent\textbf{Estimates for $f^{(0)}_0$.}
The remainder $R^{(n)}_{a,0}(x)$ is given as
\begin{equation}\label{eqn:proof:thm:calculus:3}
R^{(n)}_{a,0}(x)
 = (a+1)\,c^{(n)}_{a+1}\int\limits_0^x\frac{1}{(1-t)^{n+\frac12}}\left(\frac{x-t}{1-t}\right)^a \dt
=c^{(n)}_{a+1}\left(\frac{1}{1-\xi}\right)^{n+a+\frac12}x^{a+1}\,\end{equation}
for some $\xi\in(0,x)$. 
For $x\in[-1,\frac12]$, the second equality yields $|R^{(0)}_{a,0}(x)|\leq2^a|x|^{a+1}$ since $1-\xi>\frac12$ and by \eqref{eqn:cln:bound}. For $x\in(\frac12,1]$, $|R^{(0)}_{a,0}(x)|\leq 1\leq   2^{a+1}|x|^{a+1}$ by the first equality.
\medskip

\noindent\textbf{Estimates for $f^{(0)}_{\lN} f^{(0)}_0$.}
By \eqref{eqn:proof:thm:calculus:1} for $\mu=\lN$ and $x=\lN(k-1)$ with $0\leq k\leq N-1$,
\begin{equation}\label{eqn:proof:thm:calculus:4}
\big(f_{\lN}^{(0)}f_0^{(0)}\big)(\lN(k-1))
= \sum\limits_{\l=0}^a\sum\limits_{n=0}^\l c_n^{(0)}c_{\l-n}^{(0)} f^{(n)}_0(\lN)\lN^\l (k-1)^\l 
+ \tilde{R}_{a,\lN}(\lN(k-1))\,.
\end{equation}
By \eqref{eqn:proof:thm:calculus:2} and \eqref{eqn:proof:thm:calculus:3},
\begin{equation}
f_0^{(n)}(\lN) = \sum\limits_{\nu=0}^{a-\l} c_\nu^{(n)}
\lN^{\nu}
+c_{a-\l+1}^{(n)}(1-\xi)^{\l-a-n-\frac12}\lN^{a-\l+1}
\end{equation}
for some $\xi\in(0,\lN)$. Inserting this formula into \eqref{eqn:proof:thm:calculus:4} yields
\begin{eqnarray}
&&\hspace{-1cm}\big(f_{\lN}^{(0)}f_0^{(0)}\big)(\lN (k-1))\nonumber\\
&=&\sum\limits_{\l=0}^a\sum\limits_{n=0}^\l \sum\limits_{\nu=0}^{a-\l}
c_n^{(0)}c_{\l-n}^{(0)} c_\nu^{(n)}
(k-1)^\l \lN^{\nu+\l}\nonumber\\
&&+\lN^{a+1} \sum\limits_{\l=0}^a\sum\limits_{n=0}^\l 
c_n^{(0)}c_{\l-n}^{(0)} c_{a-\l+1}^{(n)}\left(\frac{1}{1-\xi}\right)^{a-\l+n+\frac12} (k-1)^\l
+ \tilde{R}_{a,\lN}(\lN(k-1))\nonumber\\
&=&\sum\limits_{\l=0}^a\lN^{\l}\sum\limits_{m=0}^\l d_{\l,m} (k-1)^m
+\lN^{a+1}r(k,a)+\tilde{R}_{a,\lN}(\lN(k-1))\,,
\label{eqn:proof:thm:calculus:8}
\end{eqnarray}
where we re-ordered the triple sum using $d_{\l,m}$ from \eqref{eqn:taylor:coeff:2}, and abbreviated the double sum as $r(k,a)$.
Since $\xi\in(0,\lN)$ and $n\leq\l\leq a$, it follows that $(1-\xi)^{-(a-\l+n+\frac12)}\leq (\frac{N-1}{N-2})^{a+\frac12}$ and consequently $|r(k,a)|\leq (a+1)^2 2^a\left(\tfrac{N-1}{N-2}\right)^{a+\frac12}(k+1)^a$.
The remainder $\tilde{R}_{a,\lN}$ is  given as
\begin{eqnarray}
\tilde{R}_{a,\lN}(x)
&=&\sum\limits_{n=0}^{a+1}c^{(0)}_nc^{(0)}_{a+1-n}\left(\frac{1}{1-\xi-\lN}\right)^{n-\frac12}\left(\frac{1}{1-\xi}\right)^{a-n+\frac12}x^{a+1}\nonumber\\
&=&(a+1)\sum\limits_{n=0}^{a+1}c^{(0)}_n c^{(0)}_{a+1-n}\int\limits_0^x\left(\frac{1}{1-t-\lN}\right)^{n-\frac12}\left(\frac{1}{1-t}\right)^{a-n+\frac12}(x-t)^a \dt\qquad
\end{eqnarray}
for some $\xi\in(0,x)$. 
For $x\in[-1,\frac12]$, the first equality yields
$|\tilde{R}_{a,\lN}(x)|\leq a\,4^a |x|^{a+1}$,
where we used that that $2>1-{\xi} >\frac12$ and $2>1-\xi-\lN>\frac12\frac{N-3}{N-1}$. 
For $x\in(\frac12,1-\lN]$, the second equality leads to the estimate
$|\tilde{R}_{a,\lN}(x)|
\leq (a+1)^2 2^{a+2}|x|^{a+1}$.
\\

\noindent\textbf{Proof of Lemma \ref{lem:calculus}.}
Since 
$\frac{\sqrt{[N-\Number]_+}}{N-1}=\lN^\frac12f^{(0)}_0(\lN(\Number-1))$ on $\FN$,
$\lN^{-1}=N-1\leq\Number-1$ on $\Fock^{\geq N}$, and
$\frac{\sqrt{[(N-\Number)(N-\Number-1)]_+}}{N-1} =\big(f^{(0)}_{\lN} f_0^{(0)}\big)\left(\lN(\Number-1)\right)$
on $\Fock^{\leq N-1}$, we find
\begin{subequations}
\begin{eqnarray}
\norm{R^{(3)}_a\bPhi}_{\FN}
&\leq& 2^{a+1}\norm{(\Number+1)^{a+1}\bPhi}_{\FN}\,,\\
\norm{R^{(3)}_a\bPhi}_{\Fock^{\geq N}} &\leq& (a+1) \norm{(\Number+1)^{a+1}\bPhi}_{\Fock^{> N}}\,,\\
\norm{R^{(2)}_a\bPhi}_{\Fock^{\leq N-1}}&\leq& 4^a(a+1)^2\norm{(\Number+1)^{a+1}\bPhi}_{\Fock^{\leq N-1}}\,,\\
\norm{R^{(2)}_a\bPhi}_{\Fock^{\geq N}}
&\leq& 2^a(a+1)^3\norm{(\Number+1)^{a+1}\bPhi}_{\Fock^{\geq N}}\,.
\end{eqnarray}
\end{subequations}

\renewcommand{\bibname}{References}
\bibliographystyle{abbrv}
    \bibliography{bib_file}

\begin{thebibliography}{10}

\bibitem{adami2004}
R.~Adami, C.~Bardos, F.~Golse, and A.~Teta.
\newblock Towards a rigorous derivation of the cubic {NLSE} in dimension one.
\newblock {\em Asymptot. Anal.}, 40(2):93--108, 2004.

\bibitem{adami2007}
R.~Adami, F.~Golse, and A.~Teta.
\newblock Rigorous derivation of the cubic {NLS} in dimension one.
\newblock {\em J. Stat. Phys.}, 127(6):1193--1220, 2007.

\bibitem{anapolitanos2016}
I.~Anapolitanos and M.~Hott.
\newblock A simple proof of convergence to the {Hartree} dynamics in {Sobolev}
  trace norms.
\newblock {\em J. Math. Phys.}, 57(12):122108, 2016.

\bibitem{bach2016}
V.~Bach, S.~Breteaux, T.~Chen, J.~Fr{\"o}hlich, and I.~M. Sigal.
\newblock The time-dependent {Hartree--Fock--Bogoliubov} equations for bosons.
\newblock {\em arXiv:1602.05171}, 2016.

\bibitem{bach1994}
V.~Bach, E.~H. Lieb, and J.~P. Solovej.
\newblock Generalized {Hartree--Fock} theory and the {Hubbard} model.
\newblock {\em J. Stat. Phys.}, 76(1-2):3--89, 1994.

\bibitem{bardos2000}
C.~Bardos, F.~Golse, and N.~J. Mauser.
\newblock Weak coupling limit of the $ n $-particle {Schr{\"o}dinger} equation.
\newblock {\em Methods Appl. Anal.}, 7(2):275--294, 2000.

\bibitem{benarous2013}
G.~Ben~Arous, K.~Kirkpatrick, and B.~Schlein.
\newblock A central limit theorem in many-body quantum dynamics.
\newblock {\em Commun. Math. Phys.}, 321:371--417, 2013.

\bibitem{benedikter2015}
N.~Benedikter, G.~de~Oliveira, and B.~Schlein.
\newblock Quantitative derivation of the {Gross--Pitaevskii} equation.
\newblock {\em Commun. Pure Appl. Math.}, 68(8):1399--1482, 2015.

\bibitem{benedikter_lec}
N.~Benedikter, M.~Porta, and B.~Schlein.
\newblock {\em Effective Evolution Equations from Quantum Dynamics}.
\newblock SpringerBriefs in Mathematical Physics. Springer, 2016.

\bibitem{benguria-lieb1983}
R.~Benguria and E.~H. Lieb.
\newblock Proof of the stability of highly negative ions in the absence of the
  {P}auli principle.
\newblock {\em Phys. Rev. Lett.}, 50(22):1771--1774, 1983.

\bibitem{boccato2015}
C.~Boccato, S.~Cenatiempo, and B.~Schlein.
\newblock Quantum many-body fluctuations around nonlinear {Schr{\"o}dinger}
  dynamics.
\newblock {\em Ann. Henri Poincar{\'e}}, 18(1):113--191, 2017.

\bibitem{corr}
L.~Bo{\ss}mann, N.~Pavlovi\'c, P.~Pickl, and A.~Soffer.
\newblock Higher order corrections to the mean-field description of the
  dynamics of interacting bosons.
\newblock {\em J.\ Stat.\ Phys.}, 178(6):1362--1396, 2020.

\bibitem{spectrum}
L.~Bo{\ss}mann, S.~Petrat, and R.~Seiringer.
\newblock Asymptotic expansion of low-energy excitations for weakly interacting
  bosons.
\newblock {\em Forum Math. Sigma}, 9:e28, 2021.

\bibitem{brennecke2017_2}
C.~Brennecke, {P.T. Nam}, M.~Napi{\'o}rkowski, and B.~Schlein.
\newblock Fluctuations of {N}-particle quantum dynamics around the nonlinear
  {Schr{\"o}dinger} equation.
\newblock {\em Ann. Inst. H. Poincar{\'e} C, Anal. Non Lin{\'e}aire},
  36(5):1201 -- 1235, 2019.

\bibitem{brennecke2017}
C.~Brennecke and B.~Schlein.
\newblock {Gross--Pitaevskii} dynamics for {Bose--Einstein} condensates.
\newblock {\em Anal. PDE}, 12(6):1513--1596, 2019.

\bibitem{buchholz2014}
S.~Buchholz, C.~Saffirio, and B.~Schlein.
\newblock Multivariate central limit theorem in quantum dynamics.
\newblock {\em J. Stat. Phys.}, 154(1-2):113--152, 2014.

\bibitem{cazenave}
T.~Cazenave.
\newblock {\em Semilinear {Schr{\"o}dinger} Equations}.
\newblock Courant Lecture Notes in Mathematics, 10. American Mathematical
  Society, 2003.

\bibitem{chang2016}
R.~Chang, Q.~Bouton, H.~Cayla, C.~Qu, A.~Aspect, C.~Westbrook, and
  D.~Cl{\'e}ment.
\newblock Momentum-resolved observation of thermal and quantum depletion in a
  {Bose} gas.
\newblock {\em Phys. Rev. Lett.}, 117(23):235303, 2016.

\bibitem{chenlee:2011}
L.~Chen and J.~O. Lee.
\newblock Rate of convergence in nonlinear {H}artree dynamics with factorized
  initial data.
\newblock {\em J. Math. Phys.}, 52(5):052108, 2011.

\bibitem{chen2011}
L.~Chen, J.~O. Lee, and B.~Schlein.
\newblock Rate of convergence towards {Hartree} dynamics.
\newblock {\em J. Stat. Phys.}, 144(4):872--903, 2011.

\bibitem{chen2013_2}
X.~Chen.
\newblock On the rigorous derivation of the 3d cubic nonlinear
  {Schr{\"o}dinger} equation with a quadratic trap.
\newblock {\em Arch. Ration. Mech. Anal.}, 210(2):365--408, 2013.

\bibitem{chong2016}
J.~Chong.
\newblock Dynamics of large boson systems with attractive interaction and a
  derivation of the cubic focusing {NLS} in $\mathbb{R}^3$.
\newblock {\em J. Math. Phys.}, 62(4), 2021.

\bibitem{derezinski}
J.~Derezi{\'n}ski and C.~G{\'e}rard.
\newblock {\em Mathematics of quantization and quantum fields}.
\newblock Cambridge University Press, 2013.

\bibitem{erdos2009}
L.~Erd{\H{o}}s and B.~Schlein.
\newblock Quantum dynamics with mean field interactions: a new approach.
\newblock {\em J. Stat. Phys.}, 134(5-6):859--870, 2009.

\bibitem{erdos2006}
L.~Erd{\H{o}}s, B.~Schlein, and H.-T. Yau.
\newblock Derivation of the {Gross--Pitaevskii} hierarchy for the dynamics of
  {Bose--Einstein} condensate.
\newblock {\em Commun. Pure Appl. Math.}, 59(12):1659--1741, 2006.

\bibitem{erdos2007}
L.~Erd{\H{o}}s, B.~Schlein, and H.-T. Yau.
\newblock Derivation of the cubic non-linear {Schr{\"o}dinger} equation from
  quantum dynamics of many-body systems.
\newblock {\em Invent. Math.}, 167(3):515--614, 2007.

\bibitem{erdos2007_2}
L.~Erd{\H{o}}s, B.~Schlein, and H.-T. Yau.
\newblock Rigorous derivation of the {Gross--Pitaevskii} equation.
\newblock {\em Phys. Rev. Lett.}, 98(4):040404, 2007.

\bibitem{erdos2009_3}
L.~Erd{\H{o}}s, B.~Schlein, and H.-T. Yau.
\newblock Rigorous derivation of the {Gross--Pitaevskii} equation with a large
  interaction potential.
\newblock {\em J. Amer. Math. Soc.}, 22(4):1099--1156, 2009.

\bibitem{erdos2010}
L.~Erd{\H{o}}s, B.~Schlein, and H.-T. Yau.
\newblock Derivation of the {Gross--Pitaevskii} equation for the dynamics of
  {Bose--Einstein} condensate.
\newblock {\em Ann. Math.}, 172(1):291--370, 2010.

\bibitem{erdos2001}
L.~Erd{\H{o}}s and H.-T. Yau.
\newblock Derivation of the nonlinear {Schr{\"o}dinger} equation from a
  many-body {Coulomb} system.
\newblock {\em Adv. Theor. and Math. Phys.}, 5(6):1169--1205, 2001.

\bibitem{froehlich2007}
J.~Fr{\"o}hlich, S.~Graffi, and S.~Schwarz.
\newblock Mean-field- and classical limit of many-body {Schr{\"o}dinger}
  dynamics for bosons.
\newblock {\em Commun. Math. Phys.}, 271(3):681--697, 2007.

\bibitem{frohlich2009}
J.~Fr{\"o}hlich, A.~Knowles, and S.~Schwarz.
\newblock On the mean-field limit of bosons with {C}oulomb two-body
  interaction.
\newblock {\em Commun. Math. Phys.}, 288(3):1023--1059, 2009.

\bibitem{Froehlich_Lenzmann_2003_rev}
J.~Fr\"ohlich and E.~Lenzmann.
\newblock Mean-field limit of quantum {B}ose gases and nonlinear {H}artree
  equation.
\newblock {\em S\'eminaire \'Equations aux d\'eriv\'ees partielles
  (Polytechnique)}, 2004.

\bibitem{ginibre1979}
J.~Ginibre and G.~Velo.
\newblock The classical field limit of scattering theory for non-relativistic
  many-boson systems. {I}.
\newblock {\em Commun. Math. Phys.}, 66(1):37--76, 1979.

\bibitem{ginibre1979_2}
J.~Ginibre and G.~Velo.
\newblock The classical field limit of scattering theory for non-relativistic
  many-boson systems. {II}.
\newblock {\em Commun. Math. Phys.}, 68(1):45--68, 1979.

\bibitem{ginibre1980_2}
J.~Ginibre and G.~Velo.
\newblock The classical field limit of non-relativistic bosons. {II.}
  {Asymptotic} expansions for general potentials.
\newblock {\em Ann. Inst. H. Poincar\'{e} Physique th{\'e}orique},
  33(4):363--394, 1980.

\bibitem{ginibre1980}
J.~Ginibre and G.~Velo.
\newblock The classical field limit of nonrelativistic bosons. {I.} {Borel}
  summability for bounded potentials.
\newblock {\em Ann. Phys.}, 128(2):243--285, 1980.

\bibitem{golse_lec}
F.~Golse.
\newblock On the dynamics of large particle systems in the mean field limit.
\newblock In {\em Macroscopic and Large Scale Phenomena: Coarse Graining, Mean
  Field Limits and Ergodicity}, pages 1--144. Springer, 2016.

\bibitem{grech2013}
P.~Grech and R.~Seiringer.
\newblock The excitation spectrum for weakly interacting bosons in a trap.
\newblock {\em Commun. Math. Phys.}, 322(2):559--591, 2013.

\bibitem{grillakis2013}
M.~Grillakis and M.~Machedon.
\newblock Pair excitations and the mean field approximation of interacting
  bosons, {I}.
\newblock {\em Commun. Math. Phys.}, 324(2):601--636, 2013.

\bibitem{grillakis2017}
M.~Grillakis and M.~Machedon.
\newblock Pair excitations and the mean field approximation of interacting
  bosons, {II}.
\newblock {\em Commun. PDE}, 42(1):24--67, 2017.

\bibitem{grillakis2010}
M.~Grillakis, M.~Machedon, and D.~Margetis.
\newblock Second-order corrections to mean field evolution of weakly
  interacting bosons, {I}.
\newblock {\em Commun. Math. Phys.}, 294(1):273, 2010.

\bibitem{grillakis2011}
M.~Grillakis, M.~Machedon, and D.~Margetis.
\newblock Second-order corrections to mean field evolution of weakly
  interacting bosons, {II}.
\newblock {\em Adv. Math.}, 228(3):1788--1815, 2011.

\bibitem{hepp}
K.~Hepp.
\newblock The classical limit for quantum mechanical correlation functions.
\newblock {\em Commun. Math. Phys.}, 35(4):265--277, 1974.

\bibitem{jeblick2016}
M.~Jeblick, N.~Leopold, and P.~Pickl.
\newblock Derivation of the time dependent {G}ross-{P}itaevskii equation in two
  dimensions.
\newblock {\em Commun. Math. Phys.}, 372(1):1--69, 2019.

\bibitem{jeblick2018}
M.~Jeblick and P.~Pickl.
\newblock Derivation of the time dependent {Gross--Pitaevskii} equation for a
  class of non purely positive potentials.
\newblock {\em arXiv:1801.04799}, 2018.

\bibitem{jeblick2017}
M.~Jeblick and P.~Pickl.
\newblock Derivation of the time dependent two dimensional focusing {NLS}
  equation.
\newblock {\em J. Stat. Phys.}, 172(5):1398--1426, 2018.

\bibitem{kirkpatrick2011}
K.~Kirkpatrick, B.~Schlein, and G.~Staffilani.
\newblock Derivation of the two-dimensional nonlinear {Schr{\"o}dinger}
  equation from many body quantum dynamics.
\newblock {\em Amer. J. Math.}, 133(1):91--130, 2011.

\bibitem{knowles2010}
A.~Knowles and P.~Pickl.
\newblock Mean-field dynamics: singular potentials and rate of convergence.
\newblock {\em Commun. Math. Phys.}, 298(1):101--138, 2010.

\bibitem{kuz2017}
E.~Kuz.
\newblock Exact evolution versus mean field with second-order correction for
  bosons interacting via short-range two-body potential.
\newblock {\em Differ. Integral Equ.}, 30(7/8):587--630, 2017.

\bibitem{lewin2014}
M.~Lewin, P.~T. Nam, and N.~Rougerie.
\newblock Derivation of {H}artree's theory for generic mean-field {Bose}
  systems.
\newblock {\em Adv. Math.}, 254:570--621, 2014.

\bibitem{lewin2015}
M.~Lewin, P.~T. Nam, and B.~Schlein.
\newblock Fluctuations around {Hartree} states in the mean field regime.
\newblock {\em Amer. J. Math.}, 137(6):1613--1650, 2015.

\bibitem{lewin2015_2}
M.~Lewin, P.~T. Nam, S.~Serfaty, and J.~P. Solovej.
\newblock {B}ogoliubov spectrum of interacting {B}ose gases.
\newblock {\em Commun. Pure Appl. Math.}, 68(3):413--471, 2015.

\bibitem{LSSY}
E.~H. Lieb, R.~Seiringer, J.~P. Solovej, and J.~Yngvason.
\newblock {\em The Mathematics of the Bose Gas and its Condensation}.
\newblock Birkh{\"a}user, 2005.

\bibitem{Lieb_Yau1987}
E.~H. Lieb and H.-T. Yau.
\newblock The {C}handrasekhar theory of stellar collapse as the limit of
  quantum mechanics.
\newblock {\em Commun. Math. Phys.}, 112(1):147--174, 1987.

\bibitem{lopes2017}
R.~Lopes, C.~Eigen, N.~Navon, D.~Cl{\'e}ment, R.~P. Smith, and Z.~Hadzibabic.
\newblock Quantum depletion of a homogeneous {Bose--Einstein} condensate.
\newblock {\em Phys. Rev. Lett}, 119(19):190404, 2017.

\bibitem{mitrouskas2016}
D.~Mitrouskas, S.~Petrat, and P.~Pickl.
\newblock Bogoliubov corrections and trace norm convergence for the {Hartree}
  dynamics.
\newblock {\em Rev. Math. Phys.}, 31(8), 2019.

\bibitem{nam2011}
P.~T. Nam.
\newblock Bogoliubov theory and bosonic atoms.
\newblock {\em arXiv:1109.2875}, 2011.

\bibitem{nam2015}
P.~T. Nam and M.~Napi{\'o}rkowski.
\newblock Bogoliubov correction to the mean-field dynamics of interacting
  bosons.
\newblock {\em Adv. Theor. Math. Phys.}, 21(3):683--738, 2017.

\bibitem{nam2017}
P.~T. Nam and M.~Napi{\'o}rkowski.
\newblock A note on the validity of {Bogoliubov} correction to mean-field
  dynamics.
\newblock {\em J. Math. Pures Appl.}, 108(5):662--688, 2017.

\bibitem{nam2017_2}
P.~T. Nam and M.~Napi{\'o}rkowski.
\newblock Norm approximation for many-body quantum dynamics: focusing case in
  low dimensions.
\newblock {\em Adv. Math.}, 350:547--587, 2019.

\bibitem{paul2019}
T.~Paul and M.~Pulvirenti.
\newblock Asymptotic expansion of the mean-field approximation.
\newblock {\em Discrete Contin. Dyn. Syst. A}, 39(4):1891--1921, 2019.

\bibitem{petrat2017}
S.~Petrat, P.~Pickl, and A.~Soffer.
\newblock Derivation of the {B}ogoliubov time evolution for a large volume
  mean-field limit.
\newblock {\em Ann. Henri Poincar{\'{e}}}, 21(2):461--498, 2020.

\bibitem{pickl2011}
P.~Pickl.
\newblock A simple derivation of mean field limits for quantum systems.
\newblock {\em Lett. Math. Phys.}, 97(2):151--164, 2011.

\bibitem{pickl2015}
P.~Pickl.
\newblock Derivation of the time dependent {Gross--Pitaevskii} equation with
  external fields.
\newblock {\em Rev. Math. Phys.}, 27(1):1550003, 2015.

\bibitem{pizzo2015}
A.~Pizzo.
\newblock Bose particles in a box {I}. {A} convergent expansion of the ground
  state of a three-modes {Bogoliubov Hamiltonian}.
\newblock {\em arXiv:1511.07022}, 2015.

\bibitem{pizzo2015_2}
A.~Pizzo.
\newblock Bose particles in a box {II}. {A} convergent expansion of the ground
  state of the {Bogoliubov Hamiltonian} in the mean field limiting regime.
\newblock {\em arXiv:1511.07025}, 2015.

\bibitem{pizzo2015_3}
A.~Pizzo.
\newblock Bose particles in a box {III}. {A} convergent expansion of the ground
  state of the {H}amiltonian in the mean field limiting regime.
\newblock {\em arXiv:1511.07026}, 2015.

\bibitem{rademacher2019}
S.~Rademacher.
\newblock Central limit theorem for {Bose} gases interacting through singular
  potentials.
\newblock {\em Lett. Math. Phys.}, 110(8):2143--2174, 2020.

\bibitem{rs2}
M.~Reed and B.~Simon.
\newblock {\em Methods of Modern Mathematical Physics Vol. II: Fourier
  Analysis, Self-Adjointness}.
\newblock Academic Press, 1975.

\bibitem{robinson1965}
D.~W. Robinson.
\newblock The ground state of the {Bose} gas.
\newblock {\em Commun. Math. Phys.}, 1(2):159--174, 1965.

\bibitem{rodnianski2009}
I.~Rodnianski and B.~Schlein.
\newblock Quantum fluctuations and rate of convergence towards mean field
  dynamics.
\newblock {\em Commun. Math. Phys.}, 291(1):31--61, 2009.

\bibitem{sakurai}
J.~Sakurai and J.~Napolitano.
\newblock {\em Modern Quantum Mechanics. Second edition}.
\newblock Person New International edition, 2014.

\bibitem{seiringer2011}
R.~Seiringer.
\newblock The excitation spectrum for weakly interacting bosons.
\newblock {\em Commun. Math. Phys.}, 306(2):565--578, 2011.

\bibitem{sohinger15}
V.~Sohinger.
\newblock A rigorous derivation of the defocusing cubic nonlinear
  {S}chr{\"o}dinger equation on {$\mathbb{T}^3$} from the dynamics of many-body
  quantum systems.
\newblock {\em Ann. Inst. H. Poincar\'{e} Anal. Non Lin\'{e}aire},
  32(6):1337--1365, 2015.

\bibitem{solovej_lec}
J.~P. Solovej.
\newblock Many body quantum mechanics.
\newblock
  \url{http://www.mathematik.uni-muenchen.de/~sorensen/Lehre/SoSe2013/MQM2/skript.pdf},
  2007.

\bibitem{spohn1980}
H.~Spohn.
\newblock Kinetic equations from {Hamiltonian} dynamics: {Markovian} limits.
\newblock {\em Rev. Modern Phys.}, 52(3):569--615, 1980.

\bibitem{xu2006}
K.~Xu, Y.~Liu, D.~Miller, J.~Chin, W.~Setiawan, and W.~Ketterle.
\newblock Observation of strong quantum depletion in a gaseous {Bose--Einstein}
  condensate.
\newblock {\em Phys. Rev. Lett.}, 96(18):180405, 2006.

\end{thebibliography}
\end{document}